\algnewcommand{\LineComment}[1]{\Statex \hskip\ALG@thistlm \(\triangleright\) #1}
\newcommand{\RNum}[1]{\uppercase\expandafter{\romannumeral #1\relax}}
\newcommand{\vast}{\bBigg@{3}}
\newcommand{\Vast}{\bBigg@{4}}
\begin{document}

\graphicspath{{figures/}}
\title{DPHMM: Customizable Data Release with Differential Privacy via Hidden Markov Model}

\numberofauthors{6}


\author{%
{Yonghui Xiao{\small$~^{\#1}$}\titlenote{Part of the work was done when the author was at Samsung Research America, CA, USA.}, Yilin Shen{\small$~^{\dagger 2}$}, Jinfei Liu{\small$~^{\#3}$}, Li Xiong{\small$~^{\#4}$}, Hongxia Jin{\small$~^{\dagger 5}$}, Xiaofeng Xu{\small$~^{\#6}$} }%
\vspace{1.6mm}\\
\fontsize{10}{10}\selectfont\itshape
$~^{\#}$MathCS Department, Emory University, Atlanta, GA, USA\\
\fontsize{10}{10}\selectfont\itshape
$~^{\dagger}$Samsung Research America, San Jose, CA, USA\\
\fontsize{9}{9}\selectfont\ttfamily\upshape
\{$^{1}$yonghui.xiao, $^{3}$jliu253, $^{4}$lxiong, $^{6}$xxu37\}@emory.edu\ \ \{$^{2}$yilin.shen, $^{5}$hongxia.jin\}@samsung.com
}

\maketitle
\begin{abstract}
%
Hidden Markov model (HMM) has been well studied and extensively used.
In this paper, we present DPHMM ({Differentially Private Hidden Markov Model}), an HMM embedded with a private data release mechanism, in which the privacy of the data is protected through a graph.
Specifically,
we treat every state in Markov model as a node, and use a graph to represent the privacy policy, in which ``indistinguishability'' between states is denoted by edges between nodes.
Due to the temporal correlations in Markov model,
we show that the graph may be reduced to a subgraph with disconnected nodes, which become unprotected and might be exposed.
To detect such privacy risk, we define sensitivity hull and degree of protection based on the graph to capture the condition of information exposure.
Then to tackle the detected exposure, we study how to build an optimal graph based on the existing graph.
We also implement and evaluate the DPHMM on real-world datasets, showing that privacy and utility can be better tuned with customized policy graph.
\end{abstract}

\newtheorem{theorem}{Theorem}[section]
\newtheorem{lemma}{Lemma}[section]
\newtheorem{definition}{Definition}[section]
\newtheorem{corollary}{Corollary}[section]
\newtheorem{observation}{Observation}[section]
\newtheorem{proposition}{Proposition}[section]
\newtheorem{example}{Example}[section]
\newtheorem{fact}{Fact}[section]
\section{Introduction}
As information is widely shared and frequently exchanged in the big-data era,
data-owners' fear of privacy breach continues to escalate.
For instance, $78\%$ smartphone users among $180$ participants in a survey \cite{Fawaz:2014:CCS} believe that apps accessing their location pose privacy threats.
On the other hand, data collected from individual users can be of great value for both academic research and society, e.g. for purposes like data mining or social studies.
To release such data, private information must be retained.
As a result, private data release has drawn increasing research interest.

Markov model has been extensively used as a standard data model.
For example,
to analyze the web navigation behavior of users, the transitions between web-pages can be described through Markov model \cite{Deshpande:2004:SMM:990301.990304};
to analyze the moving patterns of users, Markov model (e.g., in Figure \ref{Figure-example-02} a user moves among $6$ locations) is also commonly adopted \cite{Quantifying-location-privacy-SP2011,MaskIt-SIGMOD12}.
To preserve the privacy in Markov model, the true state (e.g. the true webpage the user is browsing or the true location of a moving user) must be protected before the data is used or released. 

In this paper, we study the problem of private data release in Markov model. 
First, the true state that changes by Markov transition should be hidden from (not observable to) adversaries. Hence it is an HMM.
Second, different from the traditional HMM where the emission probabilities governing the distribution of the observed variables are given, we embed a private data release mechanism in HMM to determine the emission probabilities for privacy protection.
Given a function of the state, 
our goal is to release the answer of the function with the private data release mechanism at each timestamp.
Figure \ref{Figure-HMM} shows our problem.
\begin{figure}[t]
\centering
\begin{subfigure}{0.5\textwidth}
\includegraphics[width=8.5cm]{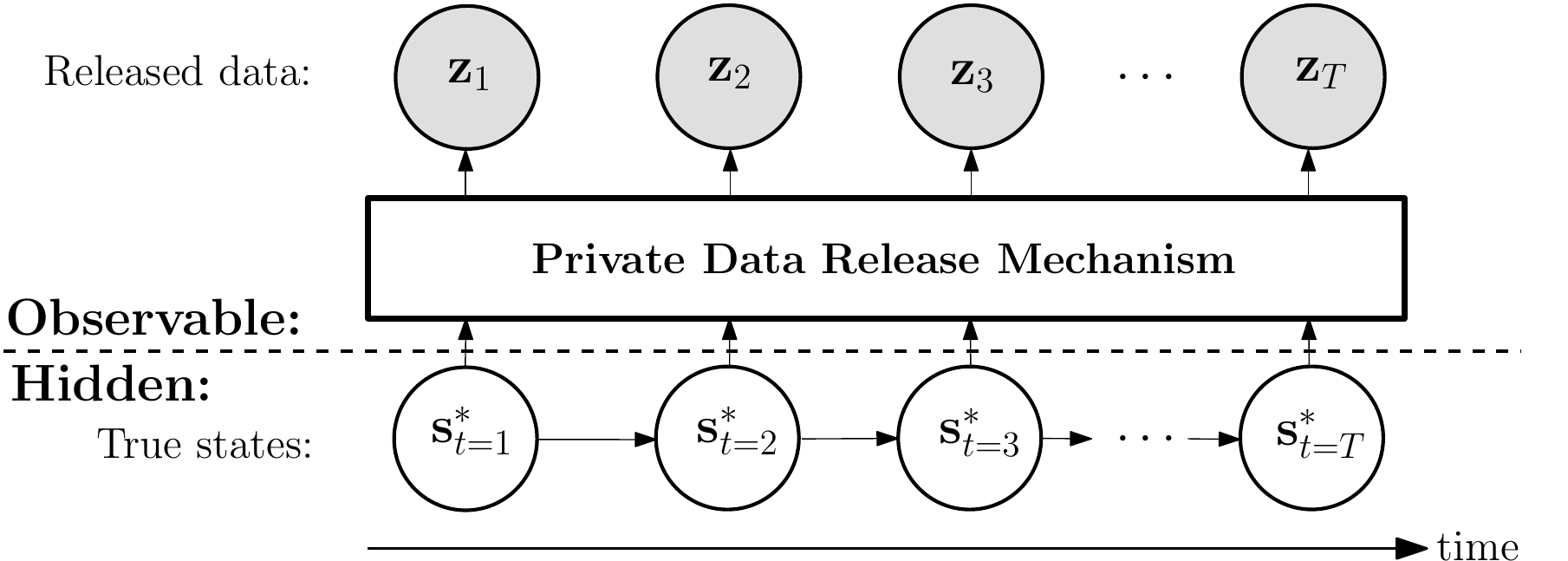}
\end{subfigure}
\caption{{\small Private data release mechanism embedded in HMM}}
\label{Figure-HMM}
\end{figure}

To design the private data release mechanism, 
there are two major difficulties.
\begin{itemize}
\item
How to tune the trade-off between privacy and utility with customizable privacy policy?
Most privacy notions in the literature only work in their specific problem settings, and lack the flexibility of trading-off privacy and utility. The state-of-art Blowfish framework \cite{Blowfish-SIGMOD14}, however, was proposed in the statistical database context, and cannot be directly adopted in Markov model.
\item
How to design a privacy notion under the temporal correlations in Markov model? Because most privacy notions proposed so far only focus on privacy models in static scenarios, they are vulnerable against inference attacks with temporal correlations in Markov model.
\end{itemize}
Next we explain the above difficulties in details. We first briefly introduce Blowfish framework. Then we show how several challenges emerge when adapting Blowfish framework in Markov model.
%



%

\begin{figure}[t]
\centering
\begin{subfigure}{0.22\textwidth}
\centering
\begin{tabular}{|c|c|c|}
\hline
$\textbf{s}_1$ & Alice& cancer \\\hline
$\textbf{s}_2$& Alice& asthma\\\hline
$\textbf{s}_3$ &Bob & cancer \\\hline
$\textbf{s}_4$ & Bob & diabetes\\\hline
$\textbf{s}_5$ &  Chad & cancer\\\hline
$\textbf{s}_6$ &  Chad & diabetes\\\hline
\end{tabular}
\caption{}
\label{Figure-example-TBlowfish}
\end{subfigure}
\begin{subfigure}{0.12\textwidth}
\centering
\includegraphics[width=1.8cm]{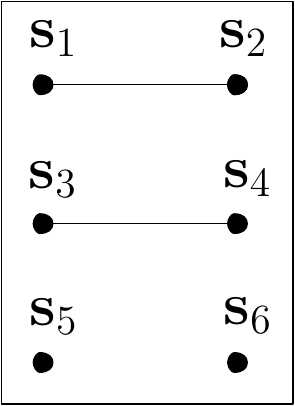}
\caption{}
\label{Figure-example-GBlowfish}
\end{subfigure}
\begin{subfigure}{0.12\textwidth}
\centering
\includegraphics[width=1.8cm]{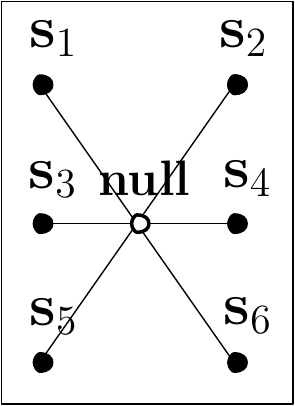}
\caption{}
\label{Figure-example-GBlowfish2}
\end{subfigure}
\caption{{\small (a): a table showing patients' diseases with each row being a secret; (b): a policy graph of bounded Blowfish; (c): a policy graph of unbounded Blowfish.}}
\label{Figure-example-Blowfish}
\end{figure}

%

\vspace{2mm}
\noindent{\bf Blowfish Privacy.}
Customizable privacy framework, Blowfish privacy \cite{Blowfish-SIGMOD14,Kifer-2012-pufferfish}, has been studied in statistical database context.
To tune the privacy and utility, it uses a policy graph  where a node represents a secret, and an edge represents indistinguishability between the two connected nodes.
For example, Figure \ref{Figure-example-TBlowfish} is a patients' table where each row is a secret indicating the patient's disease e.g., secrets $\textbf{s}_3$ and $\textbf{s}_4$ are ``Bob has cancer'' and ``Bob has diabetes'' respectively.
For bounded Blowfish privacy, it uses a policy graph to enforce the indistinguishability between the secrets, which can be regarded as \emph{edge protection} in the graph.
For instance,
 Figure \ref{Figure-example-GBlowfish} ensures adversaries cannot distinguish whether Bob has cancer or diabetes by connecting $\textbf{s}_3$ and $\textbf{s}_4$.
For unbounded Blowfish privacy, it uses a policy graph to disguise \emph{the existence of secrets}. For example, Figure \ref{Figure-example-GBlowfish2} connects all secrets to a ``null'' node, which represents the non-existence of these nodes. Thus the adversaries cannot know whether a secret is real or not. Furthermore, the bounded and unbounded Blowfish privacies can also be combined in one graph by adding the null node into the graph of bounded Blowfish.

Although the privacy customization of Blowfish is intuitive, the overall protection of Blowfish, which can be problematic in the following examples, has not been fully studied. 
\begin{itemize}
\item
Are secrets $\textbf{s}_5$ and $\textbf{s}_6$ also protected in Figure \ref{Figure-example-GBlowfish}?
Since they are disconnected in the graph,
%
 for their protections, is it necessary to connect them to a null node, or connect them to other nodes (and which)? 
\item
If $\{\epsilon,G\}$-Blowfish privacy is preserved where $G$ is the graph in Figure \ref{Figure-example-GBlowfish}, what is the privacy guarantee for all the secrets $\{\textbf{s}_1\sim\textbf{s}_6\}$, 
i.e., how to quantify Blowfish privacy in terms of differential privacy?
\end{itemize}
We will answer above questions in
Example \ref{example-Blowfish1}, followed with theoretical result.

\vspace{2mm}
\noindent{\bf Markov Model.} Markov model has been studied with differential privacy in existing works.
Chatzikokolakis et al. \cite{ChatzikokolakisPS14} and Fan et al. \cite{liyue-2014-www} used Markov model for improving utility of released location traces or web browsing activities, but did not consider the inference risks under the temporal correlations of the Markov model.
Xiao et al. \cite{LocPriv14-arXiv} studied how to protect a user's location, described through Markov model, in a set of possible locations (states) where the user might appear.
%
 However, such set of possible states could be either too big or too small for the user. In reality, the nature of privacy is determined by personal information. Hence it is necessary to  customize privacy protection for personal demands.


\begin{figure}[t!]
\centering
\begin{subfigure}{0.23\textwidth}
\centering
\includegraphics[width=4cm]{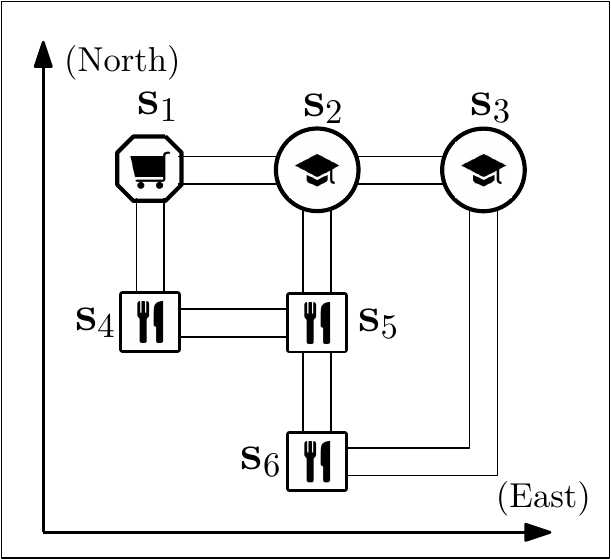}
\caption{}
\label{Figure-example-02}
\end{subfigure}
\begin{subfigure}{0.23\textwidth}
\centering
\includegraphics[width=4cm]{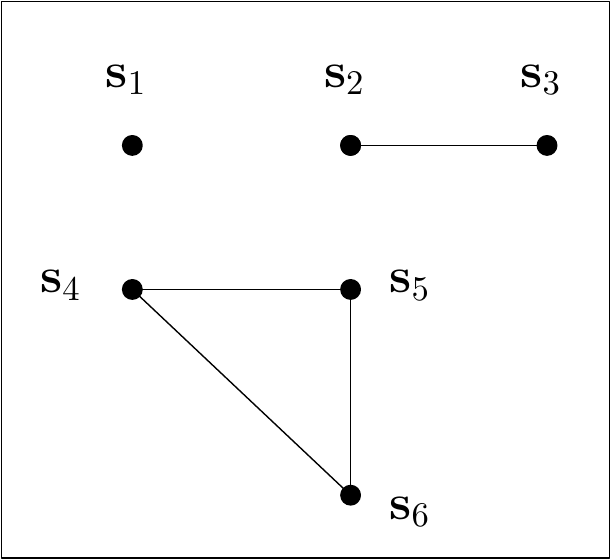}
\caption{}
\label{Figure-example-03}
\end{subfigure}
\\
\begin{subfigure}{0.23\textwidth}
\centering
\includegraphics[width=4cm]{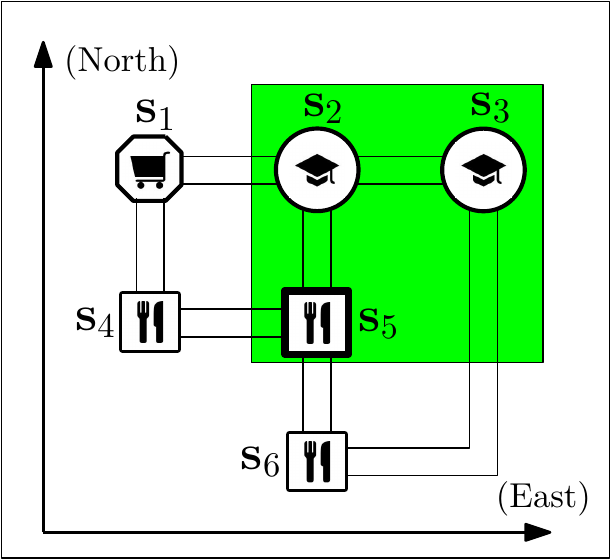}
\caption{}
\label{Figure-example-04}
\end{subfigure}
\begin{subfigure}{0.23\textwidth}
\centering
\includegraphics[width=4cm]{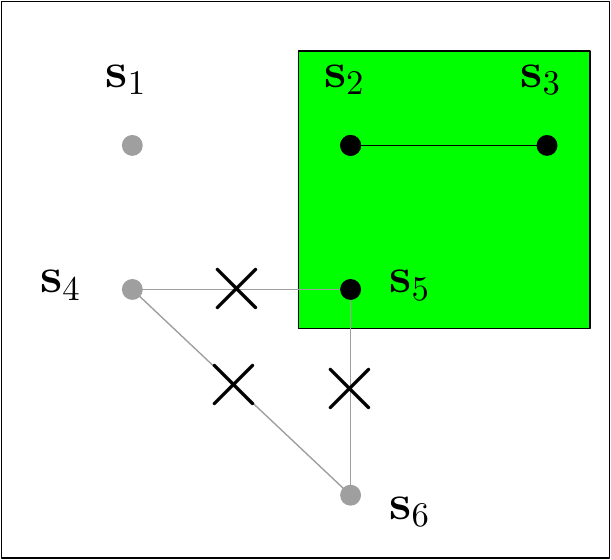}
\caption{}
\label{Figure-example-05}
\end{subfigure}
\caption{{\small Running example. (a): protecting a state in its category (the octagon, circles or squares); (b): the policy graph connecting all nodes in a category; (c): an adversary estimated that the true location can only be   $\{\textbf{s}_2,\textbf{s}_3,\textbf{s}_5\}$; (d): the reduced graph from (b) with the constraint in (c).}}
\label{Figure-example-customization}
\end{figure}

We can potentially apply Blowfish privacy in Markov model.
For example, Figure \ref{Figure-example-02} shows the Markov model of a moving user with $6$ states, denoted by $\{\textbf{s}_1,\cdots,\textbf{s}_6\}$.
If the user prefers to hide her state in $3$ categories, i.e., cafeteria, school and grocery (the octagon, circle and square in Figure \ref{Figure-example-02}), the privacy customization can be achieved by the graph in Figure \ref{Figure-example-03}.
Then if the user is at state $\textbf{s}_5$, the graph ensures that $\{\textbf{s}_4,\textbf{s}_5,\textbf{s}_6\}$ are indistinguishable.
%

Unfortunately, the policy graph may be reduced under the temporal correlations in Markov model.
For instance, assume the user moved from $\textbf{s}_1$ to $\textbf{s}_5$.
If an adversary infers by temporal correlations that the true state can only be $\{\textbf{s}_2,\textbf{s}_3,\textbf{s}_5\}$, the shaded area in Figure \ref{Figure-example-04}, is the graph in Figure \ref{Figure-example-03} still applicable?
In this case, although $\textbf{s}_5$ is connected to $\textbf{s}_4$ and $\textbf{s}_6$, the adversary can eliminate $\textbf{s}_4$ and $\textbf{s}_6$ with the knowledge (constraint). In consequence, the original edges $\overline{\textbf{s}_4\textbf{s}_5}$ and $\overline{\textbf{s}_5\textbf{s}_6}$ disappear, as shown in Figure \ref{Figure-example-05}. Then the following questions arise: is $\textbf{s}_5$ still protected? If not, how to re-generate a new graph to protect $\textbf{s}_5$ based on the current graph?
We will answer these questions in Examples \ref{example-sh}, \ref{example-exposure} and \ref{example-mpg}.

%
%

Another challenge of directly applying Blowfish privacy is the constraint type. In Blowfish framework, the constraints are deterministic, which leads to the NP-hard complexity \cite{Blowfish-SIGMOD14}. Whereas the constraints in Markov model are probabilistic. For example, in Blowfish framework, Bob can have cancer and diabetes at the same time, or no disease at all. However, in Markov model, there has to and can only exist ONE state,
which means the existence of one state excludes all other states.
Such rigid constraints pose higher privacy risk than in Blowfish framework.

At last, the long-term privacy protection after releasing a sequence of data should also be considered. For instance, assume the user moved from $\textbf{s}_1$ to $\textbf{s}_5$, and the real sequence is $\{\textbf{s}_1,\textbf{s}_2,\textbf{s}_5\}$. If other possible sequences can also be estimated by temporal correlations, like $\{\textbf{s}_1,\textbf{s}_4,\textbf{s}_5\}$, $\{\textbf{s}_1,\textbf{s}_2,\textbf{s}_3\}$, then what is the long-term protection for the these sequences?

\subsection{Contributions}
First, we propose a rigorous and customizable DPHMM notion by extending the Blowfish privacy \cite{Blowfish-SIGMOD14}. Specifically, we treat every state in Markov model as a node, and construct a graph, in which edges represent  ``indistinguishability'' between the connecting nodes, to represent the privacy policy.
In this way,
the DPHMM notion guarantees that the true state is always protected in its connecting ``neighbors''.

Second, we formally analyze the privacy risk under the constraint of temporal correlations.
We show  that the original graph may be reduced to a subgraph under the constraint, possibly with disconnected nodes.
To detect the information leakage of the disconnected nodes,
we define sensitivity hull and degree of protection (\textsc{DoP}) based on the graph to capture the protectability of a graph
(if a graph is not protectable, then the disconnected nodes will be exposed).
We also quantify the overall protection of Blowfish privacy in terms of differential privacy using the sensitivity hull.
In addition, we prove that Laplace mechanism \cite{Dwork-calibrating} is a special case of $K$-norm mechanism \cite{Geometry-Hardt-STOC10}, and provides no better utility than $K$-norm mechanism.

Third, we develop a data release mechanism to achieve DPHMM.
To tackle the detected information leakage, we study how to re-connect the disconnected nodes and find the optimal protectable graph based on the existing graph.
We also implement and evaluate the data release mechanism on real-world datasets, showing that privacy and utility can be better tuned with customized policy graph.

Fourth, we thoroughly study the privacy guarantee of DPHMM framework. Besides comparing DPHMM with other privacy notions,
we present the privacy composition results when multiple queries were answered over multiple timestamps.
%

\section{Related Works}
\subsection{Differential Privacy}
While differential privacy \cite{Dwork06differentialprivacy}  has been accepted as a standard notion for privacy protection,
most works used Laplace mechanism \cite{Dwork-calibrating} to release differentially private data.
Based on Laplace mechanism, Li et al. proposed Matrix mechanism \cite{Optimizing-PODS} to answer a batch of queries by factorizing a query matrix to generate a better ``strategy'' matrix that can replace the original query matrix.
Other mechanisms, such as Exponential mechanism \cite{McSherry-mechanism} and $K$-Norm mechanism \cite{Geometry-Hardt-STOC10}, were also proposed to guarantee differential privacy.
We refer readers to \cite{DPBench-SIGMOD16} for a  comparative study of the mechanisms. 
A variety of differentially private applications \cite{ChatzikokolakisPS14,liyue-2014-www,w-event-VLDB14} can also be found in literature.
%

Because the concept of standard differential privacy is not generally applicable,
several variants or generalizations of differential privacy, such as induced neighbors privacy \cite{kifer2011no},
have been proposed. Among these variants, Blowfish privacy \cite{Blowfish-SIGMOD14} is the first generic framework with customizable privacy policy. It defines sensitive information as secrets and known knowledge about the data as constraints. By constructing a policy graph, which should also be consistent with all constraints, Blowfish privacy can be formally defined. We extend Blowfish framework to Markov model, and quantify the overall protection of Blowfish privacy in both database context and Markov model.

%

The lower bound of differentially private query-answering was also investigated. Hardt and Talwar \cite{Geometry-Hardt-STOC10} proposed the theoretical lower bound for any differentially private mechanisms. To achieve the lower bound, they also studied $K$-Norm based algorithms to release differentially private data. In the query answering setting, $K$-Norm mechanism is optimal only when the sensitivity hull \cite{LocPriv14-arXiv} is in isotropic position.
In this paper, we extend the $K$-Norm mechanism by investigating the sensitivity hull $K$ in the new setting of DPHMM.

\subsection{Private Sequential Data}
To account for sequential data that changes over time, progresses were made under the assumption that data at different timestamps should be independent. Dwork et al. \cite{Dwork-continual-STOC10} proposed ``user-level'' and ``event-level'' differential privacy to answer count queries on binary bit data.  The approach is to use a binary tree technique to amortize Laplace noises to a range of nodes in the tree. Thus the noise magnitude becomes proportional to $log(T)$ where $T$ is the time period. The same result was also achieved in \cite{Chan-continual-2011}.
Kellaris et al. \cite{w-event-VLDB14} studied $w$-event privacy, which protects the continual events in $w$ consecutive timestamps by adjusting the allocation of privacy budget.
Overall, above works mainly focused on releasing data independently at each timestamp regardless of temporal correlations.


Temporal correlations were considered with Markov model in several recent works.
Several works considered Markov models for improving utility of released location traces or web browsing activities \cite{ChatzikokolakisPS14,liyue-2014-www}, but did not consider the inference risks when an adversary has the knowledge of the Markov model.
Xiao et al. \cite{LocPriv14-arXiv} studied how to protect the true location if a user's movement follows Markov model. The technique can be viewed as a special instantiation of DPHMM for a two-dimensional query (see Theorem \ref{theo-comparison-K-CCS} for details).
 In addition, DPHMM uses a policy graph to tune the privacy and utility in Markov model.

\section{Preliminaries and Problem Statement}
We denote scalar variables by normal letters, vectors by bold lowercase letters, and matrices by bold capital letters.
 Superscript $\textbf{x}^T$ is the transpose of a vector $\textbf{x}$;
 $\textbf{x}[i]$ is the $i$th element of $\textbf{x}$.
 Operators $\cup$ and $\cap$ denote union and intersection of sets; $|\cdot|$ denotes the number of elements in a set;
 $||\cdot||_p$ denotes $\ell_p$ norm; $\overline{\textbf{ab}}$ denotes a line connecting points $\textbf{a}$ and $\textbf{b}$.
Table \ref{tbl-symbols} summarizes some important symbols for convenience.
\begin{table}
\centering
\begin{tabular}{|c|c|}
\hline
$\mathcal{S}$ & domain of states in Markov model \\\hline
$\textbf{s}_i,\textbf{s}_j,\textbf{s}_k$& a state in Markov model\\\hline
$\textbf{s}^*$ & the true state \\\hline
$\textbf{z}$&the released (observed) answer\\\hline
$\textbf{p}_t^-$& prior probability (vector) at timestamp $t$\\\hline
$\textbf{p}_t^+$& posterior probability (vector) at timestamp $t$\\\hline
$\mathcal{C}$ & constraint (set)\\\hline
$K$ & sensitivity hull\\\hline
\end{tabular}
\caption{{\small Notation}}
\label{tbl-symbols}
\end{table}
\subsection{Differential Privacy}
Differential privacy protects a database by ensuring that neighboring databases generate similar output. W.l.o.g, we use $\textbf{x}\in\mathbb{R}^n$ to denote a database with $n$ tuples.
A query is a function $f(\textbf{x})$: $\textbf{x}\rightarrow \mathbb{R}^d$ that maps $\textbf{x}$ to $\mathbb{R}^d$.
We use $\textbf{z}$ to denote the answer of a query from a differentially private mechanism.
\begin{definition}[Differential Privacy]
\label{def-stadard-dp}
A randomized mechanism $\mathcal{A}()$ satisfies $\epsilon$-differential privacy if for any output $\textbf{z}$, $
\frac{Pr(\mathcal{A}(\textbf{x}_1)=\textbf{z})}{Pr(\mathcal{A}(\textbf{x}_2)=\textbf{z})}\leq e^{\epsilon}
$ where neighboring databases $\textbf{x}_1$ and $\textbf{x}_2$ satisfies
\begin{itemize}
\item
(Unbounded DP) $\textbf{x}_2$ can be obtained from $\textbf{x}_1$ by adding or removing a tuple.
\vspace{-2mm}
\item 
(Bounded DP) $\textbf{x}_2$ can be obtained from $\textbf{x}_1$ {by replacing a tuple}.
\end{itemize}
\end{definition}

\noindent{\bf Laplace Mechanism.}
Laplace mechanism is commonly used in literature. It is built on the $\ell_1$-norm sensitivity \cite{Dwork-calibrating}, defined as follows.
\begin{definition}[$\ell_1$-norm Sensitivity]
\label{def-standard-sensitivity}
For any query $f(\textbf{x})$: $\textbf{x}\rightarrow \mathbb{R}^d$, its $\ell_1$-norm sensitivity $S_f$ is the maximum $\ell_1$ norm of $f(\textbf{x}_1)-f(\textbf{x}_2)$ where $\textbf{x}_1$ and $\textbf{x}_2$ are any two neighboring databases.
\begin{align*}
S_f\coloneqq \mathop{max}\limits_{\textbf{x}_1,\textbf{x}_2\in \textrm{ neighboring databases}}||f(\textbf{x}_1)-f(\textbf{x}_2)||_1
\end{align*}
where $||\cdot||_1$ denotes the $\ell_1$ norm.
\end{definition}

A query can be answered by $f(\textbf{x})+Lap(S_f /\epsilon)$ to achieve $\epsilon$-differential privacy, where $Lap()\in\mathbb{R}^d$ are i.i.d. random noises drawn from Laplace distribution.

\vspace{2mm}
\noindent{\bf $K$-norm Mechanism.} $K$-norm, written as $||\cdot||_K$, is the (Minkowski) norm defined by convex body $K$ (i.e. $||\textbf{v}||_K=inf\{r>0:\textbf{v}\in rK\}$). Given any query $f$, its sensitivity hull $K$ can be derived \cite{LocPriv14-arXiv}. Then a differentially private answer of $f$ can be generated with $K$-norm mechanism as follows.

\begin{definition}[K-norm Mechanism \cite{Geometry-Hardt-STOC10}]
Given any function $f$ and its sensitivity hull $K$, a mechanism is $K$-norm mechanism if for any output $\textbf{z}$, the following holds:
\begin{align}
\label{eqn-pdf-K-Norm}
Pr(\textbf{z})=\frac{1}{\Gamma(d+1)\textsc{Vol}(K/\epsilon)}exp \left( -\epsilon||\textbf{z}-f(\textbf{x}^*)||_K \right)
\end{align}
where $f(\textbf{x}^*)$ is the true answer, $\Gamma()$ is Gamma function and $\textsc{Vol}()$ denotes volume.
\end{definition}

In this paper, we only focus on Laplace mechanism and $K$-norm mechanism for simplicity. Whereas our framework is applicable to any differentially private perturbation mechanisms.

\subsection{Blowfish Privacy}
Unlike differential privacy which protects all neighboring databases together, Blowfish privacy only protects the connected secrets in its policy graph. 
Below we only show the definition of Blowfish neighbors. Then Blowfish privacy can be obtained by replacing the neighboring databases $\textbf{x}_1$ and $\textbf{x}_2$ in Definition \ref{def-stadard-dp} with the following $D_1$ and $D_2$.
\begin{definition}[Blowfish Neighbors \cite{Policy-Blowfish-VLDB15}]
\label{def-Blowfish-neighbors}
Given a graph $G$ and a set of constraints $\mathcal{C}$, two databases $D_1$ and $D_2$ are neighbors if they satisfy the constraint $\mathcal{C}$, and
\begin{itemize}
\item
(Unbounded Blowfish) $D_2$ can be obtained by adding  a tuple to or removing a tuple from $D_1$ if the tuple (secret) is connected to a ``null'' node in $G$.
\vspace{-1mm}
\item
(Bounded Blowfish) $D_1$ and $D_2$ only differ one tuple, whose values in $D_1$ and $D_2$ are connected in $G$.
\end{itemize}
\end{definition}
We can see that unbounded Blowfish protects the existence of secrets, and bounded Blowfish protects the edges (connected nodes) in the graph.

\subsection{Hidden Markov Model}
We denote the domain of states by $\mathcal{S}$,
$\mathcal{S}=\{\textbf{s}_1,\textbf{s}_2,\cdots, \textbf{s}_N\}$ where each $\textbf{s}_i$ is a unit vector with the $i$th element being $1$ and other $N-1$ elements being $0$.
We denote $\textbf{s}^*$ the true state at each timestamp.
For privacy protection, $\textbf{s}^*$ is unobservable to (hidden from) any adversaries.
Thus it is an HMM. 
At timestamp $t$, we use a vector $\textbf{p}_t\in [0,1]^{1\times N}$ to denote the probability distribution of true state.
Formally, \begin{align*}
\textbf{p}_t[i]=Pr(\textbf{s}^*_t=\textbf{s}_i)
\end{align*}
where $\textbf{p}_t[i]$ is the $i$th element in $\textbf{p}_t$ and $\textbf{s}_i\in\mathcal{S}$.
\begin{figure}
\centering
\includegraphics[width=8.4cm]{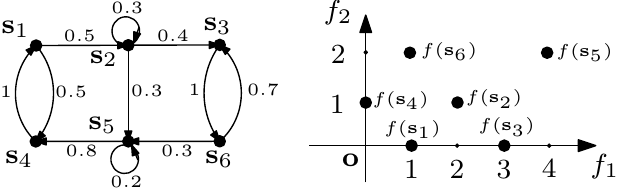}
\caption{{\small (left) a Markov model with transition probabilities; (right) its measurement query in Example \ref{example-f}. }}
\label{Figure-MC-example}
\end{figure}

\begin{example}[Running Example]
\label{example-tm}
The example in Figure \ref{Figure-example-02} is described by a random-walk Markov model in Figure \ref{Figure-MC-example} (left) where each state denotes a location on the map, if the true state at timestamp $t$ is $\textbf{s}_1$, then
$
\textbf{s}_t^*=\textbf{s}_1=
$[1 0 0 0 0 0],
%
%
$
\textbf{p}_t=
$[1 0 0 0 0 0].
\end{example}

\noindent{\bf Transition Probabilities.}
We use matrix $\textbf{M}\in [0,1]^{N\times N}$ to denote the transition probabilities with $m_{ij}$ being the probability of moving from state $i$ to state $j$. Given probability vector $\textbf{p}_{t-1}$, the probability at timestamp $t$ becomes
$\textbf{p}_{t}=\textbf{p}_{t-1}\textbf{M}$.
%
%

We will focus on first-order time-homogeneous Markov model in this paper with the understanding that our method can also be extended to high-order or time-heterogeneous  Markov model.

\vspace{2mm}\noindent{\bf Measurement Query.}
At each timestamp, a measurement query $f: \mathcal{S}\rightarrow \mathbb{R}^d$ about current state is evaluated.
We denote the space containing all possible outputs of $f$ by measurement space.

\begin{example}[Measurement Query]
\label{example-f}
Let $f:\mathcal{S}\rightarrow\mathbb{R}^2$ be two quantities about the true state in Figure \ref{Figure-MC-example}:
\begin{flalign*}
\hspace{2cm}
&f_1:\ \textrm{temperatue of current state}&\\
&f_2:\ \textrm{noise level of current state}&
\end{flalign*}
Then $f$ can be expressed as
$
f(\textbf{s})=
\left[
\begin{array}{cccccc}
1&2&3&0&4&1\\
0&1&0&1&2&2\\
\end{array}
\right]
\textbf{s}^T
$ 
where each column corresponds the answer of a state, e.g. $f(\textbf{s}_1)=[1,0]^T, f(\textbf{s}_2)=[2,1]^T$.
Above answer can be denoted in measurement space, as in Figure \ref{Figure-MC-example} (right).
\end{example}

%
%
\noindent{\bf Emission Probabilities.}
Emission probabilities $Pr(\textbf{z}_{t}|\textbf{s}^*_{t})$ denote the distribution of the observed variable $\textbf{z}_t$. In DPHMM, we design a private data release mechanism to answer the query $f$ with particular emission probabilities, which is the only difference between DPHMM and standard HMM.
%

\vspace{2mm}
\noindent{\bf Inference and Evolution.}
At timestamp $t$, we use $\textbf{p}_t^-$ and $\textbf{p}_t^+$  to denote the prior and posterior probabilities of an adversary about current state before and after observing $\textbf{z}_t$ respectively.
The prior probability can be derived by the (posterior) probability at  previous timestamp $t-1$
 and the Markov transition matrix as
$
\textbf{p}_{t}^-=\textbf{p}_{t-1}^+\textbf{M}
$.
The posterior probability can be computed using Bayesian inference as follows. For each state $\textbf{s}_i$:
\begin{align}
\textbf{p}_{t}^+[i]=Pr(\textbf{s}_{t}^*=\textbf{s}_i|\textbf{z}_{t})
=\frac{Pr(\textbf{z}_{t}|\textbf{s}_{t}^*=\textbf{s}_i)\textbf{p}_{t}^-[i]}{\mathop\sum\limits_{j}Pr(\textbf{z}_{t}|\textbf{s}_{t}^*=\textbf{s}_j)\textbf{p}_{t}^-[j]}
\label{eqn-posterior}
\end{align}

The inference of the true state at any timestamp can be efficiently computed by the forward-backward algorithm, which is also incorporated in our data release mechanism. 
Other standard HMM algorithms can also be directly used in DPHMM.

\subsection{Problem Statement}
Given an initial state (or probability) and a Markov model, our problem is to answer a measurement query $f:\mathcal{S}\rightarrow\mathbb{R}^d$ at each timestamp under the HMM assumptions.
First, the Markov model can be known to any adversaries. Second, all the previously released answers (observable) can be accessed by adversaries to make inference about the true state. Third, the data release mechanism is transparent to adversaries.
The released answer $\textbf{z}_t$ should have the following properties:

(1) it guarantees a privacy notion to protect the true state;

(2) it minimizes the error, measured by the $\ell_2$ distance between the released answer and the true answer $f(\textbf{s}^*)$:
\begin{align}
\label{equation-util}
\textsc{Error}=\sqrt{\mathbb{E}||\textbf{z}_t-f(\textbf{s}_t^*)||_2^2}
\end{align}

(3) the privacy-utility trade-off can be customized for various privacy requirements.
%

\vspace{2mm}
\noindent{\bf Learning the Markov Model.}
A Markov model can be learned from publicly available data or perturbed personal data using standard methods, such as EM algorithm. Even if an adversary can obtain such a model, we still need to protect the true state.
In the DPHMM, we assume the Markov model has been learned, and is also known to any adversaries.

\vspace{2mm}
\noindent{\bf Incomplete Model.}
Depending on the power of adversaries, an incomplete (inaccurate) Markov model can be used by adversaries. In this case, the privacy is still guaranteed while the inference result may be downgraded for the adversary (Appendix \ref{sec-AK}).


\section{Privacy Definition}
To derive the meaning of DPHMM, we extend Blowfish privacy from \cite{Blowfish-SIGMOD14,Policy-Blowfish-VLDB15}. Related privacy notions are also discussed in this section.
%
%
%

\subsection{Probabilistic Constraint}
\label{sec-Privacy-constraint}


A main difference between Blowfish framework and our framework is the constraint type. In Blowfish framework, the constraints are deterministic, which leads to the NP-hard complexity \cite{Blowfish-SIGMOD14}. While in Markov model, the constraints are probabilistic.
It means the probabilities of states can be known to adversaries.
At any timestamp $t$, the prior probability $\textbf{p}_t^-$ can be derived as
\begin{align*}
\textbf{p}_t^-[i]=Pr(\textbf{s}_t^*=\textbf{s}_i|\textbf{z}_{t-1},\cdots,\textbf{z}_1)
\end{align*}
Clearly, with $\textbf{p}_t^-$ the states can be divided into two sets: $\textbf{p}_t^-=0$ and $\textbf{p}_t^->0$. Such probabilistic constraints result in the following consequences. 
\begin{itemize}
\item
For the non-existing states ($\textbf{p}_t^-[i]=0$),
the unbounded Blowfish privacy is meaningless.  For example, if an adversary knows ``Bob does not have cancer'', it is not necessary to pretend ``Bob might have cancer'' any more.
\item
For the possible states ($\textbf{p}_t^-[i]>0$), unbounded Blowfish becomes bounded Blowfish privacy automatically, explained as follows.  In the definition of unbounded Blowfish, the neighbors mean $\textbf{s}_i$ exists or not. When $\textbf{s}_i$ does not exist, there has to exist another state \footnote{It means the ``null'' node is invalid in Markov modoel.}. Hence it becomes bounded Blowfish neighbors. This is different from traditional Blowfish, in which a database without any secret is still valid.
\item
Bounded Blowfish privacy only holds for the possible states because all edges connecting the non-existing states disappear in the policy graph.
\end{itemize}


Without ambiguity, we define the constraint of Markov model as the set of states with $\textbf{p}_t^->0$.
\begin{definition}[Constraint]
\label{def-MM-constraint}
Let $\textbf{p}^-_t$ be the prior probability at timestamp $t$. Constraint $\mathcal{C}_t$ consists of all states satisfying the constraint $\textbf{p}^-_t[i]>0$.
\begin{equation*}
\mathcal{C}_t\coloneqq \{\textbf{s}_i| \textbf{p}^-_t[i]>0, \forall \textbf{s}_i\in \mathcal{S} \}
\end{equation*}
\end{definition}

In conclusion, we focus on the bounded Blowfish, which means a state is mixed with other states in the graph,
under the constraint $\mathcal{C}_t$. 



\subsection{Policy Graph}
\label{sec-policy-graph}
\noindent{\bf Policy Graph without Constraint.}
We first study the problem in the whole domain $\mathcal{S}$ without any constraint.
Given the true state $\textbf{s}^*$ at a timestamp, a user may prefer to hide $\textbf{s}^*$ in a group of candidate states, denoted by $\mathcal{N}(\textbf{s}^*)$ as neighbors of $\textbf{s}^*$ where $\mathcal{N}(\textbf{s}^*) \subseteq \mathcal{S}$.
Intuitively, the more neighbors a state has, the more privately it is protected.
For simplicity, we assume $\textbf{s}_i\in\mathcal{N}(\textbf{s}_i)$ for all states $\textbf{s}_i$ because it is straightforward that $\textbf{s}_i$ is hidden in its neighbor set $\mathcal{N}(\textbf{s}_i)$.

%
We can represent the privacy policy by a undirected graph where a node represents a state and an edge connects an indistinguishable pair of states.

\begin{definition}[Policy]
A policy is an undirected graph $G=(\mathcal{S},\mathcal{E})$ where
$\mathcal{S}$ denotes all states (nodes) and $\mathcal{E}$ represents indistinguishability (edges) between states.
\end{definition}

\begin{definition}[Neighbors]
Let $\textbf{s}$ be a state in $\mathcal{S}$. The neighbors of $\textbf{s}$, denoted by $\mathcal{N}(\textbf{s})$, is the set of nodes connected with $\textbf{s}$ by an edge, including $\textbf{s}$ itself.
\begin{align*}
\mathcal{N}(\textbf{s})\coloneqq\{\textbf{s}\}\cup\{\textbf{s}'|\overline{\textbf{s}\textbf{s}'} \in \mathcal{E}, \textbf{s}'\in \mathcal{S}\}
\end{align*}
\end{definition}


To better adjust utility and privacy for any particular applications, how to design policy graph is not a trivial task.
Below we present a few examples of policy graphs, some of which are from database context \cite{Blowfish-SIGMOD14}.  In DPHMM, we assume a policy graph is given.
%
%
%
\begin{itemize}[leftmargin=*]
\item
Complete protection. To thoroughly protect a sensitive state, we can connect it with all other states.
In this way all states are connected and it forms a complete graph, as shown in Figure \ref{Figure-G-complete}. However, with higher privacy level comes less utility. Such policy may result in useless output.
\begin{align*}
G_{cplt}\coloneqq\{G| \overline{\textbf{s}\textbf{s}'}\in \mathcal{E},\ \forall \textbf{s},\textbf{s}'\in \mathcal{S}\}
\end{align*}
\item
Categorical protection. A common method to balance privacy and utility is to partition (or cluster) states into categories. Then every state only needs to be protected in its category. If in a category all states are connected, then the graph becomes disjoint cliques.
Figure \ref{Figure-G-partition} shows such an example.
\begin{align*}
G_{categ}\coloneqq\{G|G=G_1+G_2+\cdots+G_q, \forall i,j,G_i\cap G_j=\varnothing\}
\end{align*}
\item
Utility oriented policy. To improve utility, we may consider the policy in the measurement space of query $f$. Figure \ref{Figure-G-local} shows an example where nodes are only connected if their answers are within $r$ distance ($\ell_2$ distance in this example).
\begin{align*}
G_{util}\coloneqq\{G|\overline{\textbf{s}\textbf{s}'}\in \mathcal{E}\ \textrm{iff}\ dist(f(\textbf{s}),f(\textbf{s}'))\leq r,\ \forall \textbf{s},\textbf{s}'\in \mathcal{S}  \}
\end{align*}
where $dist()$ is a distance function in measurement space.
\item
One-step transition protection.
To protect a one-step transition $\textbf{s}_i\to\textbf{s}_j$, we can
require all pairs of states $\textbf{s}_j$ and $\textbf{s}_k$ to be indistinguishable if they can be transited from the same previous state $\textbf{s}_i$.
For example, if transition probabilities are given in Example \ref{example-tm}, then $G_{trs}$ can be derived as Figure \ref{Figure-G-transit} by the following equation.
\begin{align*}
G_{trs}\coloneqq\{G|\overline{\textbf{s}_j\textbf{s}_k}\in\mathcal{E} \textrm{ iff } m_{ij}>0\textrm{ and }m_{ik}>0, \forall i,j,k  \}
\end{align*}
Note that with $G_{trs}$ even if $\textbf{s}_t^*$ were exposed, $\textbf{s}_{t+1}^*$ would still be protected. Hence $G_{trs}$ provides strong privacy guarantee.
\end{itemize}

\begin{figure}[t]
\centering
\begin{subfigure}{0.11\textwidth}
\centering
\includegraphics[width=2cm]{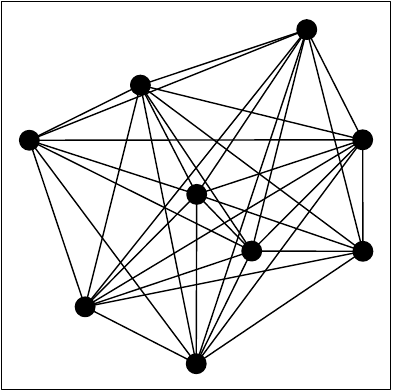}
\caption{{\small $G_{cplt}$}}
\label{Figure-G-complete}
\end{subfigure}
\begin{subfigure}{0.11\textwidth}
\centering
\includegraphics[width=2cm]{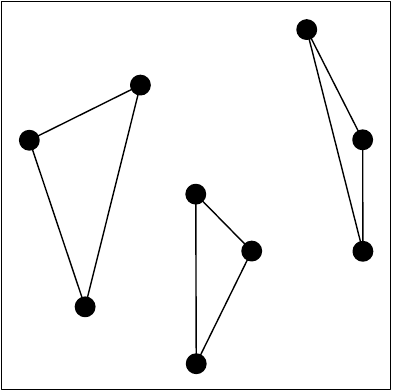}
\caption{{\small$G_{categ}$}}
\label{Figure-G-partition}
\end{subfigure}
\begin{subfigure}{0.11\textwidth}
\centering
\includegraphics[width=2cm]{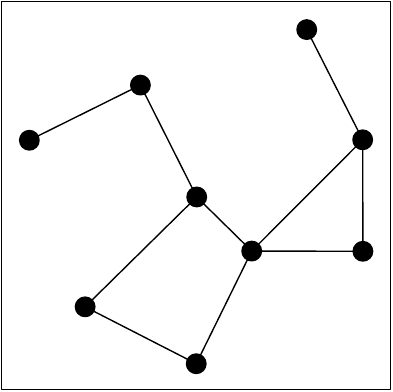}
\caption{{\small$G_{util}$}}
\label{Figure-G-local}
\end{subfigure}
\begin{subfigure}{0.11\textwidth}
\centering
\vspace{-0.01cm}
\includegraphics[width=2.02cm]{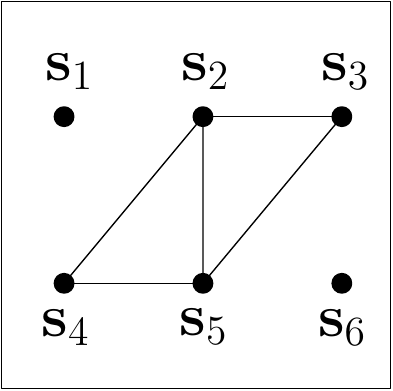}
\caption{{\small$G_{trs}$}}
\label{Figure-G-transit}
\end{subfigure}
\caption{{\small Examples of policy graphs without constraint. (a): complete protection; (b): categorical protection; (c): utility oriented policy; (d): transition protection for Example \ref{example-tm}.}
}
\label{Figure-Graph-Policy}
\end{figure}
%
%

\noindent{\bf  Policy Graph with Constraint.}
With the constraint $\mathcal{C}_t$ (Definition \ref{def-MM-constraint}), the policy graph $G$ has to be built on $\mathcal{C}_t$ at each timestamp $t$. Then the policy graph becomes a subgraph with the nodes in $\mathcal{C}_t$ and the residual edges in $G$, denoted by constrained policy graph $G\cap\mathcal{C}_t$. It is intuitive that with different $\mathcal{C}_t$ graphs may be different over time.
\begin{figure}[]
\centering
\begin{subfigure}{0.11\textwidth}
\centering
\includegraphics[width=2cm]{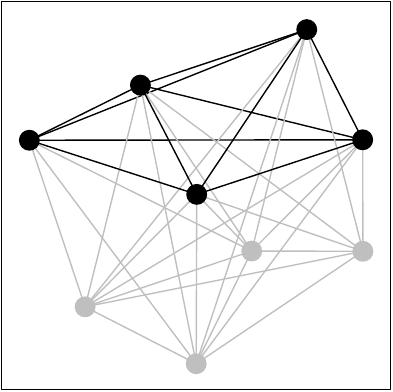}
\caption{{\small$G_{cplt}\cap\mathcal{C}_1$}}
\label{Figure-G-complete1}
\end{subfigure}
\begin{subfigure}{0.11\textwidth}
\centering
\includegraphics[width=2cm]{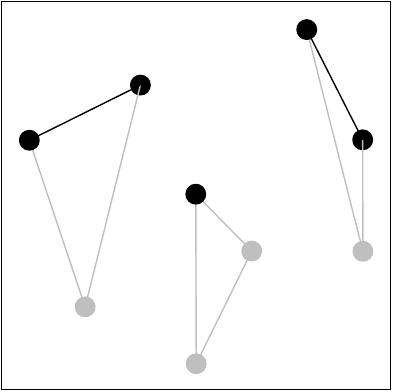}
\caption{{\small$G_{categ}\cap\mathcal{C}_1$}}
\label{Figure-G-partition1}
\end{subfigure}
\begin{subfigure}{0.11\textwidth}
\centering
\includegraphics[width=2cm]{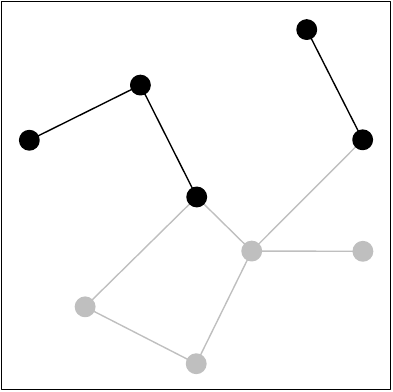}
\caption{{\small$G_{util}\cap\mathcal{C}_1$}}
\label{Figure-G-local1}
\end{subfigure}
\begin{subfigure}{0.11\textwidth}
\centering
\vspace{-0.01cm}
\includegraphics[width=2.02cm]{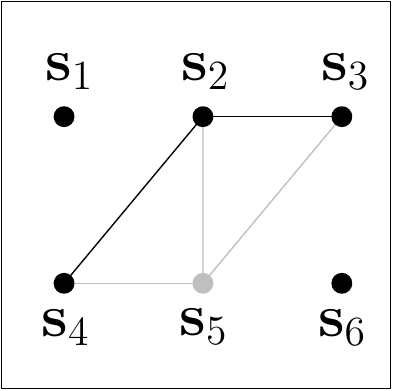}
\caption{{\small$G_{trs}\cap\mathcal{C}_2$}}
\label{Figure-G-transit1}
\end{subfigure}
\caption{{\small Policy graphs of Figure \ref{Figure-Graph-Policy} with constraints $\mathcal{C}_1$ and $\mathcal{C}_2$, denoted by the black points in the graphs.}
}
\label{Figure-Graph-Policy1}
\end{figure}
\begin{example}[Constrained Policy Graph]
Figure \ref{Figure-Graph-Policy1} shows the policy graphs of Figure \ref{Figure-Graph-Policy} with the constraint sets. The black points indicate the constraint sets. The gray points and their edges are removed from the original graph.
\end{example}

%
%
%

\subsection{DPHMM}
With policy graph $G$ and any constraint $\mathcal{C}_t$, $\{\epsilon,G,\mathcal{C}_t\}$-DPHMM can be defined as follows with the intuition that at any timestamp the true state cannot be distinguished from its remaining ``neighbors'' under the constraint.
\begin{definition}[$\{\epsilon,G,\mathcal{C}_t\}$-DPHMM]
\label{def-DPMC}
Let $G$ be the policy graph, $\mathcal{C}_t$ be the constraint at timestamp $t$.
An $\{\epsilon,G,\mathcal{C}_t\}$-DPHMM algorithm $\mathcal{A}()$ generates an output $\textbf{z}_t$ such that
for any $\textbf{z}_t$ and any state $\textbf{s}_j\in\mathcal{C}_t$, the following condition is satisfied:
\begin{align*}
\bigg\{
\begin{array}{ll}
\forall\textbf{s}_k\in\mathcal{N}(\textbf{s}_j)\cap\mathcal{C}_t,&\textrm{if $\textbf{s}_j$ is connected to $\textbf{s}_k$ in $G\cap\mathcal{C}_t$;}
\\
\exists \textbf{s}_k\in\mathcal{C}_t,&\textrm{if $\textbf{s}_j$ is disconnected in $G\cap\mathcal{C}_t$};
\end{array}
\end{align*}
\begin{align}
\label{eqn-DPMM}
e^{-\epsilon}\leq \frac{Pr(\mathcal{A}(\textbf{s}_j)=\textbf{z}_t)}{Pr(\mathcal{A}(\textbf{s}_k)=\textbf{z}_t)}\leq e^{\epsilon}
\end{align}
\end{definition}
In above definition, if $\textbf{s}_j$ is connected with any $\textbf{s}_k$ in $\mathcal{C}_t$, then $\textbf{s}_j$ and $\textbf{s}_k$ are indistinguishable by Equation (\ref{eqn-DPMM}); However,
if $\textbf{s}_j$ is disconnected, $\textbf{s}_j$ may be exposed \footnote{
The exposure consists of two scenarios: (1) an adversary knows $\textbf{s}_j$ is the true state. (2) an adversary knows $\textbf{s}_j$ is not the true state.
}.
To protect $\textbf{s}_j$ in this case, we have to connect $\textbf{s}_j$ to another node $\textbf{s}_k$ in $\mathcal{C}_t$ (such new graph is called protectable graph in Section \ref{sec-privacy-exposure}) to form a new edge of indistinguishability between $\textbf{s}_j$ and $\textbf{s}_k$. The user has the choice to specify which $\textbf{s}_k$ to use to protect $\textbf{s}_j$ \footnote{We do not hide the policy information (e.g. $\textbf{s}_j$ and $\textbf{s}_k$ are connected). DPHMM ensures that an adversary cannot distinguish whether $\textbf{s}_j$ or $\textbf{s}_k$ is the true state.}. We also discuss how to find the optimal $\textbf{s}_k$ in Section \ref{sec-complexity-OG}.

\subsection{Comparison with Other Definitions}
Among the variant definitions of differential privacy \cite{kifer2011no,Kifer-2012-pufferfish,Blowfish-SIGMOD14,LocPriv14-arXiv}
we briefly compare some closely related definitions as follows.

\vspace{2mm}
\noindent{\bf $\delta$-Location Set based Differential Privacy.}
\cite{LocPriv14-arXiv} defined differential privacy on a subset of possible states (locations) derived from Markov model. The indistinguishability is ensured among any two locations in the $\delta$-location set, which can be viewed as a new constraint. Thus it is a special case of DPHMM with complete graph.
\begin{theorem}
\label{theo-comparison-K-CCS}
$\delta$-location set based $\epsilon$-differential privacy \cite{LocPriv14-arXiv} is equivalent to $\{\epsilon,G_{cplt},\mathcal{C}_t'\}$-DPHMM where $G_{cplt}$ is a complete graph and 
 $
\mathcal{C}_t'=min\{\textbf{s}_i| \sum_{\textbf{s}_i}\textbf{p}_t^-[i]\geq 1-\delta\}
$.
\end{theorem}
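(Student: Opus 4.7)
The plan is to unpack both privacy notions side by side and verify that, once the constraint set is correctly identified, they impose the same pairwise indistinguishability requirement on the mechanism's output distribution. Concretely, $\delta$-location set based $\epsilon$-differential privacy requires that for every pair of states $\textbf{s}_j,\textbf{s}_k$ lying in the $\delta$-location set (the minimum-cardinality set of states whose prior probabilities sum to at least $1-\delta$) the output ratio $Pr(\mathcal{A}(\textbf{s}_j)=\textbf{z}_t)/Pr(\mathcal{A}(\textbf{s}_k)=\textbf{z}_t)$ is sandwiched between $e^{-\epsilon}$ and $e^{\epsilon}$. The claim is that this matches Definition 4.3 instantiated with the complete graph $G_{cplt}$ and the constraint $\mathcal{C}_t'$.

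First I would verify that the $\delta$-location set of \cite{LocPriv14-arXiv} and the constraint $\mathcal{C}_t'$ defined in the theorem coincide as subsets of $\mathcal{S}$. Both are written as the minimum (in size) set of states whose cumulative prior mass $\sum \textbf{p}_t^-[i]$ exceeds the $1-\delta$ threshold, so this step is just a matter of expanding the notation in Definition 4.1 and noting that the constraint in the DPHMM framework is the set $\{\textbf{s}_i: \textbf{p}_t^-[i]>0\}$ restricted here to the tail removed by the $\delta$-truncation.

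Next I would handle the indistinguishability condition. Because $G_{cplt}$ is complete, the intersection $G_{cplt}\cap\mathcal{C}_t'$ is again a complete graph on the nodes of $\mathcal{C}_t'$, so every node in $\mathcal{C}_t'$ is connected to every other node in $\mathcal{C}_t'$. In particular the ``disconnected node'' branch of Definition 4.3 never triggers, and the surviving condition reduces to: for every $\textbf{s}_j\in\mathcal{C}_t'$ and every $\textbf{s}_k\in\mathcal{N}(\textbf{s}_j)\cap\mathcal{C}_t'=\mathcal{C}_t'$, the ratio bound $e^{-\epsilon}\leq Pr(\mathcal{A}(\textbf{s}_j)=\textbf{z}_t)/Pr(\mathcal{A}(\textbf{s}_k)=\textbf{z}_t)\leq e^{\epsilon}$ holds. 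This is literally the pairwise condition of $\delta$-location set DP, giving implication in both directions.

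The main obstacle I anticipate is not conceptual but definitional bookkeeping: the theorem writes $\mathcal{C}_t'$ using a set-valued $\min$, and I would need to argue that ties (when several sets of the same minimum cardinality meet the probability threshold) are resolved consistently between the two papers, or alternatively quantify over all such minimal sets so that the equivalence is invariant to the choice. Once that alignment is made explicit, the proof amounts to a one-line substitution plus the observation that completeness of $G_{cplt}$ eliminates the second clause of Definition 4.3.
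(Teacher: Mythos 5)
Your proposal is correct and follows essentially the same route the paper takes: the paper's own justification is precisely the observation that the $\delta$-location set plays the role of the constraint set and that a complete graph makes the DPHMM condition collapse to pairwise indistinguishability over that set, with the disconnected-node clause never triggering. Your extra remark about tie-breaking in the set-valued $\min$ is a fair definitional point but does not change the argument.
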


%
%

\noindent{\bf Blowfish Framework.}
There are three differences between DPHMM and Blowfish framework.
(1)
The constraints in Blowfish are deterministic; while constraints in Markov model are probabilistic.
(2)
The graph in Blowfish is static; while in Markov model the graph can be reduced. When there are disconnected nodes in the reduced graph, privacy risk needs to be tackled.
(3)
We quantify the privacy guarantee of Blowfish in terms of differential privacy (Section \ref{sec-Blowfish5}).
\begin{theorem}
$\{\epsilon,G,\mathcal{C}_t\}$-DPHMM is equivalent to $\{\epsilon,\{\mathcal{S},\\\mathbb{G}_t,\mathcal{C}_t\}\}$-Blowfish privacy where $\mathcal{S}$ is the domain of states in a Markov model, $\mathbb{G}_t$ is the set of graphs satisfying the condition in DPHMM and $\mathcal{C}_t$ is the constraint.
\end{theorem}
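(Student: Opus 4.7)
The plan is to unfold both definitions and verify that the ``neighbor'' relations they impose on pairs of states coincide once $\mathbb{G}_t$ is specified correctly. Bounded Blowfish privacy (Definition \ref{def-Blowfish-neighbors}) requires the indistinguishability ratio to hold for every pair of values that (i) lies in the constraint set and (ii) is joined by an edge of the policy graph. DPHMM (Definition \ref{def-DPMC}) imposes the same ratio either on all neighbors in $G\cap\mathcal{C}_t$ or, if a node is isolated in $G\cap\mathcal{C}_t$, on some chosen partner in $\mathcal{C}_t$. The bridge between the two notions is to take $\mathbb{G}_t$ to be the family of graphs obtained from $G\cap\mathcal{C}_t$ by augmenting each isolated node $\textbf{s}_j$ with a single edge $\overline{\textbf{s}_j\textbf{s}_k}$ for some $\textbf{s}_k\in\mathcal{C}_t$, so that $\{\epsilon,\{\mathcal{S},\mathbb{G}_t,\mathcal{C}_t\}\}$-Blowfish privacy is interpreted as Blowfish privacy for some $G'\in\mathbb{G}_t$.

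For the forward direction (DPHMM $\Rightarrow$ Blowfish), given a mechanism $\mathcal{A}$ satisfying $\{\epsilon,G,\mathcal{C}_t\}$-DPHMM, I would construct $G'\in\mathbb{G}_t$ explicitly: keep every edge of $G\cap\mathcal{C}_t$ and, for each isolated $\textbf{s}_j\in\mathcal{C}_t$, add the edge $\overline{\textbf{s}_j\textbf{s}_k}$ where $\textbf{s}_k$ is the witness guaranteed by the second clause of Definition \ref{def-DPMC}. Every edge of $G'$ then corresponds to a pair of states for which condition (\ref{eqn-DPMM}) holds, so $\mathcal{A}$ is $\{\epsilon,\{\mathcal{S},G',\mathcal{C}_t\}\}$-Blowfish private, hence $\{\epsilon,\{\mathcal{S},\mathbb{G}_t,\mathcal{C}_t\}\}$-Blowfish private. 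For the reverse direction, if the ratio condition holds for every Blowfish-neighbor pair induced by some $G'\in\mathbb{G}_t$, then each edge of $G'$ is either already an edge of $G\cap\mathcal{C}_t$ (covering the first clause of Definition \ref{def-DPMC} for every connected state) or an edge added to a formerly-isolated node (supplying the witness $\textbf{s}_k$ required by the second clause), so DPHMM is met verbatim.

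The main obstacle is pinning down the semantics of $\mathbb{G}_t$ so that the existential quantifier on the disconnected branch of DPHMM lines up with the graph-based universal quantifier of Blowfish privacy. If $\mathbb{G}_t$ were read as requiring Blowfish privacy for every completion of $G\cap\mathcal{C}_t$, Blowfish would force indistinguishability of an isolated $\textbf{s}_j$ with every candidate partner, which is strictly stronger than DPHMM. Reading $\mathbb{G}_t$ as an existential family, and observing that the mechanism itself canonically selects one completion (the one dictated by its own emission probabilities), aligns the two notions and closes the equivalence in both directions.
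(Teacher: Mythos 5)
Your proposal is correct; the paper states this theorem without an accompanying proof, treating it as a direct unfolding of the two definitions, and your argument supplies exactly that unfolding with the right reading of $\mathbb{G}_t$ as the existential family of single-edge completions of $G\cap\mathcal{C}_t$ (the universal reading would indeed be strictly stronger than DPHMM, as you note). The only point worth flagging is that Definition \ref{def-DPMC} quantifies ``for any $\textbf{z}_t$ there exists $\textbf{s}_k$,'' so a priori the witness could vary with $\textbf{z}_t$ and fail to determine a single graph $G'$; your construction implicitly assumes a uniform witness, which matches the paper's intent (the user fixes $\textbf{s}_k$ in advance when repairing the graph) and is the reading under which the equivalence holds.
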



\section{Privacy Risk}
Given a constrained policy graph $G\cap\mathcal{C}_t$, when a node $\textbf{s}_i$ is disconnected (without neighbors), one may conclude that $\textbf{s}_i$ will be disclosed. However, we show that this  may not be the case for Laplace mechanism or $K$-norm mechanism. The reason is that the perturbation is based on the sensitivity of a query, and the sensitivity may implicitly protect $\textbf{s}_i$ with other nodes.
In this section, we formalize the intuition, and define sensitivity hull and degree of protection (\textsc{DoP}) based on the constrained policy graph to analyze the privacy risk.
We also analyze the overall protection of Blowfish privacy with the sensitivity hull.
\subsection{Sensitivity Hull}
It has been shown that the standard $\ell_1$-norm sensitivity (in Definition \ref{def-standard-sensitivity}) exaggerates the sensitivity of differential privacy \cite{LocPriv14-arXiv}.
To capture the real sensitivity, we define sensitivity hull of graph $G$ and query $f$ using convex hull, denoted by $Conv()$.
Intuitively, it measures the ``maximum'' differences of the query results  on each pair of connected states (edges in the graph).

\begin{definition}[Sensitivity Hull]
\label{def-shull1}
Given a graph $G\\=(\mathcal{S},\mathcal{E})$, the sensitivity hull of a query $f$ is the convex hull of $\Delta f$ where $\Delta f$ is the set of $f(\textbf{s}_j)-f(\textbf{s}_k)$ for any connected nodes $\textbf{s}_j$ and $\textbf{s}_k$ in $G$.
\begin{flalign*}
\hspace{2.4cm}
K(G,f)&= Conv\left(\Delta f\right)&
\\
\Delta f&=\mathop\cup\limits_{\overline{\textbf{s}_j\textbf{s}_k}\in \mathcal{E}}
\left( f({\textbf{s}_j})-f({\textbf{s}_k}) \right)&
\end{flalign*}
\end{definition}

\begin{figure}
\centering
\begin{subfigure}{0.23\textwidth}
\centering
\vspace{0.65cm}
\includegraphics[width=3.8cm]{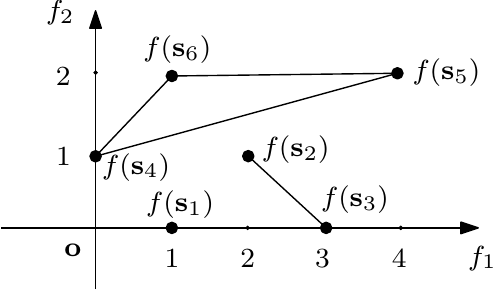}
\vspace{0.6cm}
\caption{}
\label{Figure-SHull2}
\end{subfigure}
\begin{subfigure}{0.24\textwidth}
\centering
\includegraphics[width=3.8cm]{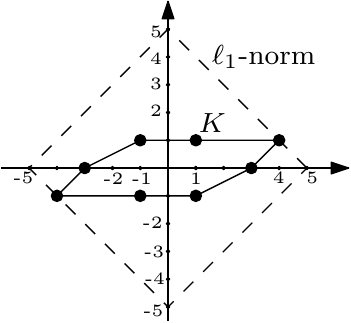}
\caption{}
\label{Figure-SHull}
\end{subfigure}
\caption{{\small Sensitivity hull without constraint. (a): the policy graph of Figure \ref{Figure-example-customization} in measurement space; (b): the $\ell_1$-norm sensitivity (dashed lines) and the sensitivity hull $K$ (solid lines).}}
\label{Figure-SHull-all}
\end{figure}

Without ambiguity, $\Delta f$ can also be denoted by a matrix in $\mathbb{R}^{d\times 2m}$ where each column is a point of $f(\textbf{s}_j)-f(\textbf{s}_k)$ and $m=|\mathcal{E}|$ is the number of edges in $G$. We show an example as follows.
\begin{example}[Sensitivity Hull]
\label{example-sh}
Given the query in Example \ref{example-f} and the graph in Figure \ref{Figure-example-03}, Figure \ref{Figure-SHull2} shows the policy graph without constraint in measurement space. The $\ell_1$-norm sensitivity (Definition 5.1 in \cite{{Blowfish-SIGMOD14}}) is $5$ because $||f(\textbf{s}_4)-f(\textbf{s}_5)||_1=5$. The dashed lines and solid lines show the $\ell_1$-norm sensitivity and sensitivity hull with the following $\Delta f$ in Figure \ref{Figure-SHull} respectively. Each column in $\Delta f$ denotes the query difference of two states, e.g., the first column $(-1,1)^T$ is $f(\textbf{s}_2)-f(\textbf{s}_3)$.
\begin{align*}
\Delta f=\left[
\begin{array}{cccccccc}
-1&1&-4&4&-1&1&3&-3\\
1&-1&-1&1&-1&1&0&0\\
\end{array}
\right]
\end{align*}
\end{example}
%
%

\vspace{2mm}
\noindent{\bf Computation.} The computation of sensitivity hull involves two steps, $\Delta f$ and $Conv()$, with $O(m^2)$ and
$O(mlog(m)\\+m^{\lfloor d/2 \rfloor})$ \cite{convexhull-alg1993} complexity respectively, where $m=|\mathcal{E}|$ is the number of edges in $G$, if the answer of $f$ is given. Therefore, the overall complexity is $O(m^2)$ for $d\leq 4$ and $O(m^{\lfloor d/2 \rfloor})$ for $d>4$. We skip the computation details because $Conv()$ has been well studied in computational geometry.

\vspace{2mm}
\noindent{\bf Discussion.}
As shown in  Figure \ref{Figure-SHull}, $\ell_1$-norm sensitivity is bigger than sensitivity hull.
Following this, we can further prove that Laplace mechanism is a special case of $K$-norm mechanism and provides  no better utility than $K$-norm mechanism. Thus we use $K$-norm mechanism as a unifying mechanism in the following analysis of this paper. 
\begin{theorem}
\label{theo-lap-K}
Laplace mechanism is a special case of $K$-norm mechanism when $K=K^{\Diamond}_{S_f}$ where
$K^{\Diamond}_{S_f}$ is the cross polytope $\{\textbf{x}\in\mathbb{R}^d: ||\textbf{x}||_1\leq S_f\}$ and $S_f$ is the $\ell_1$-norm sensitivity of $f$.
\end{theorem}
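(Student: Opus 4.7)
The plan is to show that plugging $K = K^{\Diamond}_{S_f}$ into the $K$-norm mechanism PDF in Equation (\ref{eqn-pdf-K-Norm}) recovers exactly the product-of-Laplace density used by the Laplace mechanism. Since the Laplace mechanism outputs $\textbf{z} = f(\textbf{x}^*) + \textbf{w}$ with i.i.d.\ coordinates $w_i \sim Lap(S_f/\epsilon)$, its density factorizes as
\begin{align*}
Pr(\textbf{z}) \;=\; \prod_{i=1}^{d} \frac{\epsilon}{2 S_f}\, e^{-\epsilon |z_i - f_i(\textbf{x}^*)|/S_f} \;=\; \frac{\epsilon^d}{(2 S_f)^d}\, \exp\!\left(-\frac{\epsilon}{S_f}\, \|\textbf{z} - f(\textbf{x}^*)\|_1\right).
\end{align*}
My goal is to match this with the $K$-norm density when $K$ is the scaled cross polytope.

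First I would handle the \emph{exponent}. By the definition $\|\textbf{v}\|_K = \inf\{r > 0 : \textbf{v} \in rK\}$, and since $K^{\Diamond}_{S_f} = \{\textbf{x} : \|\textbf{x}\|_1 \leq S_f\}$ is just an $\ell_1$-ball of radius $S_f$, the Minkowski functional evaluates to $\|\textbf{v}\|_{K^{\Diamond}_{S_f}} = \|\textbf{v}\|_1 / S_f$. Substituting into Equation (\ref{eqn-pdf-K-Norm}) gives the exponent $-\epsilon \|\textbf{z} - f(\textbf{x}^*)\|_1 / S_f$, which already matches the Laplace exponent.

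Second I would verify that the \emph{normalization constants} agree. The volume of the $\ell_1$-ball of radius $S_f$ in $\mathbb{R}^d$ is the standard computation $\textsc{Vol}(K^{\Diamond}_{S_f}) = (2S_f)^d / d!$, which I would recall either by the inductive slicing argument or by observing that $K^{\Diamond}_{S_f}$ is the union of $2^d$ simplices each of volume $S_f^d/d!$. Since $K/\epsilon$ scales volumes by $\epsilon^{-d}$, we have $\textsc{Vol}(K^{\Diamond}_{S_f}/\epsilon) = (2S_f)^d / (d!\, \epsilon^d)$, and combined with $\Gamma(d+1) = d!$ the denominator becomes
\begin{align*}
\Gamma(d+1)\,\textsc{Vol}(K^{\Diamond}_{S_f}/\epsilon) \;=\; d!\cdot \frac{(2S_f)^d}{d!\,\epsilon^d} \;=\; \frac{(2S_f)^d}{\epsilon^d},
\end{align*}
so the leading coefficient in Equation (\ref{eqn-pdf-K-Norm}) is exactly $\epsilon^d / (2 S_f)^d$, matching the Laplace density. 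This finishes the identification of the two PDFs.

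The ``no better utility'' half of the claim would then follow from the observation that the sensitivity hull $K$ is always contained in $K^{\Diamond}_{S_f}$, because every edge difference $f(\textbf{s}_j) - f(\textbf{s}_k)$ has $\ell_1$ norm at most $S_f$; hence $\textsc{Vol}(K) \leq \textsc{Vol}(K^{\Diamond}_{S_f})$, and sampling from a norm ball with smaller volume gives smaller expected squared error (this is the standard $K$-norm utility bound from \cite{Geometry-Hardt-STOC10}). The only step that requires any care is the volume formula for the cross polytope; everything else is a direct substitution. I do not anticipate a real obstacle here — the result is essentially a bookkeeping check that the Laplace noise distribution coincides with the uniform-on-sphere $\times$ Gamma radial decomposition specialized to $K = K^{\Diamond}_{S_f}$.
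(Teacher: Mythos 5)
Your proposal is correct and follows essentially the same route as the paper's own proof: compute the product-form Laplace density, evaluate the Minkowski functional of the cross polytope as $\|\cdot\|_1/S_f$, and match normalization constants via $\textsc{Vol}(K^{\Diamond}_{S_f})=(2S_f)^d/\Gamma(d+1)$. The only cosmetic difference is that the paper splits the Laplace-density computation into a separate lemma, and states the containment $K\subseteq K^{\Diamond}_{S_f}$ as a standalone corollary rather than folding it into the theorem's proof.
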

\begin{corollary}
\label{coro-lap-K}
Laplace mechanism provides no better utility than $K$-norm mechanism
because $K^{\Diamond}_{S_f}$ always contains sensitivity hull $K$.
\end{corollary}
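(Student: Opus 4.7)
The strategy is to reduce the corollary to a set containment plus a monotonicity statement for the $K$-norm mechanism. Since Theorem \ref{theo-lap-K} already identifies Laplace mechanism with the $K$-norm mechanism whose body is $K^{\Diamond}_{S_f}$, it suffices to prove (i) $K(G,f) \subseteq K^{\Diamond}_{S_f}$, and (ii) that enlarging the convex body used by a $K$-norm mechanism cannot decrease its expected squared $\ell_2$ error.

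For step (i), every generator of $K(G,f)$ has the form $f(\textbf{s}_j) - f(\textbf{s}_k)$ for some edge $\overline{\textbf{s}_j\textbf{s}_k}$ of $G$. By the $\ell_1$-sensitivity in Definition \ref{def-standard-sensitivity}, such a vector satisfies $||f(\textbf{s}_j) - f(\textbf{s}_k)||_1 \leq S_f$ and therefore lies in $K^{\Diamond}_{S_f} = \{\textbf{x} : ||\textbf{x}||_1 \leq S_f\}$. Since $K^{\Diamond}_{S_f}$ is convex, it contains the convex hull of all such generators, namely $K(G,f) = Conv(\Delta f)$. Both bodies are centrally symmetric and contain the origin: $\Delta f$ is closed under negation because the edges of $G$ are undirected, so $K(G,f)$ is centrally symmetric and contains the midpoint of every antipodal pair of generators (i.e.\ the origin), while $K^{\Diamond}_{S_f}$ is itself an $\ell_1$ ball.

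For step (ii), starting from the density in Equation (\ref{eqn-pdf-K-Norm}), I would decompose the noise as $\textbf{z} - f(\textbf{x}^*) = \lambda \textbf{w}$ where $\lambda \sim \Gamma(d+1,\epsilon)$ and $\textbf{w} \sim Unif(K)$ are independent; one verifies this representation by integrating the $K$-norm density against the radial Gamma profile and matching it to Equation (\ref{eqn-pdf-K-Norm}). Independence gives $\mathbb{E}||\textbf{z} - f(\textbf{x}^*)||_2^2 = \mathbb{E}[\lambda^2]\cdot \mathbb{E}_{\textbf{w}\sim Unif(K)}||\textbf{w}||_2^2$, and switching to polar coordinates in $\mathbb{R}^d$ yields $\mathbb{E}_{Unif(K)}||\textbf{w}||_2^2 = \frac{d}{d+2}\cdot \frac{\int_{S^{d-1}} t_K(\hat{\textbf{v}})^{d+2}\,d\omega}{\int_{S^{d-1}} t_K(\hat{\textbf{v}})^d\,d\omega}$, where $t_K(\hat{\textbf{v}})$ is the radial distance from the origin to $\partial K$ in direction $\hat{\textbf{v}}$. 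The containment from step (i) forces $t_{K(G,f)}\leq t_{K^{\Diamond}_{S_f}}$ pointwise on $S^{d-1}$, and I would argue that this pointwise enlargement of the radial function cannot decrease the above ratio, completing the comparison of expected squared errors.

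The main obstacle lies in the last step: the ratio $\int t^{d+2}\,d\omega / \int t^d\,d\omega$ is not obviously monotone under a pointwise enlargement of $t$, since both numerator and denominator grow. I expect to handle this through a one-parameter deformation, interpolating between the two bodies inside the class of centrally symmetric convex bodies containing the origin (for example by Minkowski convex combinations), differentiating the ratio along the path, and verifying non-negativity of the derivative by a Chebyshev-type correlation inequality on $S^{d-1}$ that exploits central symmetry. Once this monotonicity is in place, combining it with the containment from step (i) yields the corollary immediately.
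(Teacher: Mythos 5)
Your step (i) is correct and is essentially the entirety of what the paper itself offers: the appendix reduces the corollary to the single observation that $K^{\Diamond}_{S_f}$ contains the sensitivity hull, which you justify (in more detail than the paper does) by noting that every generator $f(\textbf{s}_j)-f(\textbf{s}_k)$ of $\Delta f$ has $\ell_1$ norm at most $S_f$ and that the $\ell_1$ ball is convex. The paper then stops there and implicitly appeals to the error analysis of the $K$-norm mechanism in \cite{Geometry-Hardt-STOC10}; indeed, elsewhere it states that the error bound is governed by $\textsc{Vol}(K)$, and volume is monotone under inclusion, so under that reading of ``utility'' the containment alone finishes the argument.

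Your step (ii), however, has a genuine gap: the monotonicity you are trying to prove is false. The decomposition $\textbf{z}-f(\textbf{x}^*)=\lambda\textbf{w}$ with $\lambda\sim\Gamma(d+1,\epsilon)$ and $\textbf{w}$ uniform on $K$ is correct, and by independence it reduces the claim to showing that $K_1\subseteq K_2$ implies $\mathbb{E}_{K_1}||\textbf{w}||_2^2\leq\mathbb{E}_{K_2}||\textbf{w}||_2^2$ for centrally symmetric convex bodies containing the origin. A counterexample already exists in the plane: let $K_2$ be the flat rhombus with vertices $(\pm1,0)$ and $(0,\pm h)$, whose uniform second moment equals $(1+h^2)/6$, and let $K_1=[-1+\eta,1-\eta]\times[-\delta,\delta]\subset K_2$ be a thin rectangle along the long axis, whose second moment tends to $(1-\eta)^2/3$; for small $h$, $\eta$, $\delta$ the \emph{inner} body has the strictly larger second moment. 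Enlarging a body by adding mass near the origin lowers the average of $||\textbf{w}||_2^2$, so no Chebyshev-type correlation inequality on the sphere can make the ratio $\int t^{d+2}d\omega/\int t^{d}d\omega$ monotone under pointwise enlargement of the radial function, and the one-parameter deformation you propose will encounter paths along which the derivative has the wrong sign. The comparison can even reverse for the specific outer body $K^{\Diamond}_{S_f}$ when $d\geq3$ and the sensitivity hull degenerates toward a segment joining two antipodal vertices of the cross polytope. To close the proof you should adopt the paper's weaker interpretation of utility -- the volume-dependent error bound of the $K$-norm mechanism from \cite{Geometry-Hardt-STOC10}, for which containment immediately yields the claim -- rather than the literal expected squared $\ell_2$ error of Equation (\ref{equation-util}), which does not follow from containment alone.
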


\subsection{Privacy Risk}
\label{sec-privacy-exposure}
Under constraint, connectivity of policy graph is destructed. In the following example, we show that directly using existing data release methods may lead to exposure of disconnected nodes.
\begin{figure}[]
\centering
\includegraphics[width=8cm]{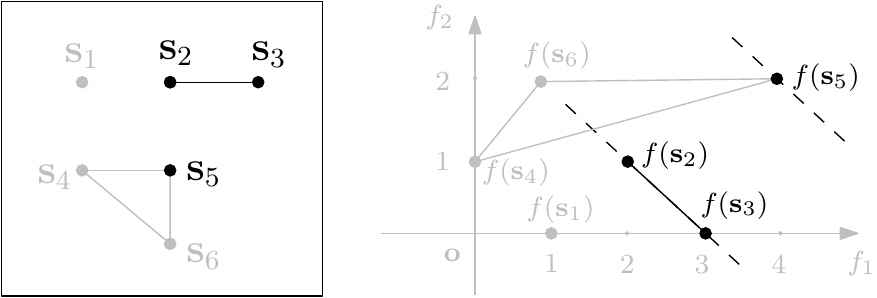}
\caption{{\small (left) constrained policy graph; (right) the query results in measurement space.}
}
\label{Figure-exposure}
\end{figure}
\begin{example}[Information Exposure]
\label{example-exposure}
Given the query in Example \ref{example-f} and the graph in Figure \ref{Figure-example-03}, assume at a timestamp $t$, the constraint set $\mathcal{C}_t=\{\textbf{s}_2,\textbf{s}_3,\textbf{s}_5\}$.
Then the constrained graph is shown in Figure \ref{Figure-exposure} (left). We use existing mechanisms, including Laplace based mechanisms and $K$-norm based mechanisms, to answer the query in Example \ref{example-f}.

\noindent{\bf Laplace based Mechanisms.}
The $\ell_1$-norm sensitivity $S_f=2$. W.l.o.g., assume the released answer $\textbf{z}=f(\textbf{s}_5)$. Then for a Laplace mechanism $\mathcal{A}()$, $\frac{Pr(\mathcal{A}(\textbf{s}_5)=f(\textbf{s}_5))}{Pr(\mathcal{A}(\textbf{s}_3)=f(\textbf{s}_5))}=e^{\frac{3}{2}\epsilon}>e^\epsilon$ \footnote{Let $\tilde{\textbf{n}}\in\mathbb{R}^2$ be $2$ i.i.d Laplace noises with mean $0$ and variance $1$. Then $Lap(S_f/\epsilon)=\frac{S_f}{\epsilon}\tilde{\textbf{n}}$ is the Laplace noises added to the query.
$\frac{Pr(\mathcal{A}(\textbf{s}_5)=f(\textbf{s}_5))}{Pr(\mathcal{A}(\textbf{s}_3)=f(\textbf{s}_5))}
=\frac{Pr(f(\textbf{s}_5)+Lap(2/\epsilon)=f(\textbf{s}_5))}{Pr(f(\textbf{s}_3)+Lap(2/\epsilon)=f(\textbf{s}_5))}
=\frac{Pr(\tilde{\textbf{n}}=\frac{\epsilon}{2}[0,0]^T)}{Pr(\tilde{\textbf{n}}=\frac{\epsilon}{2}[1,2]^T)}
=exp(\frac{\epsilon}{2} ({||[1,2]^T||_1-||[0,0]^T||_1}))
=exp(\frac{3}{2}\epsilon)$.
}, $\frac{Pr(\mathcal{A}(\textbf{s}_5)=f(\textbf{s}_5))}{Pr(\mathcal{A}(\textbf{s}_2)=f(\textbf{s}_5))}=e^{\frac{3}{2}\epsilon}>e^\epsilon$ .
Hence $\textbf{s}_5$ is distinct from $\textbf{s}_2$ and $\textbf{s}_3$.
Consequently, if $f(\textbf{s}_5)$ is far from $f(\textbf{s}_2)$ and $f(\textbf{s}_3)$, then $\textbf{s}_5$ will be exposed. 

\noindent{\bf $K$-norm based Mechanisms.}
Sensitivity hull of the query $f$ is $Conv(f(\textbf{s}_2)-f(\textbf{s}_3), f(\textbf{s}_3)-f(\textbf{s}_2))$ where $Conv()$ is the function of deriving convex hull.
For a $K$-norm based mechanism $\mathcal{A}()$, it means $\mathcal{A}(\textbf{s}_2)$ and $\mathcal{A}(\textbf{s}_3)$ are on the line of $\overline{f(\textbf{s}_2)f(\textbf{s}_3)}$ (dashed line through $f(\textbf{s}_2)$ and $f(\textbf{s}_3)$ in Figure \ref{Figure-exposure} (right)); $\mathcal{A}(\textbf{s}_5)$ is on the dashed line through $f(\textbf{s}_5)$ in Figure \ref{Figure-exposure} (right). Again we assume the released result $\textbf{z}=f(\textbf{s}_5)$. Then $Pr(\mathcal{A}(\textbf{s}_3)=f(\textbf{s}_5))=0$. Clearly, any $\textbf{z}$ not on the line of $\overline{f(\textbf{s}_2)f(\textbf{s}_3)}$ leads to complete exposure of $\textbf{s}_5$.
\end{example}

Intuitively, given a disconnected node $\textbf{s}_i$ in the constrained set $\mathcal{C}_t$, if there exists another node $\textbf{s}_j\in\mathcal{C}_t$ such that the difference $f(\textbf{s}_i)-f(\textbf{s}_j)$ is contained in the sensitivity hull $K$, then $\textbf{s}_i$ is protected by $\textbf{s}_j$. Otherwise, if no such node $\textbf{s}_j$ exists, then $\textbf{s}_i$ is exposed. Therefore,
privacy risk can be measured by sensitivity hull $K$ as follows.
If there is no $\textbf{s}_j$ such that $f(\textbf{s}_j)\in f(\textbf{s}_i)+K$, then $\textbf{s}_i$ is exposed, meaning that Equation (\ref{eqn-DPMM}) will not hold.
To capture such geometric meaning (i.e., $f(\textbf{s}_j)-f(\textbf{s}_i)\in K$),
we define degree of protection.
\begin{definition}[\textsc{DoP}]
At any timestamp $t$, the degree of protection (\textsc{DoP}) of a state $\textbf{s}_i$ is the number of states contained in $f(\textbf{s}_i)+K_t$ where $K_t$ is the sensitivity hull.
\begin{align*}
\textsc{DoP}(\textbf{s}_i,K_t)=\left| \{\textbf{s}_j|f(\textbf{s}_j)\in f(\textbf{s}_i)+K_t, \textbf{s}_j\in\mathcal{C}_t\} \right|
\end{align*}
\end{definition}
Because $f(\textbf{s}_i)$ is always in $f(\textbf{s}_i)+K$, $\textsc{DoP}(\textbf{s}_i,K)\geq 1$ for all $\textbf{s}_i\in\mathcal{C}_t$.
Note that not all disconnected nodes are exposed.
For example, in Figure \ref{Figure-Gt-2}, $\textbf{s}_2$ is disconnected under constraint $\mathcal{C}_t=\{\textbf{s}_2,\textbf{s}_4,\textbf{s}_5,\textbf{s}_6\}$. However, $\textsc{DoP}(\textbf{s}_2)=3$ since $f(\textbf{s}_2)+K$ contains $f(\textbf{s}_4)$ and $f(\textbf{s}_5)$.

If all the nodes of a graph have $\textsc{DoP}>1$, we say it is protectable.
 Note that a complete graph is always protectable because every two nodes are connected.

\begin{definition}[Protectable Graph]
\label{def-protectable-G}
A graph $\mathcal{G}$ is protectable if all its nodes have $\textsc{DoP}>1$.
\end{definition}

\begin{theorem}[Exposure Condition]
With Laplace mechanism or $K$-norm based mechanisms, a graph $\mathcal{G}$ cannot satisfy the DPHMM condition (Definition \ref{def-DPMC}) iff $\mathcal{G}$ is not protectable.
\end{theorem}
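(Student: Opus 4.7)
The plan is to reduce the claim to a purely geometric statement about the sensitivity hull $K$ of $\mathcal{G}$ and then verify both directions of the iff from that statement. By Theorem \ref{theo-lap-K}, the Laplace mechanism is a special case of the $K$-norm mechanism (with $K = K^\Diamond_{S_f}$), so it suffices to argue the claim for $K$-norm mechanisms and note that the same reasoning specializes to Laplace.

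First I would write the likelihood ratio from the density in (\ref{eqn-pdf-K-Norm}) as
\begin{equation*}
\frac{Pr(\mathcal{A}(\textbf{s}_j)=\textbf{z})}{Pr(\mathcal{A}(\textbf{s}_k)=\textbf{z})}
= \exp\!\bigl(\epsilon(\|\textbf{z}-f(\textbf{s}_k)\|_K-\|\textbf{z}-f(\textbf{s}_j)\|_K)\bigr).
\end{equation*}
Since $\Delta f$ is symmetric about the origin by construction (for every edge $\overline{\textbf{s}_j\textbf{s}_k}\in\mathcal{E}$ both $f(\textbf{s}_j)-f(\textbf{s}_k)$ and $f(\textbf{s}_k)-f(\textbf{s}_j)$ appear), the sensitivity hull $K$ is centrally symmetric and $\|\cdot\|_K$ is a genuine norm. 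A reverse triangle inequality then gives the key geometric characterization I want: the supremum of the above ratio over $\textbf{z}$ equals $\exp(\epsilon\|f(\textbf{s}_j)-f(\textbf{s}_k)\|_K)$, and is attained (in the sense of sup) as $\textbf{z}$ moves along the ray from $f(\textbf{s}_j)$ away from $f(\textbf{s}_k)$. Consequently, the DPHMM indistinguishability bound (\ref{eqn-DPMM}) holds for the pair $(\textbf{s}_j,\textbf{s}_k)$ if and only if $\|f(\textbf{s}_j)-f(\textbf{s}_k)\|_K\le 1$, i.e., $f(\textbf{s}_j)-f(\textbf{s}_k)\in K$, which is precisely $f(\textbf{s}_k)\in f(\textbf{s}_j)+K$.

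With this characterization in hand, both directions of the theorem follow directly from Definition \ref{def-DPMC}. For any edge $\overline{\textbf{s}_j\textbf{s}_k}\in\mathcal{E}$, the containment $f(\textbf{s}_j)-f(\textbf{s}_k)\in K$ is immediate from Definition \ref{def-shull1}, so the ``connected'' branch of (\ref{eqn-DPMM}) is automatically met. The content of the theorem therefore concerns the ``disconnected'' branch. If $\mathcal{G}$ is protectable, then every $\textbf{s}_j\in\mathcal{C}_t$, including the disconnected ones, has $\textsc{DoP}(\textbf{s}_j,K)>1$, so some $\textbf{s}_k\in\mathcal{C}_t$ lies in $f(\textbf{s}_j)+K$, which by the characterization witnesses the existential clause of Definition \ref{def-DPMC}; hence DPHMM holds. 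Conversely, if $\mathcal{G}$ is not protectable, there is a node $\textbf{s}_j$ with $\textsc{DoP}(\textbf{s}_j,K)=1$, so for every $\textbf{s}_k\in\mathcal{C}_t\setminus\{\textbf{s}_j\}$, $f(\textbf{s}_k)\notin f(\textbf{s}_j)+K$; by the characterization, no choice of $\textbf{s}_k$ can satisfy (\ref{eqn-DPMM}), and the DPHMM condition is violated.

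The main obstacle I expect is justifying the sup-attainment step cleanly: one must argue that $\sup_{\textbf{z}}(\|\textbf{z}-f(\textbf{s}_k)\|_K-\|\textbf{z}-f(\textbf{s}_j)\|_K)=\|f(\textbf{s}_j)-f(\textbf{s}_k)\|_K$, and that this supremum is realized (or approached) by points $\textbf{z}$ with strictly positive density under the $K$-norm mechanism. The upper bound is the reverse triangle inequality, while the matching lower bound uses central symmetry of $K$ and the fact that $\|\cdot\|_K$ is positively homogeneous, so scaling $\textbf{z}=f(\textbf{s}_j)+t(f(\textbf{s}_j)-f(\textbf{s}_k))$ and letting $t\to\infty$ produces a sequence achieving the supremum. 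Once this is secured, the rest of the argument is essentially a translation between the analytic condition on the ratio and the geometric condition defining $\textsc{DoP}$.
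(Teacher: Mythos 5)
Your proposal is correct and follows essentially the same route as the paper, which justifies the theorem by exactly the geometric characterization you derive: the pairwise ratio bound in (\ref{eqn-DPMM}) holds iff $f(\textbf{s}_k)\in f(\textbf{s}_j)+K$, so the existential clause for disconnected nodes is witnessable iff $\textsc{DoP}>1$, with Laplace handled as the special case $K=K^{\Diamond}_{S_f}$ via Theorem \ref{theo-lap-K}. (Minor remark: your sup-attainment step is even easier than you fear, since taking $\textbf{z}=f(\textbf{s}_j)$ already achieves $\|\textbf{z}-f(\textbf{s}_k)\|_K-\|\textbf{z}-f(\textbf{s}_j)\|_K=\|f(\textbf{s}_j)-f(\textbf{s}_k)\|_K$ exactly.)
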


\noindent{\bf Computation.}
The computation of protectability (i.e. $\textsc{DoP}$) is
to check the number of $f(\textbf{s}_j)$ inside a convex body $f(\textbf{s}_i)+K$ for all $\textbf{s}_j\in\mathcal{C}_t$.
Because the problem of checking whether a point is a convex body
 has been well studied in computational geometry, we skip the discussion of details.

\subsection{Blowfish Analysis}
\label{sec-Blowfish5}
We now use the technique of sensitivity hull to analyze the protection of Blowfish privacy.
\begin{example}[Information Exposure]
\label{example-Blowfish1}
Given the table $\mathcal{T}$ in Figure \ref{Figure-example-TBlowfish} and the graph in Figure \ref{Figure-example-GBlowfish}, let $f$ be a two-dimensional query:
\begin{flalign*}
\hspace{0.5cm}
&f_1:\ \textrm{select count(*) from } \mathcal{T} \textrm{ where disease=``cancer''}&\\
&f_2:\ \textrm{select count(*) from } \mathcal{T} \textrm{ where disease=``diabetes''}&
\end{flalign*}
The $\ell_1$-norm sensitivity $S_f=2$.
By Definition \ref{def-shull1},
$
\Delta f=\left[
\begin{array}{cccc}
1&-1&-1&1\\
0&1&0&-1
\end{array}
\right]^T$.
Then the sensitivity hull can be derived.
Assume $\{\epsilon,G\}$-Blowfish privacy is preserved where $G$ is the graph in Figure \ref{Figure-example-GBlowfish}.
Then secret $\textbf{s}_5$ is protected with both Laplace mechanism and $K$-norm mechanism; while secret $\textbf{s}_6$ is protected only by Laplace mechanism, not by $K$-norm mechanism.
It can be proven that  with $K$-norm mechanism the unbounded differential privacies for the existences of $\{\textbf{s}_1, \textbf{s}_2, \textbf{s}_3, \textbf{s}_4, \textbf{s}_5, \textbf{s}_6\}$ are $\{\epsilon,0,\epsilon,2\epsilon,\epsilon,2\epsilon\}$ respectively
(e.g.\\ $\frac{Pr(\mathcal{A}(D\cup \textbf{s}_6)=
\textbf{z})}{Pr(\mathcal{A}(D)=\textbf{z})}\leq e^{2\epsilon}$).
 Thus it is $2\epsilon$-unbounded-DP in total.
 This also illustrates why we need to re-design a new optimal graph with less privacy loss in Section \ref{sec-complexity-OG}. Note that
 (1) the original bounded Blowfish privacy (i.e. the graph in Figure \ref{Figure-example-GBlowfish}) does not protect $\textbf{s}_6$ in the first place. Hence the original Blowfish privacy still holds;
 (2) although $\textbf{s}_4$ is connected with $\textbf{s}_3$, the existence of $\textbf{s}_4$ is also at risk ($2\epsilon$-DP);
 (3) although $\textbf{s}_6$ is not connected with $\textbf{s}_4$, it is indistinguishable with $\textbf{s}_4$ by default.
\end{example}
Formally, we summarize the protection of Blowfish as follows.
(1)
Bounded Blowfish and unbounded Blowfish interfere with each other, e.g., bounded Blowfish can ensure or violate unbounded Blowfish privacy and vice versa.
(2)
For various queries, the protection of Blowfish differs.
(3) For various data release mechanisms, the protection of Blowfish differs. 

\vspace{2mm}
\noindent{\bf Quantifying Blowfish.}
We quantify the overall protection of Blowfish as follows. First, it is intuitive that bounded Blowfish is weaker than (or equal to) bounded DP, and unbounded Blowfish is also weaker than (or equal to) unbounded DP. For lack of space, below we quantify the protection of bounded Blowfish in terms of unbounded DP with $K$-norm mechanism. It can be easily extended to other cases.
%
%
\begin{definition}[$\{\epsilon,G,\mathcal{C}\}$-$Constrained$DP]
\label{def-CDP}
Let $G=(\mathcal{S},\mathcal{E})$ be the policy graph in Blowfish privacy,
and $\mathcal{C}$ be the instances satisfying the constraint in either Markov model or database context.
A randomized mechanism $\mathcal{A}()$ satisfies $\{\epsilon,G,\mathcal{C}\}$-constrained differential privacy if for any output $\textbf{z}$, one of the following condition holds:
\begin{flalign*}
\begin{array}{l}
\textrm{(1). in Markov model, } \frac{Pr(\mathcal{A}(\textbf{s}_j)=\textbf{z})}{Pr(\mathcal{A}(\textbf{s}_k)=\textbf{z})}\leq e^\epsilon, \forall \textbf{s}_j,\textbf{s}_k\in\mathcal{C};
\\
\textrm{(2). in database context, } \frac{Pr(\mathcal{A}(D_1)=\textbf{z})}{Pr(\mathcal{A}(D_2)=\textbf{z})}\leq e^\epsilon, \forall D_1,D_2,\in\mathcal{C}, D_1 \\\textrm{can be obtained by adding $\textbf{s}_i$ to or removing $\textbf{s}_i$ from } D_2, \textbf{s}_i\in\mathcal{S}.
\end{array}
\end{flalign*}
\end{definition}
Note that in Markov model, above definition is actually bounded DP because unbound DP becomes bounded DP by nature (Section \ref{sec-Privacy-constraint}).


\begin{figure*}
\centering
\begin{subfigure}{0.192\textwidth}
\centering
\includegraphics[width=3.4cm]{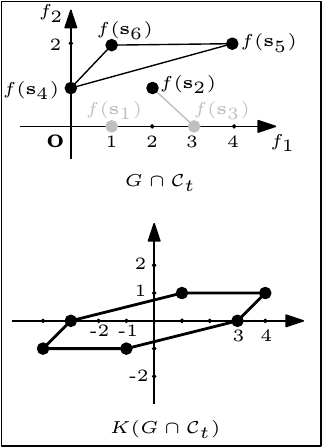}
\caption{{\small $\mathcal{C}_t=\{\textbf{s}_2,\textbf{s}_4,\textbf{s}_5,\textbf{s}_6\}$}}
\label{Figure-Gt-2}
\end{subfigure}
\begin{subfigure}{0.192\textwidth}
\centering
\includegraphics[width=3.4cm]{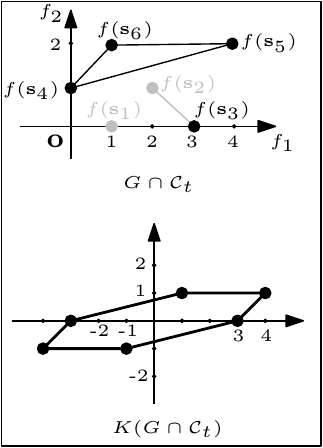}
\caption{{\small $\mathcal{C}_t=\{\textbf{s}_3,\textbf{s}_4,\textbf{s}_5,\textbf{s}_6\}$}}
\label{Figure-Gt-3}
\end{subfigure}
\begin{subfigure}{0.192\textwidth}
\centering
\includegraphics[width=3.4cm]{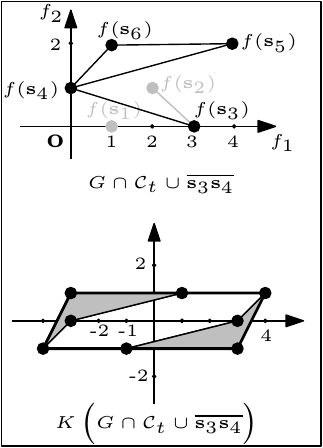}
\caption{{\small $\mathcal{C}_t=\{\textbf{s}_3,\textbf{s}_4,\textbf{s}_5,\textbf{s}_6\}$}}
\label{Figure-Gt-4}
\end{subfigure}
\begin{subfigure}{0.192\textwidth}
\centering
\includegraphics[width=3.4cm]{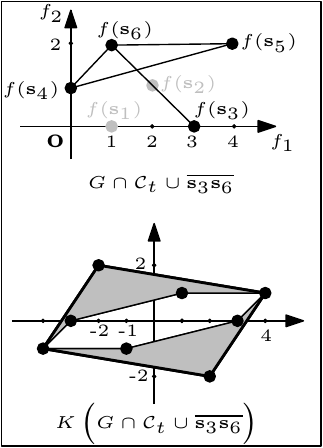}
\caption{{\small $\mathcal{C}_t=\{\textbf{s}_3,\textbf{s}_4,\textbf{s}_5,\textbf{s}_6\}$}}
\label{Figure-Gt-5}
\end{subfigure}
\begin{subfigure}{0.192\textwidth}
\centering
\includegraphics[width=3.4cm]{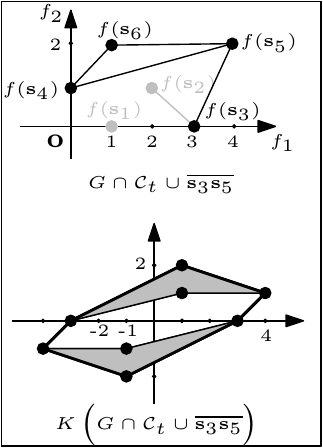}
\caption{{\small $\mathcal{C}_t=\{\textbf{s}_3,\textbf{s}_4,\textbf{s}_5,\textbf{s}_6\}$}}
\label{Figure-Gt-6}
\end{subfigure}
\caption{{\small (a): if $\mathcal{C}_t=\{\textbf{s}_2,\textbf{s}_4,\textbf{s}_5,\textbf{s}_6\}$, then $\textbf{s}_2$ is also protected because $f(\textbf{s}_4)\in f(\textbf{s}_2)+K$ and $f(\textbf{s}_5)\in f(\textbf{s}_2)+K$; (b): if $\mathcal{C}_t=\{\textbf{s}_3,\textbf{s}_4,\textbf{s}_5,\textbf{s}_6\}$, then $\textbf{s}_3$ is exposed; (c): adding $\overline{\textbf{s}_3\textbf{s}_4}$ to graph; (d): adding $\overline{\textbf{s}_3\textbf{s}_6}$ to graph; (e): adding $\overline{\textbf{s}_3\textbf{s}_5}$ to graph;  }}
\label{Figure-Gt-example}
\end{figure*}

\begin{theorem}[Blowfish Protection]
\label{thm-Blowfish-DP}
Let $G$ be the policy graph, and $\mathcal{C}$ be the instances satisfying the constraint in Blowfish privacy.
With $K$-norm mechanism, if $\{\epsilon, \{\mathcal{S},G,\mathcal{C}\}\}$-Blowfish privacy holds, then it satisfies \\
(1). $\left\{\left(\mathop{max}\limits_{\forall \textbf{s}_j,\textbf{s}_k\in\mathcal{C}}||f(\textbf{s}_j)-f(\textbf{s}_k)||_{K}\right)\epsilon,G,\mathcal{C}\right\}\textrm{-}constrained$DP in Markov model;\\
(2). $\left\{\left(\mathop{max}\limits_{\forall \textbf{s}_i\in\mathcal{C}}||f(\textbf{s}_i)||_{K}\right)\epsilon,G,\mathcal{C}\right\}\textrm{-}constrained$DP in database context,\\
where $K$ is the sensitivity hull of query $f$.

%
\end{theorem}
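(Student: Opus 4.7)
The plan is to bound the likelihood ratio of the $K$-norm mechanism directly, using only the exponential form of its density in Equation (1) and the reverse triangle inequality for the Minkowski norm $\|\cdot\|_K$. Throughout, $K$ is understood to be the sensitivity hull $K(G,f)$ of Definition 5.1, so that the hypothesis ``$\{\epsilon,\{\mathcal{S},G,\mathcal{C}\}\}$-Blowfish privacy holds'' corresponds to $\|f(\textbf{s}_j)-f(\textbf{s}_k)\|_K\le 1$ for every edge $\overline{\textbf{s}_j\textbf{s}_k}\in\mathcal{E}$; this is the ``budget-1'' reformulation of the Blowfish guarantee that makes the reduction to a norm inequality transparent.

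For part (1), I would fix an arbitrary output $\textbf{z}$ and any two states $\textbf{s}_j,\textbf{s}_k\in\mathcal{C}$. Substituting into Equation (1) cancels the normalising constant $\Gamma(d+1)\textsc{Vol}(K/\epsilon)$ and leaves
\[
\frac{Pr(\mathcal{A}(\textbf{s}_j)=\textbf{z})}{Pr(\mathcal{A}(\textbf{s}_k)=\textbf{z})}=\exp\!\bigl(\epsilon\,\|\textbf{z}-f(\textbf{s}_k)\|_K-\epsilon\,\|\textbf{z}-f(\textbf{s}_j)\|_K\bigr).
\]
The reverse triangle inequality for a Minkowski norm bounds the exponent by $\epsilon\,\|f(\textbf{s}_j)-f(\textbf{s}_k)\|_K$. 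Taking the supremum over all pairs in $\mathcal{C}$ produces exactly the multiplier $\max_{\textbf{s}_j,\textbf{s}_k\in\mathcal{C}}\|f(\textbf{s}_j)-f(\textbf{s}_k)\|_K$ in front of $\epsilon$, which is the constrainedDP bound claimed in the Markov-model case.

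Part (2) follows the same template once a neighbouring database pair is written $D_1=D_2\cup\{\textbf{s}_i\}$ for a tuple $\textbf{s}_i\in\mathcal{S}$. For additive queries (the implicit setting in Example 5.3, where $f$ is a vector of counting queries), $f(D_1)-f(D_2)=f(\textbf{s}_i)$. Running the same ratio/triangle-inequality argument in the database instead of state space yields $\exp(\epsilon\,\|f(\textbf{s}_i)\|_K)$, and maximising over $\textbf{s}_i\in\mathcal{C}$ gives the second claim. Central symmetry of $K$ (so that $\|\textbf{v}\|_K=\|-\textbf{v}\|_K$) removes any sign ambiguity between adding and removing $\textbf{s}_i$.

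The calculation is essentially one line, so the main obstacle is bookkeeping, in two respects. First, Blowfish only constrains pairs connected by an edge in $G$, whereas constrainedDP must hold uniformly over \emph{all} pairs (respectively all neighbouring databases) in $\mathcal{C}$; it is precisely the $K$-norm of non-edge differences that inflates $\epsilon$ to the stated factor, and I would note that this factor is always $\ge 1$ so the bound is never vacuous and reduces to the usual $\epsilon$ on the edges of $G$. Second, in the database case one has to make it explicit that the ``max over $\mathcal{C}$'' in the theorem really means the max over tuples $\textbf{s}_i$ compatible with the constraint, and that the additivity of $f$ is what converts a tuple-level change into the measurement-space vector $f(\textbf{s}_i)$ against which the sensitivity hull $K$ is measured. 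Once these two bookkeeping points are handled, the theorem is a direct geometric corollary of the $K$-norm mechanism.
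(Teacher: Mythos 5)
Your proposal is correct and follows essentially the same route the paper takes: the paper's only supporting argument (the worked computation in the appendix for Example~\ref{example-Blowfish1}) is exactly your likelihood-ratio calculation for the $K$-norm mechanism, $\frac{Pr(\mathcal{A}(D_1)=\textbf{z})}{Pr(\mathcal{A}(D_2)=\textbf{z})}\leq \exp\left(\epsilon\,||f(D_1)-f(D_2)||_K\right)$, specialized to the example and then maximized over the constrained instances. Your additional bookkeeping (cancellation of the normalizer, triangle inequality for the Minkowski norm, central symmetry of $K$, and additivity of $f$ in the database case) fills in the steps the paper leaves implicit without changing the argument.
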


\section{Data Release Mechanism}
If privacy risk is detected, we build a protectable graph as a supergraph of existing graph \footnote{On the other hand, since the policy graph is customizable to users, the protectable graph can also be created by users. In this case, it is not necessary to derive another minimum protectable graph again.}. In this section, we first formulate the problem of building a minimum protectable graph with lowest error bound. Next we show that this problem is $\#$P-hard, and propose a fast greedy algorithm. Then we present the data release mechanism.

\subsection{Minimum Protectable Graph}
\label{sec-complexity-OG}
It is clear that a protectable graph satisfies the DPHMM condition in Definition \ref{def-DPMC}. Therefore, when information is exposed, we need to build a protectable graph by re-connecting the disconnected nodes so that they have $\textsc{DoP}>1$.
 Next we
 formulate the problem of building a minimum protectable graph and investigate its computational complexity,
 then propose a greedy algorithm to this end.
%

\vspace{2mm}
\noindent{\bf Minimum Protectable Graph.}
Because the error bound of differential privacy is determined by the volume of sensitivity hull $K(\mathcal{G}_t)$ \cite{Geometry-Hardt-STOC10} where $\mathcal{G}_t$ is the graph under constraint at timestamp $t$, the optimal graph should have the minimum volume of $K(\mathcal{G}_t)$ for best utility. We define the optimal graph as follows.

Given the policy graph $G$ and the constraint set $\mathcal{C}_t$ at timestamp $t$, the optimal graph $\widehat{G}_t$ is a graph containing $G\cap\mathcal{C}_t$ with minimum volume $K(\widehat{G}_t)$ under the DPHMM condition (Definition \ref{def-DPMC}):
\begin{align}
\label{eqn-opt-G}
\widehat{G}_t=\mathop{argmin}\limits_{\mathcal{G}}\textsc{Vol}(K(\mathcal{G}))
\end{align}
\vspace{-0.5cm}
\begin{flalign*}
\hspace{1.2cm}
\textrm{ subject to: } &(G\cap\mathcal{C}_t)\subseteq\widehat{G}_t& \\
&\widehat{G}_t \textrm{ satisfies the DPHMM condition}&
\end{flalign*}

\begin{example}[Minimum Protectable Graph]
\label{example-mpg}
Given the query in Example \ref{example-f} and the graph in Figure \ref{Figure-example-03},
Figure \ref{Figure-Gt-3} shows the graph under constraint $\mathcal{C}_t=\{\textbf{s}_3,\textbf{s}_4,\textbf{s}_5,\textbf{s}_6\}$. Then $\textbf{s}_3$ is exposed because $f(\textbf{s}_3)+K$ contains no other node. To satisfy the DPHMM condition, we need to connect $\textbf{s}_3$ to
another node in $\mathcal{C}_t$, i.e. $\textbf{s}_4$, $\textbf{s}_5$ or $\textbf{s}_6$.

If $\textbf{s}_3$ is connected to $\textbf{s}_4$, then Figure \ref{Figure-Gt-4} shows the new graph and its sensitivity hull. By adding two new edges $\{f(\textbf{s}_3)-f(\textbf{s}_4),f(\textbf{s}_4)-f(\textbf{s}_3)\}$ to $\Delta f$, the shaded areas are attached to the sensitivity hull. Similarly, Figures \ref{Figure-Gt-5} and \ref{Figure-Gt-6} show the new sensitivity hulls when $\textbf{s}_3$ is connected to $\textbf{s}_6$ and $\textbf{s}_5$ respectively. Because the smallest $\textsc{Area}(K)$ is in Figure \ref{Figure-Gt-4}, the optimal graph $\widehat{G}_t$ is $G\cap\mathcal{C}_t\cup\overline{\textbf{s}_3\textbf{s}_4}$.
\end{example}

\noindent{\bf Complexity.}
 We can see that to derive the optimal graph, minimum volume $\textsc{Vol}(K)$ should be computed.
 For any query $f:\mathcal{S}\rightarrow\mathbb{R}^d$, $K$ is a polytope in $\mathbb{R}^d$.
 However, the volume computation of polytope is $\#$P-hard \cite{Dyer:1988:CCV:65086.65094}. Thus it follows that the computation of minimum volume is no easier than $\#$P-hard
 \footnote{In low-dimensional space, it is still possible to design fast algorithms. For example, minimum protectable graph can be derived in $O(nm^3)$ time in $2$-dimensional space where $m=|\mathcal{E}|$ is the number of edges and $n$ is the number of exposed nodes.}.
\begin{theorem}
The problem of minimum protectable graph in Equation (\ref{eqn-opt-G}) is $\#$P-hard.
\end{theorem}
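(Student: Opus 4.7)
The plan is to give a Turing reduction from the $\#$P-hard problem of computing the volume of a polytope specified by its vertices \cite{Dyer:1988:CCV:65086.65094} to the minimum protectable graph problem. The guiding intuition is that, even when $G\cap\mathcal{C}_t$ is already the optimum and no edges need to be added, merely \emph{reporting the optimal objective value} requires one to compute $\textsc{Vol}(K(\widehat{G}_t))$. So I can embed an arbitrary volume instance into the sensitivity hull $K$ of a trivially protectable instance, and the volume hardness transfers.

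Concretely, given vectors $v_1,\ldots,v_m\in\mathbb{R}^d$, I would construct a DPHMM instance with state domain $\mathcal{S}=\{\textbf{s}_0,\textbf{s}_1,\ldots,\textbf{s}_m\}$, query $f$ with $f(\textbf{s}_0)=\mathbf{0}$ and $f(\textbf{s}_i)=v_i$ for $i\geq 1$, policy graph $G$ equal to the star with centre $\textbf{s}_0$ and leaves $\textbf{s}_1,\ldots,\textbf{s}_m$, and constraint $\mathcal{C}_t=\mathcal{S}$. Definition \ref{def-shull1} then gives $K(G,f)=\textrm{Conv}(\pm v_1,\ldots,\pm v_m)$. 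The restricted graph $G\cap\mathcal{C}_t=G$ is protectable (every leaf has $\textbf{s}_0$ as a neighbour, and $\textbf{s}_0$ itself satisfies $\textsc{DoP}\geq 2$), and any supergraph $\widehat{G}_t\supsetneq G$ introduces some edge $\overline{\textbf{s}_i\textbf{s}_j}$ with $i,j\geq 1$, appending $\pm(v_i-v_j)$ to $\Delta f$ and hence only enlarging $\textrm{Conv}(\Delta f)$. So the optimum is $\widehat{G}_t=G$, whose objective value is exactly $\textsc{Vol}(\textrm{Conv}(\pm v_1,\ldots,\pm v_m))$.

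Consequently, a polynomial-time algorithm for the minimum protectable graph problem would yield a polynomial-time algorithm for the volume of a centrally symmetric polytope given by vertices. The principal obstacle I anticipate is the central-symmetry constraint: because $\Delta f$ is closed under negation, every sensitivity hull satisfies $K=-K$, and the reduction only directly produces symmetric polytopes. I would remove this obstacle in one of two ways: (i) invoke the known $\#$P-hardness of zonotope (equivalently, centrally symmetric polytope) volume computation, which suffices directly; or (ii) run the reduction a polynomial number of times with translated inputs $\{v_i+c_k\}$ for an appropriate collection of shifts $c_k$ and recover the volume of an arbitrary (not necessarily symmetric) polytope by an inclusion--exclusion over the symmetric hulls, reducing to the general Dyer--Frieze hardness. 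Either resolution converts \cite{Dyer:1988:CCV:65086.65094} into the claimed $\#$P-hardness of the minimum protectable graph problem.
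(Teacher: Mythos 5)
Your high-level strategy --- inheriting $\#$P-hardness from polytope volume computation \cite{Dyer:1988:CCV:65086.65094} --- is exactly the paper's; the paper merely observes that evaluating the objective $\textsc{Vol}(K(\mathcal{G}))$ is $\#$P-hard and declares the minimization ``no easier,'' with no explicit reduction. Your star construction is a genuine attempt to make that rigorous, and the construction itself is sound: with $f(\textbf{s}_0)=\mathbf{0}$, $f(\textbf{s}_i)=v_i$ and the star policy graph, Definition \ref{def-shull1} gives $K(G,f)=Conv(\pm v_1,\dots,\pm v_m)$, the graph $G$ is protectable, and every strict supergraph only adds points to $\Delta f$ and so cannot shrink $K$; hence $G$ is optimal. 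But that is precisely where the reduction breaks. Equation (\ref{eqn-opt-G}) asks for the $\mathop{argmin}$ graph $\widehat{G}_t$, not for the minimum volume, and in every instance you build the answer is the trivial one, $\widehat{G}_t=G\cap\mathcal{C}_t$. A polynomial-time oracle for the problem as stated could simply return $G$ after a protectability check and would reveal nothing about $\textsc{Vol}(Conv(\pm v_1,\dots,\pm v_m))$, so the Turing reduction extracts no information. To reduce to the problem as written you would need instances in which \emph{which edge gets added} encodes a volume comparison, and then recover the volume from polynomially many such comparisons --- a design neither you nor the paper carries out. (If one instead reads the problem as ``output the optimal value,'' which is evidently the paper's reading, this first gap disappears.)

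The second gap is the one you flagged but did not close: every sensitivity hull satisfies $K=-K$, so your reduction only produces volumes of centrally symmetric V-polytopes $Conv(\pm v_1,\dots,\pm v_m)$, and the generic Dyer--Frieze hardness for vertex-presented polytopes does not automatically specialize to this class. Your fix (i) conflates zonotopes with centrally symmetric polytopes: zonotope volume is $\#$P-hard when the zonotope is given by its \emph{generators} as a Minkowski sum of segments, which is neither the representation your construction produces nor the same class of bodies --- $Conv(\pm v_1,\dots,\pm v_m)$ is generally not a zonotope, and listing a zonotope's vertices can blow up exponentially. Your fix (ii) is a one-line sketch whose correctness is unclear, since volumes of convex hulls of unions do not obey an inclusion--exclusion of that form. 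So the symmetry obstacle remains open in your write-up. To be fair, the paper's own two-sentence argument suffers from both defects and more (it must also be reconciled with the paper's exact $O(nm^3)$ algorithm in fixed dimension $d=2$, so any honest hardness claim requires $d$ to grow); your proposal is more self-aware, but it is not yet a proof.
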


\noindent{\bf Greedy Algorithm.}
Due to the computational complexity, we propose a greedy algorithm similar to minimum spanning tree. The idea is to connect each disconnected node to its nearest (in measurement space) node.
For other theoretical algorithms of volume computation with polynomial time bound, please see \cite{Vempala05geometricrandom}.
Algorithm \ref{alg-Gt-connected} shows the greedy algorithm, which takes
$O(N^2)$ time where $N=|\mathcal{V}|$ is the number of nodes.
\begin{algorithm}[htb]
\caption{Protectable Graph}
\begin{algorithmic}[1]
\Require{
$G$, $\mathcal{C}_t$, $f$
}
\State{$\mathcal{G}_t\gets G\cap\mathcal{C}_t$;}
\ForAll{exposed node $\textbf{s}_i\in\mathcal{C}_t$}
\State{$\textbf{s}_j\gets \mathop{argmin}\limits_{\textbf{s}\in\mathcal{C}_t}||f(\textbf{s})-f(\textbf{s}_i)||_2$;}
\State{$\mathcal{G}_t\gets\mathcal{G}_t\cup\overline{\textbf{s}_i\textbf{s}_j}$;}
\Comment{{\tt \scriptsize connect to nearest node}}
\EndFor
\\\Return{protectable graph $\mathcal{G}_t$;}
\end{algorithmic}
\label{alg-Gt-connected}
\end{algorithm}

\subsection{Data Release Mechanism}
\label{subsec-framework}

The data release mechanism is shown in Algorithm \ref{alg-framework}.
At each timestamp $t$, we compute the prior probability vector $\textbf{p}_t^-$.
Under the constraint $\mathcal{C}_t$,
the graph $G$ becomes a subgraph $G \cap \mathcal{C}_t$.
To satisfy the DPHMM condition, we derive a protectable graph $\mathcal{G}_t$ by Algorithm \ref{alg-Gt-connected}.
Next a differentially private mechanism can be adopted to release a perturbed answer $\textbf{z}_t$.
Then the released $\textbf{z}_t$ will also be used to update the posterior probability $\textbf{p}_t^+$ (in the equation below) by Equation (\ref{eqn-posterior}), which subsequently will be used to compute the prior probability for the next timestamp $t+1$.
\begin{align*}
\textbf{p}_t^+[i]=Pr(\textbf{s}_t^*=\textbf{s}_i|\textbf{z}_t,\textbf{z}_{t-1},\cdots,\textbf{z}_1)
\end{align*}

%

\begin{algorithm}[htb]
\caption{Data Release Mechanism}
\begin{algorithmic}[1]
\Require{
$\epsilon_t$, $G$, $f$, $\textbf{M}$, $\textbf{p}_{t-1}^+$, $\textbf{s}_t^*$
}
\State{$\textbf{p}_{t}^-\gets\textbf{p}_{t-1}^+\textbf{M}$;}
\Comment{{\tt \scriptsize Markov transition}}
\State{$\mathcal{C}_t\gets\{\textbf{s}_i|\textbf{p}_t^-[i]>0\}$;}
\Comment{{\tt \scriptsize constraint}}
\State{$\mathcal{G}_t \gets$ \Call{Algorithm \ref{alg-Gt-connected}}{$G$, $\mathcal{C}_t$, $f$};}
\Comment{{\tt \scriptsize protectable graph $\mathcal{G}_t$}}
\State{$\textbf{z}_t\gets$ $K$-norm based mechanism($f(\textbf{s}_t^*)$, $K(\mathcal{G}_t)$);}
\label{line-releasing}
\State{Derive $\textbf{p}_t^+$ by Equation (\ref{eqn-posterior});}
\Comment{{\tt \scriptsize inference}}
\label{line-bayesian}
\LineComment{{\tt \scriptsize go to next timestamp}}
\\\Return{\Call{Algorithm \ref{alg-framework}}{$\epsilon_{t+1}$, $G$, $f$, $\textbf{M}$, $\textbf{p}_{t}^+$, $\textbf{s}_{t+1}^*$}};
\end{algorithmic}
\label{alg-framework}
\end{algorithm}
Note that in line \ref{line-releasing} $\textbf{z}_t$ can be released by 
either Laplace mechanism or $K$-norm mechanism. For simplicity, we use $K$-norm mechanism as a unifying mechanism (Theorem \ref{theo-lap-K}).

\begin{theorem}
Given policy graph $G$ and query $f$,
Algorithm \ref{alg-framework} satisfies $\{\epsilon,G,\mathcal{C}_t\}$-DPHMM
at any timestamp.
\end{theorem}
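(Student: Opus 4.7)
The plan is to verify the two cases of the DPHMM condition (Definition 4.3) directly for the output $\textbf{z}_t$ produced by Algorithm 2 at any timestamp $t$, relying on the $K$-norm mechanism invoked at line 4 together with the construction of the protectable graph.

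First I would argue two structural properties of $\mathcal{G}_t = $ \textsc{Algorithm 1}$(G,\mathcal{C}_t,f)$: (i) $\mathcal{G}_t \supseteq G\cap\mathcal{C}_t$ by construction, since line 1 of Algorithm 1 initializes $\mathcal{G}_t$ to $G\cap\mathcal{C}_t$ and subsequently only \emph{adds} edges; and (ii) every node of $\mathcal{C}_t$ has at least one neighbor in $\mathcal{G}_t$. Nodes already possessing an edge in $G\cap\mathcal{C}_t$ retain it by (i); for each remaining disconnected node the greedy step explicitly attaches an edge to a nearest node in $\mathcal{C}_t$.

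Next I would invoke the key algebraic fact about the $K$-norm mechanism. By Definition 5.1, for every edge $\overline{\textbf{s}_j\textbf{s}_k}$ of $\mathcal{G}_t$ the vector $f(\textbf{s}_j)-f(\textbf{s}_k)$ is a vertex, hence an element, of the sensitivity hull $K(\mathcal{G}_t)$; consequently $\|f(\textbf{s}_j)-f(\textbf{s}_k)\|_{K(\mathcal{G}_t)} \leq 1$. Combining this with the triangle inequality for the Minkowski norm gives, for any output $\textbf{z}_t$,
\begin{equation*}
\frac{Pr(\mathcal{A}(\textbf{s}_j)=\textbf{z}_t)}{Pr(\mathcal{A}(\textbf{s}_k)=\textbf{z}_t)}
= \exp\!\left(\epsilon\bigl(\|\textbf{z}_t-f(\textbf{s}_k)\|_{K(\mathcal{G}_t)}-\|\textbf{z}_t-f(\textbf{s}_j)\|_{K(\mathcal{G}_t)}\bigr)\right)
\leq e^{\epsilon},
\end{equation*}
with the matching lower bound $e^{-\epsilon}$ by symmetry. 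This is the standard derivation used for the $K$-norm mechanism specialized to the pair $(\textbf{s}_j,\textbf{s}_k)$.

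Finally I would match this bound to the two branches of Definition 4.3. If $\textbf{s}_j$ is connected in $G\cap\mathcal{C}_t$, then by (i) every $\textbf{s}_k\in\mathcal{N}(\textbf{s}_j)\cap\mathcal{C}_t$ is still a neighbor of $\textbf{s}_j$ in $\mathcal{G}_t$, so the bound above instantly gives the required universal clause. If $\textbf{s}_j$ is disconnected in $G\cap\mathcal{C}_t$, property (ii) guarantees that Algorithm 1 inserted an edge $\overline{\textbf{s}_j\textbf{s}_k}$ for some $\textbf{s}_k\in\mathcal{C}_t$, and the same bound witnesses the existential clause for this $\textbf{s}_k$. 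The prior/posterior evolution in lines 1 and 5 does not affect the per-step argument, so the conclusion extends to every timestamp.

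The only subtlety I would watch for is degeneracy: if $K(\mathcal{G}_t)$ fails to be full-dimensional in $\mathbb{R}^d$, the $K$-norm density in Equation (1) must be interpreted on the affine span of $K(\mathcal{G}_t)$. This is routine, and the essential step — the $K$-norm triangle inequality applied to a pair whose difference sits on the boundary of the hull — is the one substantive ingredient; everything else is a bookkeeping check that Algorithm 1's output fits the case analysis of the DPHMM definition.
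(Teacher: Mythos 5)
Your argument is essentially the intended one (the paper states this theorem without printing a proof, but the $K$-norm triangle-inequality bound combined with the case analysis of Definition \ref{def-DPMC} is clearly what is meant), and the core step --- $\|f(\textbf{s}_j)-f(\textbf{s}_k)\|_{K(\mathcal{G}_t)}\leq 1$ for any pair whose difference lies in the hull, hence the ratio bound $e^{\pm\epsilon}$ --- is correct. One inaccuracy: your structural property (ii) misdescribes Algorithm \ref{alg-Gt-connected}. Its loop runs over \emph{exposed} nodes (those with $\textsc{DoP}=1$), not over all disconnected nodes, so a node that is disconnected in $G\cap\mathcal{C}_t$ but already has $\textsc{DoP}>1$ (e.g.\ $\textbf{s}_2$ in Figure \ref{Figure-Gt-2}) receives no new edge and remains isolated in $\mathcal{G}_t$. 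Your proof survives this because what you actually need for the existential clause is not an edge but the existence of some $\textbf{s}_k\in\mathcal{C}_t$ with $f(\textbf{s}_k)-f(\textbf{s}_j)\in K(\mathcal{G}_t)$, i.e.\ $\|f(\textbf{s}_k)-f(\textbf{s}_j)\|_{K(\mathcal{G}_t)}\leq 1$; this is exactly the guarantee $\textsc{DoP}(\textbf{s}_j,K(\mathcal{G}_t))>1$ that Algorithm \ref{alg-Gt-connected} enforces for every node, whether the witness comes from a newly added edge or from the hull already covering another state. You should restate (ii) in terms of $\textsc{DoP}>1$ (protectability, Definition \ref{def-protectable-G}) rather than connectivity; with that repair the proof is complete, and your closing remark about interpreting the $K$-norm density on the affine span of a degenerate hull is a legitimate technicality the paper also glosses over.
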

%
\begin{theorem}
Given policy graph $G$ and query $f$,
at any timestamp $t$,
Algorithm \ref{alg-framework} satisfies\\ $\left\{\left(\mathop{max}\limits_{\forall \textbf{s}_j,\textbf{s}_k\in\mathcal{C}_t}||f(\textbf{s}_j)-f(\textbf{s}_k)||_{K_t}\right)\epsilon,G,\mathcal{C}_t\right\}\textrm{-}constrained$DP (Definition \ref{def-CDP}) where $\mathcal{C}_t$ is the constraint, $K_t$ is the sensitivity hull of query $f$ and protectable graph $\mathcal{G}_t$.
\end{theorem}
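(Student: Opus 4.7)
The plan is to reduce the statement to a direct computation on the probability ratio produced by the $K$-norm mechanism in line \ref{line-releasing} of Algorithm \ref{alg-framework}, and then invoke case (1) of Definition \ref{def-CDP}. Since Theorem \ref{theo-lap-K} lets us treat Laplace as a special case of $K$-norm, it suffices to handle the $K$-norm mechanism with $K=K_t$, where $K_t=K(\mathcal{G}_t)$ is the sensitivity hull of the protectable graph constructed by Algorithm \ref{alg-Gt-connected}. First, I would fix an arbitrary output $\textbf{z}$ and an arbitrary pair $\textbf{s}_j,\textbf{s}_k\in\mathcal{C}_t$, and write out the ratio using the density in Equation (\ref{eqn-pdf-K-Norm}):
\begin{align*}
\frac{Pr(\mathcal{A}(\textbf{s}_j)=\textbf{z})}{Pr(\mathcal{A}(\textbf{s}_k)=\textbf{z})}
=\exp\!\Big(\epsilon\bigl(\|\textbf{z}-f(\textbf{s}_k)\|_{K_t}-\|\textbf{z}-f(\textbf{s}_j)\|_{K_t}\bigr)\Big).
\end{align*}

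Next I would apply the triangle inequality for the Minkowski norm $\|\cdot\|_{K_t}$, which is a bona fide seminorm because $K_t$ is a symmetric convex body containing the origin in its interior (symmetry follows since $\Delta f$ in Definition \ref{def-shull1} is closed under negation, and the interior property is guaranteed by the protectability of $\mathcal{G}_t$). This gives
\begin{align*}
\|\textbf{z}-f(\textbf{s}_k)\|_{K_t}-\|\textbf{z}-f(\textbf{s}_j)\|_{K_t}
\leq \|f(\textbf{s}_j)-f(\textbf{s}_k)\|_{K_t},
\end{align*}
so the ratio is bounded by $\exp\!\bigl(\epsilon\,\|f(\textbf{s}_j)-f(\textbf{s}_k)\|_{K_t}\bigr)$. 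Taking the supremum over all $\textbf{s}_j,\textbf{s}_k\in\mathcal{C}_t$ yields exactly the bound required by case (1) of Definition \ref{def-CDP} with privacy budget $\bigl(\max_{\textbf{s}_j,\textbf{s}_k\in\mathcal{C}_t}\|f(\textbf{s}_j)-f(\textbf{s}_k)\|_{K_t}\bigr)\epsilon$. Since the same argument applies to any output $\textbf{z}$ and the bound is independent of $\textbf{z}$, the mechanism satisfies $\{\epsilon',G,\mathcal{C}_t\}$-$constrained$DP with the stated $\epsilon'$.

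The main obstacle I anticipate is not the algebraic manipulation but ensuring $\|\cdot\|_{K_t}$ is well-defined on every difference $f(\textbf{s}_j)-f(\textbf{s}_k)$ when $\textbf{s}_j,\textbf{s}_k$ are not adjacent in $\mathcal{G}_t$. This is precisely where the protectability of $\mathcal{G}_t$ (Definition \ref{def-protectable-G}), guaranteed by Algorithm \ref{alg-Gt-connected}, plays its role: because every node has $\textsc{DoP}>1$ in $\mathcal{G}_t$, the hull $K_t$ spans $\mathbb{R}^d$ and its Minkowski functional is finite on all of $\mathbb{R}^d$, so the maximum in the exponent is finite. A brief remark can also tie this back to Theorem \ref{thm-Blowfish-DP}(1), of which the present statement is essentially the DPHMM-timestamp instantiation, obtained by substituting the constraint $\mathcal{C}_t$ and the sensitivity hull $K_t$ of the (augmented) protectable graph in place of the Blowfish policy graph.
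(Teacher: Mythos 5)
Your proposal is correct and follows essentially the same route the paper implicitly relies on: the theorem is stated without a separate proof, but the identical ratio computation with the $K$-norm density followed by the triangle inequality for the Minkowski functional is exactly what the paper carries out in the appendix's worked computation for Example \ref{example-Blowfish1} and what underlies Theorem \ref{thm-Blowfish-DP}(1). The only small overclaim is your assertion that protectability of $\mathcal{G}_t$ forces $K_t$ to span $\mathbb{R}^d$ — having $\textsc{DoP}>1$ at every node does not rule out a degenerate (lower-dimensional) hull, in which case $||f(\textbf{s}_j)-f(\textbf{s}_k)||_{K_t}$ may be infinite for some non-adjacent pair and the stated bound holds only vacuously; this does not invalidate the theorem, but the finiteness argument should not be attributed to protectability.
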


\section{Privacy Composition}
In some cases, multiple queries need to be answered.
Thus we analyze the privacy composition for multiple data releases. Note that the parallel composition \cite{McSherry-PINQ} is not applicable because there is only one state in Markov model.

\vspace{2mm}
\noindent{\bf Single-Time Multiple-Queries. }
At one timestamp, it is possible that many queries should be answered. Then the privacy cost $\epsilon$ composes for all queries.
\begin{theorem}
At timestamp $t$, an $\{\epsilon,G,\mathcal{C}_t\}$-DPHMM mechanism released multiple answers $\textbf{z}_1,\textbf{z}_2,\cdots,\textbf{z}_n$ for queries $f_1,f_2,\cdots,f_n$ with $\epsilon_1,\epsilon_2,\cdots,\epsilon_n$, then it satisfies $\{\sum_{i=1}^n \epsilon_i,G,\mathcal{C}_t\}$-DPHMM and
$\left\{\mathop\sum\limits_{i=1}^n\left(\mathop{max}\limits_{\forall \textbf{s}_j,\textbf{s}_k\in\mathcal{C}_t}||f(\textbf{s}_j)-f(\textbf{s}_k)||_{K_i}\right)\epsilon_i,G,\mathcal{C}_t\right\}$-\\$constrained$DP
where $K_i$ denotes the sensitivity hull of $f_i$.
\end{theorem}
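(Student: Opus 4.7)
The plan is to reduce this to standard sequential composition, adapted to the DPHMM setting, and then compose the per-query $constrained$DP bounds from Theorem~\ref{thm-Blowfish-DP} additively. Throughout, I will condition on the true state $\textbf{s}_t^*$ and use that the noise injected for each query $f_i$ is drawn independently (the $K$-norm samples across queries are independent given the input).

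First I would establish part (1). Let $\mathcal{A}(\textbf{s})=(\mathcal{A}_1(\textbf{s}),\dots,\mathcal{A}_n(\textbf{s}))$ denote the joint release mechanism, where each $\mathcal{A}_i$ is the $\{\epsilon_i,G,\mathcal{C}_t\}$-DPHMM mechanism for $f_i$. For any output tuple $\textbf{z}=(\textbf{z}_1,\dots,\textbf{z}_n)$ and any pair of states $\textbf{s}_j,\textbf{s}_k\in\mathcal{C}_t$ that are required to be indistinguishable by Definition~\ref{def-DPMC}, I would write
\begin{align*}
\frac{\Pr(\mathcal{A}(\textbf{s}_j)=\textbf{z})}{\Pr(\mathcal{A}(\textbf{s}_k)=\textbf{z})}
=\prod_{i=1}^{n}\frac{\Pr(\mathcal{A}_i(\textbf{s}_j)=\textbf{z}_i)}{\Pr(\mathcal{A}_i(\textbf{s}_k)=\textbf{z}_i)}
\leq \prod_{i=1}^{n} e^{\epsilon_i} = e^{\sum_{i=1}^{n}\epsilon_i},
\end{align*}
where the factorization uses independence of the $n$ noise draws and the inequality applies the per-query DPHMM guarantee. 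The analogous lower bound gives the two-sided ratio, establishing $\{\sum_i\epsilon_i,G,\mathcal{C}_t\}$-DPHMM. One subtlety worth flagging is that the protectable supergraph $\mathcal{G}_t$ built in Algorithm~\ref{alg-Gt-connected} depends on the query, so strictly speaking each $\mathcal{A}_i$ uses a potentially different protectable graph $\mathcal{G}_{t,i}\supseteq G\cap\mathcal{C}_t$; but since each $\mathcal{G}_{t,i}$ still satisfies the DPHMM condition with respect to $G$ and $\mathcal{C}_t$, the per-query bound $e^{\epsilon_i}$ remains valid and the argument goes through.

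For part (2), I would apply Theorem~\ref{thm-Blowfish-DP} (the Blowfish Protection theorem, Markov-model case) to each individual mechanism $\mathcal{A}_i$. This yields, for every $i$, that $\mathcal{A}_i$ satisfies
\begin{align*}
\Bigl\{\bigl(\max_{\textbf{s}_j,\textbf{s}_k\in\mathcal{C}_t}\|f_i(\textbf{s}_j)-f_i(\textbf{s}_k)\|_{K_i}\bigr)\epsilon_i,\,G,\,\mathcal{C}_t\Bigr\}\text{-}constrained\mathrm{DP}.
\end{align*}
Then I would repeat the same factorization argument as in part (1), but now applied to arbitrary $\textbf{s}_j,\textbf{s}_k\in\mathcal{C}_t$ (not only those adjacent in $G$), using the per-query $constrained$DP constant in place of $\epsilon_i$. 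Summing the exponents gives exactly the claimed bound.

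The main obstacle I anticipate is the bookkeeping around the query-dependent protectable graphs $\mathcal{G}_{t,i}$ and the corresponding sensitivity hulls $K_i=K(\mathcal{G}_{t,i},f_i)$: I need to be sure that using a different graph per query in the composition does not violate the shared policy $G$ and constraint $\mathcal{C}_t$ appearing in the final statement. This is handled by the observation that each $\mathcal{G}_{t,i}$ is a supergraph of $G\cap\mathcal{C}_t$ and is constructed only to add edges among nodes already in $\mathcal{C}_t$, so every indistinguishability requirement of $\{G,\mathcal{C}_t\}$ is preserved. Apart from this, the proof is purely a sequential-composition argument and requires no new machinery beyond the tools already developed in the paper.
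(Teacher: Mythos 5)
Your proposal follows what is clearly the intended argument: the paper states this composition theorem without an explicit proof, and standard sequential composition --- factorizing the joint output probability over the $n$ independent noise draws and multiplying the per-query bounds --- is the right machinery for both parts. Part (2) in particular is handled correctly: you apply the per-query $constrained$DP guarantee (Theorem \ref{thm-Blowfish-DP}) with the multiplier $\left(\max_{\textbf{s}_j,\textbf{s}_k\in\mathcal{C}_t}||f_i(\textbf{s}_j)-f_i(\textbf{s}_k)||_{K_i}\right)\epsilon_i$, and since $constrained$DP quantifies \emph{universally} over all pairs in $\mathcal{C}_t$, the product of the per-query bounds telescopes into the claimed sum of exponents with no further issues.

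The one place where your argument does not fully close is the disconnected-node branch of Definition \ref{def-DPMC}, which you flag but then dismiss too quickly. For a node $\textbf{s}_j$ that is disconnected in $G\cap\mathcal{C}_t$, each $\mathcal{A}_i$ only guarantees that \emph{there exists some} witness $\textbf{s}_{k_i}\in\mathcal{C}_t$ with the ratio bounded by $e^{\epsilon_i}$, and because the protectable graph produced by Algorithm \ref{alg-Gt-connected} depends on the measurement query (the ``nearest node'' is computed in the measurement space of $f_i$), these witnesses can differ across $i$. Your factorization must then be instantiated at a single pair $(\textbf{s}_j,\textbf{s}_k)$, and fixing $\textbf{s}_k=\textbf{s}_{k_1}$ the factor for query $i=2$ may be unbounded --- indeed the denominator can be zero for a $K$-norm mechanism, exactly as in Example \ref{example-exposure}. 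So the assertion that ``the per-query bound $e^{\epsilon_i}$ remains valid and the argument goes through'' is not correct as stated; it requires the additional hypothesis that all $n$ releases share a common protectable graph $\mathcal{G}_t$ (or at least common witness edges for every disconnected node). That reading is consistent with the theorem's phrasing of ``an $\{\epsilon,G,\mathcal{C}_t\}$-DPHMM mechanism'' as a single mechanism, but it should be made explicit; with that assumption in place, both parts of your proof are sound.
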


\noindent{\bf Multiple-Time Single-Query. }
If a query was answered over multiple timestamps, then the privacy protection has to be enforced on the sequence.
Under the probabilistic constraint,
 we define differentially private sequence with all possible sequences.
\begin{definition}
A constraint set of sequences
 $\mathcal{Q}=\{\textbf{Q}_1,\\\textbf{Q}_2,\cdots,\textbf{Q}_n\}$ is a set of $n$ possible sequences with $Pr(\textbf{Q}_i)>0$ for all $ \textbf{Q}_i\in\mathcal{S}^t$, $i=1,2,\cdots,n$.
\end{definition}

\begin{definition}[$\{\epsilon,\mathcal{Q}\}$-$Constrained$DPS]
During timestamps $1,2,\cdots,t$ in an HMM,  a randomized mechanism $\mathcal{A}()$ generates $\{\epsilon,\mathcal{Q}\}$-$Constrained$DPS if for any output sequence $\textbf{z}_1,\textbf{z}_2,\cdots,\textbf{z}_t$ and any possible sequences $\textbf{Q}_j$ and $\textbf{Q}_k$ in $\mathcal{Q}$, the following holds
\begin{align*}
\frac{Pr\left(\mathcal{A}(\textbf{Q}_j)=(\textbf{z}_1,\textbf{z}_2,\cdots,\textbf{z}_t)\right)}
{Pr\left(\mathcal{A}(\textbf{Q}_k)=(\textbf{z}_1,\textbf{z}_2,\cdots,\textbf{z}_t)\right)}
\leq e^{\epsilon}
\end{align*}
\end{definition}
\begin{theorem}
During timestamps $i=1,2,\cdots,t$ in an $\{\epsilon_i,G,\mathcal{C}_i\}$-DPHMM with policy graph $G$ and constraints
$\mathcal{C}_i=\{\textbf{Q}_j[i]|\forall \textbf{Q}_j\in\mathcal{Q}\}$, 
the released sequence $\textbf{z}_1,\textbf{z}_2,\cdots,\textbf{z}_t$ for a query $f$ satisfies $\left\{\mathop\sum\limits_{i=1}^t\left(\mathop{max}\limits_{\forall \textbf{s}_j,\textbf{s}_k\in\mathcal{C}_i}||f(\textbf{s}_j)-f(\textbf{s}_k)||_{K_i}\right)\epsilon_i,\mathcal{Q}\right\}$-\\$constrained$DPS where $K_i$ denotes the sensitivity hull at timestamp $i$.
\end{theorem}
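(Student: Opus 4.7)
The plan is to reduce the sequence-level claim to a product of single-timestamp bounds via the conditional independence structure of the HMM, and then invoke the preceding constrained DP theorem at each timestamp.

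First I would expand the joint output probability using the HMM factorization. Because in Algorithm \ref{alg-framework} the released $\textbf{z}_i$ depends on the true state $\textbf{Q}_j[i]$ and on the previously released answers only through $\textbf{p}_i^-$ (which is a deterministic function of $\textbf{z}_{<i}=(\textbf{z}_1,\ldots,\textbf{z}_{i-1})$), I can write
\begin{align*}
Pr\bigl(\mathcal{A}(\textbf{Q}_j)=(\textbf{z}_1,\ldots,\textbf{z}_t)\bigr)
=\prod_{i=1}^{t} Pr\bigl(\textbf{z}_i\,\bigm|\,\textbf{Q}_j[i],\textbf{z}_{<i}\bigr).
\end{align*}
The same factorization holds for $\textbf{Q}_k$. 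Since I am comparing the two sequences against the \emph{same} output $(\textbf{z}_1,\ldots,\textbf{z}_t)$, the conditioning history $\textbf{z}_{<i}$ is common to both factorizations, so the mechanism's noise distribution at each timestamp is identical in numerator and denominator.

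Second, I would bound each factor separately. Given the common history, the mechanism at timestamp $i$ is an $\{\epsilon_i,G,\mathcal{C}_i\}$-DPHMM instance, and by the preceding single-timestamp \textit{constrained}DP theorem it satisfies
\begin{align*}
\frac{Pr(\textbf{z}_i\mid \textbf{Q}_j[i],\textbf{z}_{<i})}{Pr(\textbf{z}_i\mid \textbf{Q}_k[i],\textbf{z}_{<i})}
\leq \exp\!\left(\Bigl(\max_{\textbf{s}_j,\textbf{s}_k\in\mathcal{C}_i}\|f(\textbf{s}_j)-f(\textbf{s}_k)\|_{K_i}\Bigr)\epsilon_i\right),
\end{align*}
provided that $\textbf{Q}_j[i],\textbf{Q}_k[i]\in\mathcal{C}_i$. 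This is exactly where the theorem's choice $\mathcal{C}_i=\{\textbf{Q}_\ell[i]\mid \textbf{Q}_\ell\in\mathcal{Q}\}$ is used: every state appearing at position $i$ in some positive-probability sequence is forced into the constraint, so both $\textbf{Q}_j[i]$ and $\textbf{Q}_k[i]$ are in $\mathcal{C}_i$ and the per-timestamp bound applies.

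Finally I would take the product over $i=1,\ldots,t$; the per-timestamp exponents add, producing exactly the claimed bound $\sum_{i=1}^{t}\bigl(\max_{\textbf{s}_j,\textbf{s}_k\in\mathcal{C}_i}\|f(\textbf{s}_j)-f(\textbf{s}_k)\|_{K_i}\bigr)\epsilon_i$ in the exponent, which matches the $\{\cdot,\mathcal{Q}\}$-$constrained$DPS definition. The main subtlety, rather than heavy calculation, is the \emph{adaptive} dependence of $K_i$ on history: $K_i$ is the sensitivity hull of the protectable graph $\mathcal{G}_i$ built from $\mathcal{C}_i$, and $\mathcal{C}_i$ in the algorithm is determined from $\textbf{z}_{<i}$, whereas the theorem treats $\mathcal{C}_i$ as the a priori set $\{\textbf{Q}_\ell[i]\}$. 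I would justify that the algorithm's actual $\mathcal{C}_i$ is always a subset of this a priori set (because any state reachable with positive posterior along some realized trajectory arises from some $\textbf{Q}_\ell\in\mathcal{Q}$), so the constrained DP guarantee at timestamp $i$ indeed covers all pairs in $\{\textbf{Q}_\ell[i]\}$ with the $K_i$ stated in the theorem. Once this containment is established, the composition is a direct product bound.
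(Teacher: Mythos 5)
Your argument is correct and is the standard sequential-composition proof one would expect here; the paper states this theorem without supplying a proof, so there is nothing to contrast it with. The two points you single out --- that the history $\textbf{z}_{<i}$ alone determines $\mathcal{C}_i$, $\mathcal{G}_i$ and $K_i$, so these objects are identical in numerator and denominator, and that every $\textbf{Q}_j[i]$ lies in the constraint so the per-timestamp $constrained$DP bound applies to the pair $(\textbf{Q}_j[i],\textbf{Q}_k[i])$ --- are exactly the steps that need justification, and your factorization-then-product structure handles them correctly.
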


Above compositions can be combined for the case of multiple-time and multiple-queries data releases. This completes our analysis of privacy composition over time.

\section{Empirical Evaluation}
We report the experimental evaluation in this section. All algorithms were implemented in Matlab on a PC with 2.4GHz CPU and 4GB memory.

\vspace{2mm}
\noindent{\bf Datasets.} We used 
the following two datasets with similar configurations in \cite{LocPriv14-arXiv} for comparison purpose.
The Markov models were learned from the raw data.
From each dataset, $20$ sequences, each of which contains $100$ timestamps, were selected for our experiment. Then the average result is reported.
\begin{itemize}
\item
Geolife dataset. Geolife dataset \cite{GeoLife-Zheng-10} recorded a wide range of users' outdoor movements, represented by a series of tuples containing latitude, longitude and timestamp.
We extracted all the trajectories within the $3$rd ring of Beijing to learn the Markov model, with the map partitioned into cells of  $0.34\times 0.34\ {km}^2$.
\item
Gowalla dataset. Gowalla dataset \cite{KDD-Gowalla-2011} contains $6,442,890$ check-in locations of $196,586$ users over 20 months. We extracted all the check-ins in Los Angeles to train the Markov model, with the map partitioned into cells of $0.89\times 0.89\ {km}^2$.
\end{itemize}

\noindent{\bf Mechanisms.} 
For better utility, we used the planar isotropic mechanism in \cite{LocPriv14-arXiv} (with $\delta=0.01$) to release the locations of users.
We denote our privacy notion and \cite{LocPriv14-arXiv} by DPMM and DPLS \footnote{Differential privacy on location set.} respectively.
Because Laplace mechanism provides no better utility than $K$-norm based mechanism, proved in Corollary \ref{coro-lap-K}, we skipped the evaluation of Laplace mechanism. The default value of $\epsilon$ is $1$ if not mentioned.

\vspace{2mm}
\noindent{\bf Application.}
For location data, a common application is to release the location coordinates. Thus we use the measurement query $f:\mathcal{S}\rightarrow\mathbb{R}^2$ that returns a $2\times 1$ vector of longitude and latitude.

Two policy graphs were adopted in our experiments: utility-oriented $G_{util}$ and privacy-oriented $G_{trs}$, as defined in Section \ref{sec-policy-graph}.
\begin{itemize}
\item
$G_{util}$ connects all nodes if their distances of locations are less than $r$;
\vspace{-1mm}
\item
$G_{trs}$ guarantees that even if the previous states were completely exposed, privacy can still be protected in the current timestamp.
\end{itemize}
Because of different customizations of the two graphs, we can examine the different results of them. In $G_{util}$, the default values of $r$ for GeoLife and Gowalla are $1(km)$ and $2(km)$ respectively.

\vspace{2mm}
\noindent{\bf Metrics.}
We used the following metrics in our experiment.
\begin{itemize}
\item
To measure the efficiency, the runtime of data release method was evaluated.
\vspace{-1mm}
\item
\textsc{DoP} represents the number of nodes that a node is hidden in. Hence
to reflect the privacy level of true states,
\textsc{DoP} of true states was computed.
\vspace{-1mm}
\item
The utility of DPHMM was measured by $\textsc{Error}=\\||\textbf{z}_t-f(\textbf{s}_t^*)||_2$ where $\textbf{z}_t$ is the released answer and $f(\textbf{s}_t^*)$ is the true answer.
\end{itemize}

\begin{figure}[t]
\centering
\begin{subfigure}{0.233\textwidth}
\centering
\includegraphics[width=4.2cm]{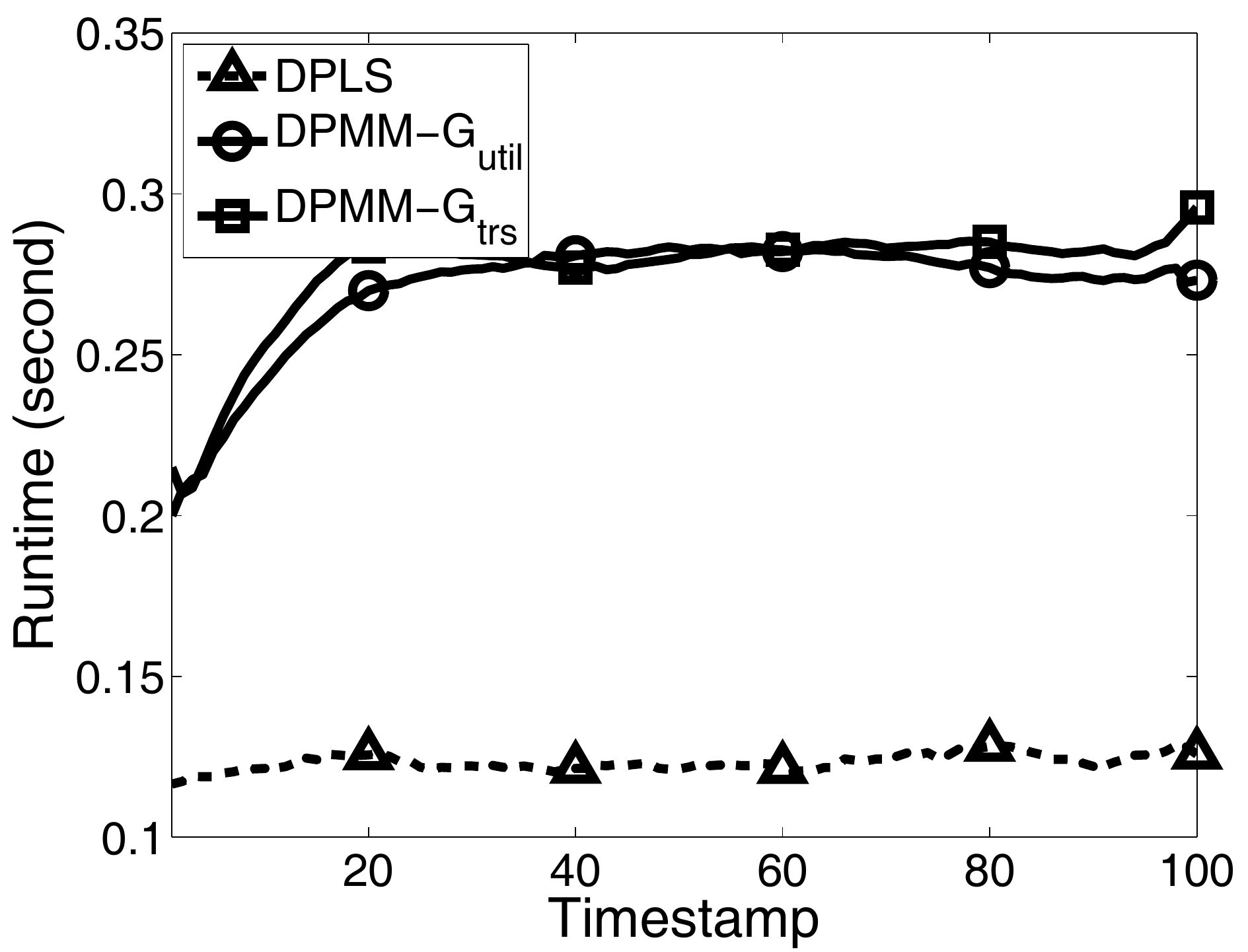}
\caption{{\small Runtime on GeoLife}}
\label{Fig-runtime-GeoLife}
\end{subfigure}
\begin{subfigure}{0.233\textwidth}
\centering
\includegraphics[width=4.2cm]{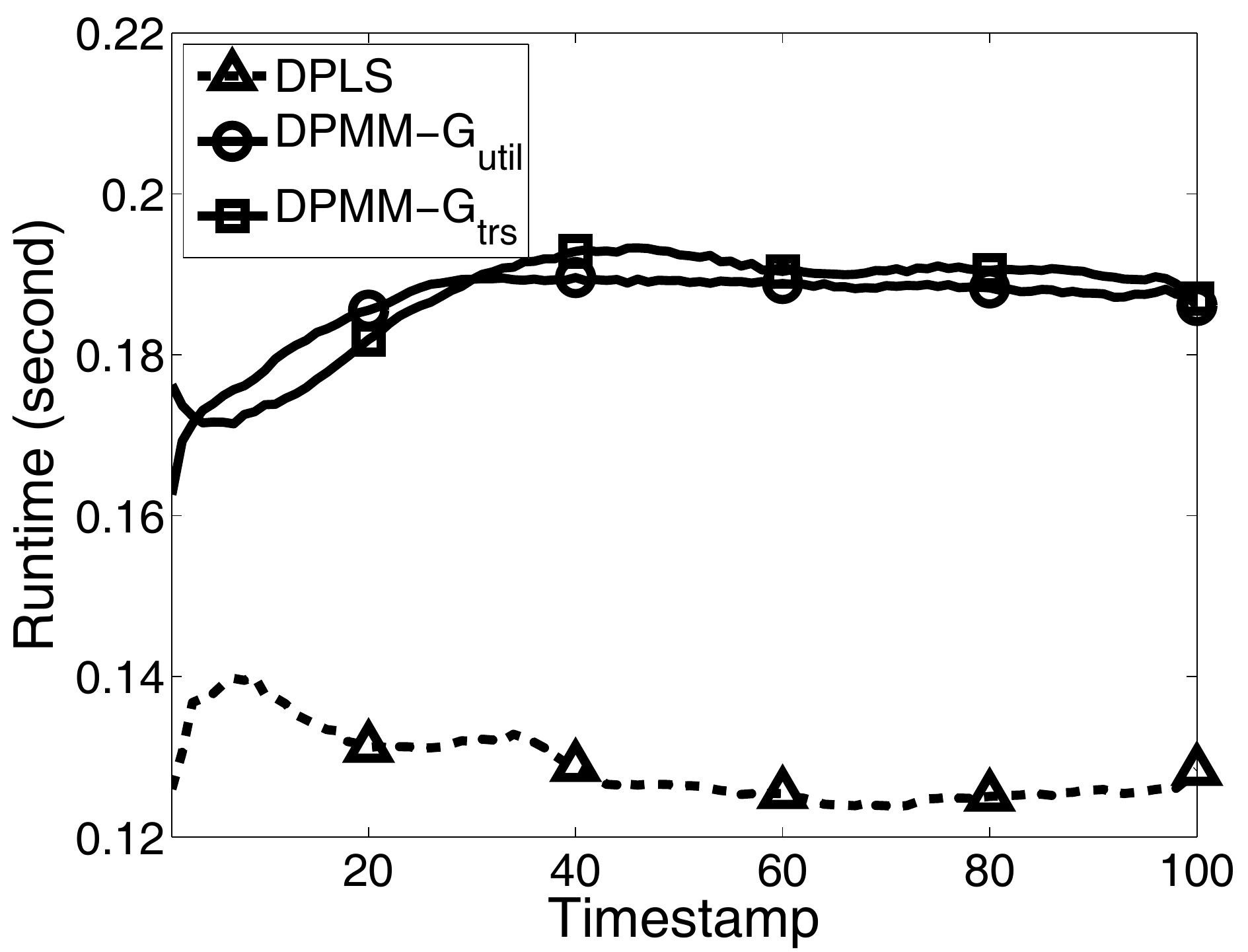}
\caption{{\small Runtime on Gowalla}}
\label{Fig-runtime-Gowalla}
\end{subfigure}
\caption{{\small Runtime.}}
\label{Fig-runtime-all}
\end{figure}
\subsection{Runtime}
Figure \ref{Fig-runtime-all} shows the runtime report on the two datasets. We can see that the runtime of DPHMM, either with $G_{util}$ or $G_{trs}$, is a little bit longer than DPLS. The reason is that
DPLS uses a tighter constraint than $\mathcal{C}_t$, which in our setting became numerous when Markov model converged to a stationary distribution gradually. Then the computation of sensitivity hull took more time with larger graph.
It is also worth noting that sensitivity hull converges with $\mathcal{C}_t$.
As time evolves, the runtime also converges with Markov model to a stable level.

\begin{figure}[!t]
\centering
\begin{subfigure}{0.233\textwidth}
\centering
\includegraphics[width=4.2cm]{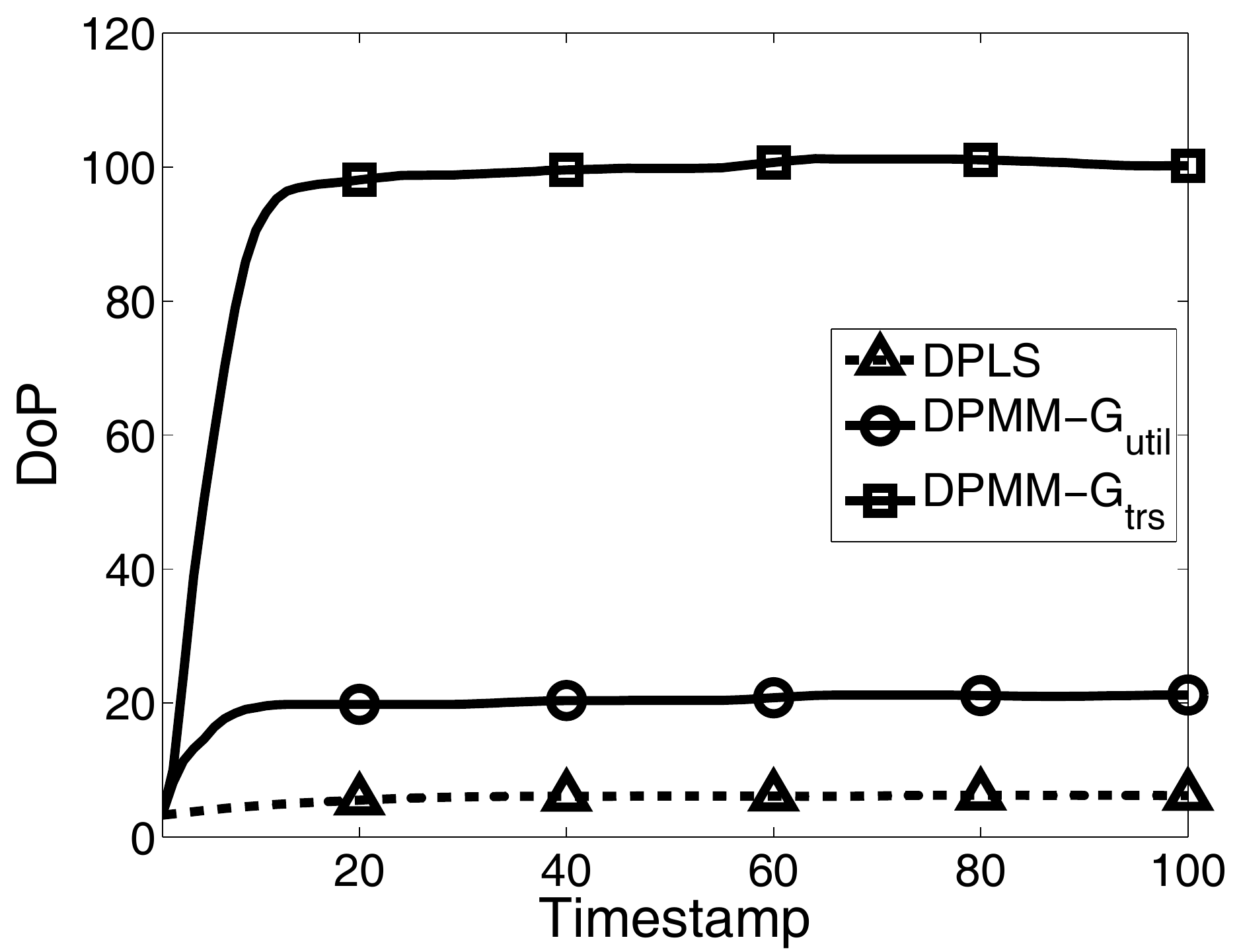}
\caption{{\small \textsc{DoP} on GeoLife}}
\label{Fig-DOP-time-GeoLife}
\end{subfigure}
\begin{subfigure}{0.233\textwidth}
\centering
\includegraphics[width=4.2cm]{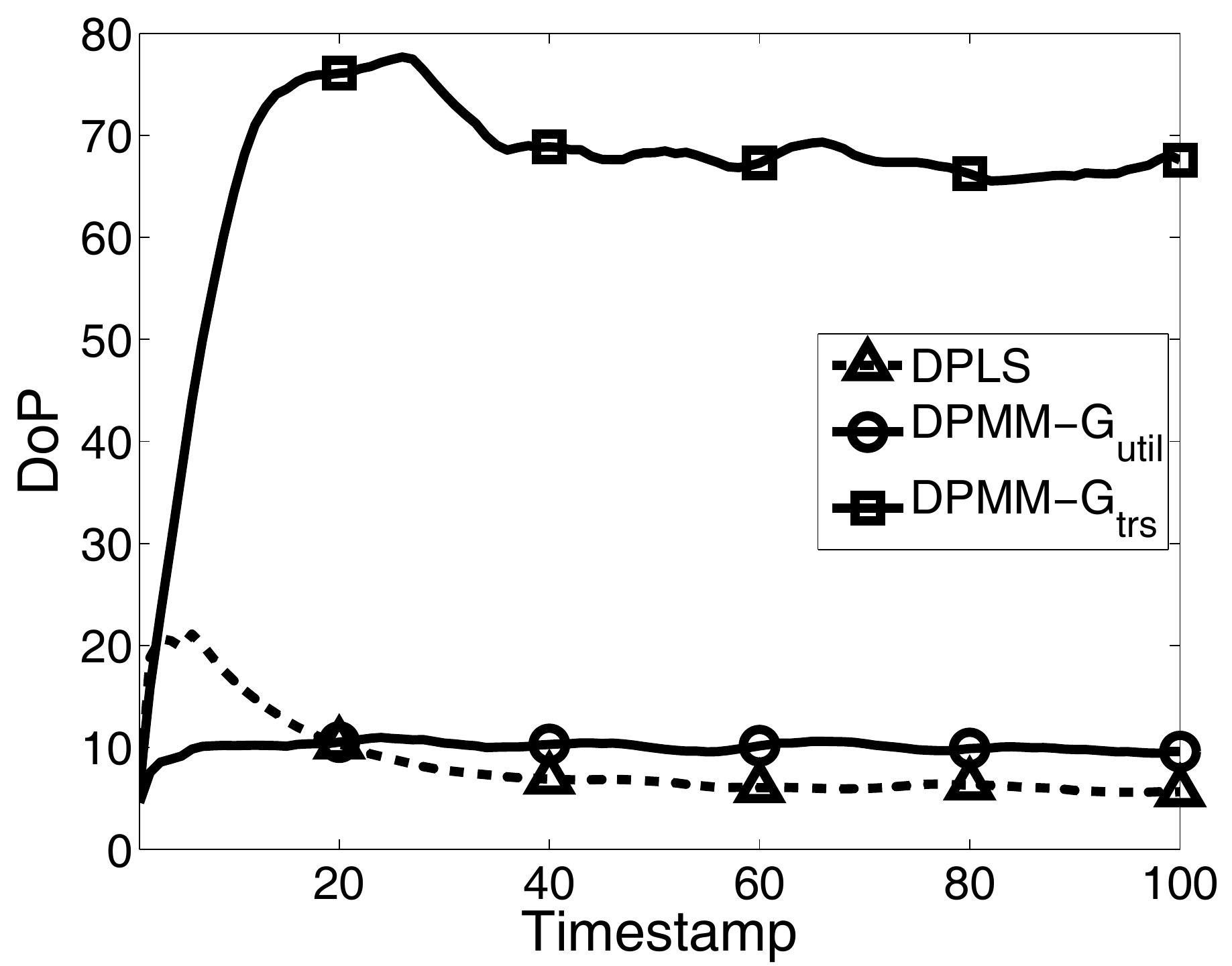}
\caption{{\small \textsc{DoP} on Gowalla}}
\label{Fig-DOP-time-Gowalla}
\end{subfigure}
\\
\begin{subfigure}{0.233\textwidth}
\centering
\includegraphics[width=4.2cm]{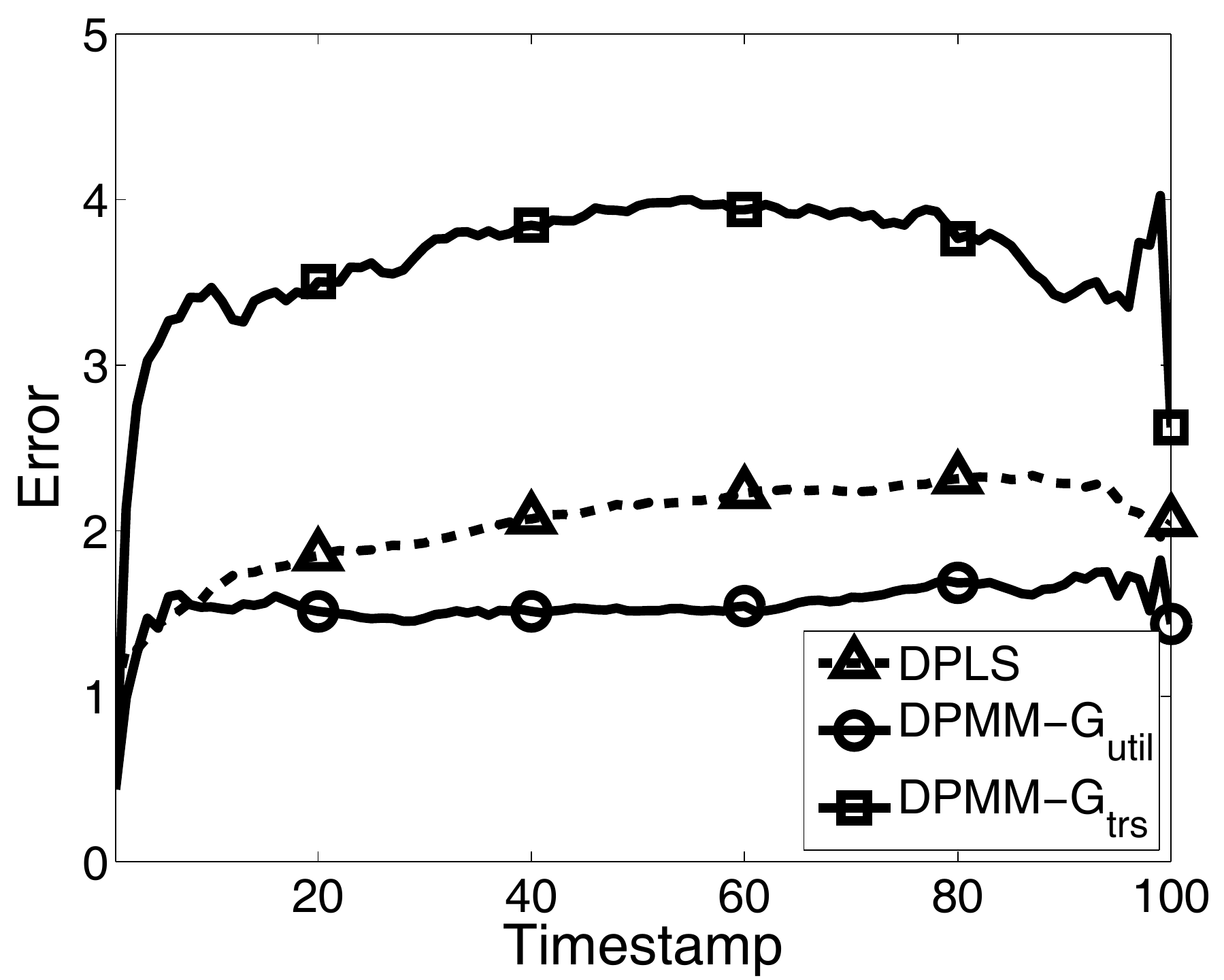}
\caption{{\small \textsc{Error} on GeoLife}}
\label{Fig-Error-time-GeoLife}
\end{subfigure}
\begin{subfigure}{0.233\textwidth}
\centering
\includegraphics[width=4.2cm]{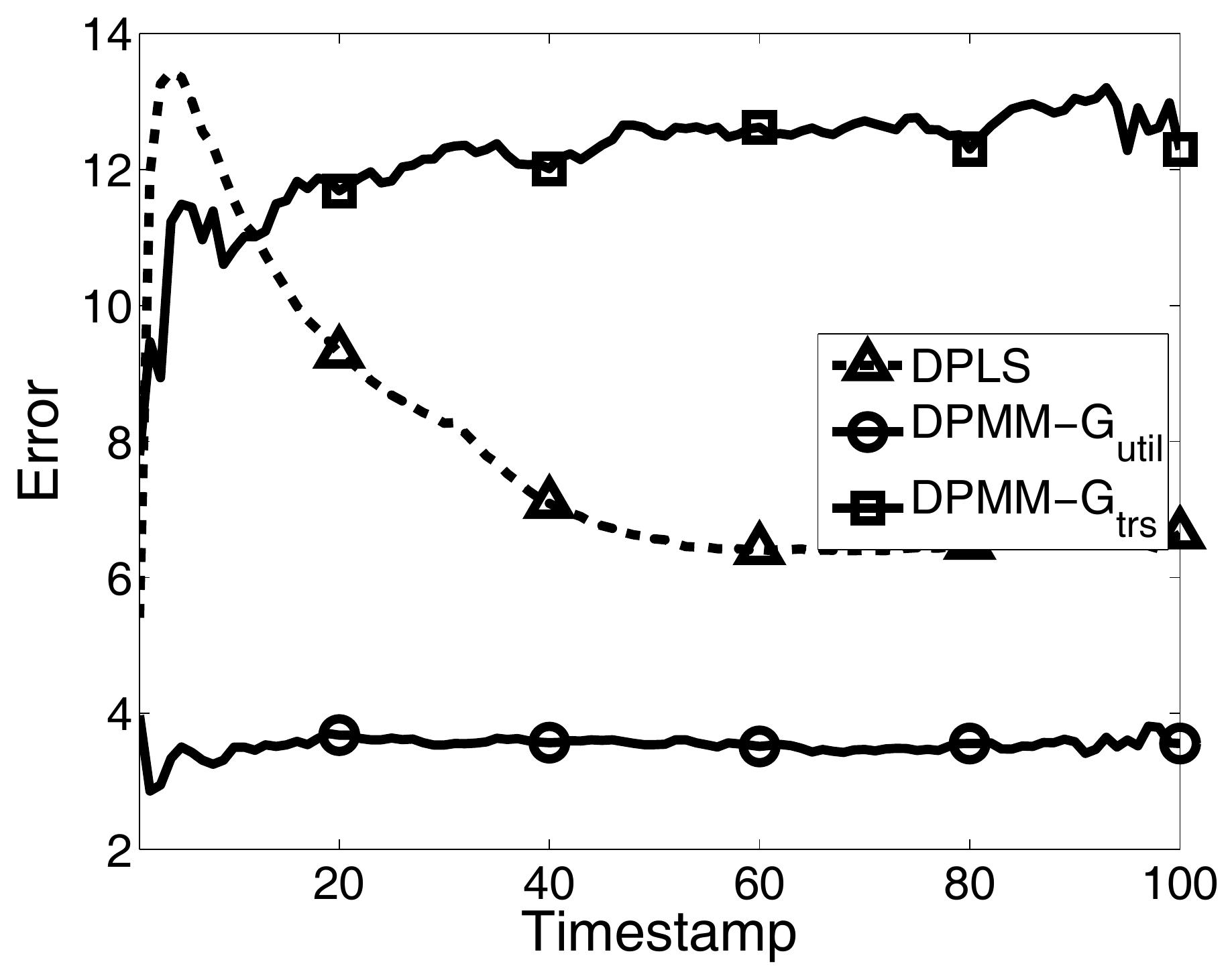}
\caption{{\small \textsc{Error} on Gowalla}}
\label{Fig-Error-time-Gowalla}
\end{subfigure}
\caption{{\small Performance over time.}}
\label{Fig-performance-time}
\end{figure}
\begin{figure}[!t]
\centering
\begin{subfigure}{0.233\textwidth}
\centering
\includegraphics[width=4.2cm]{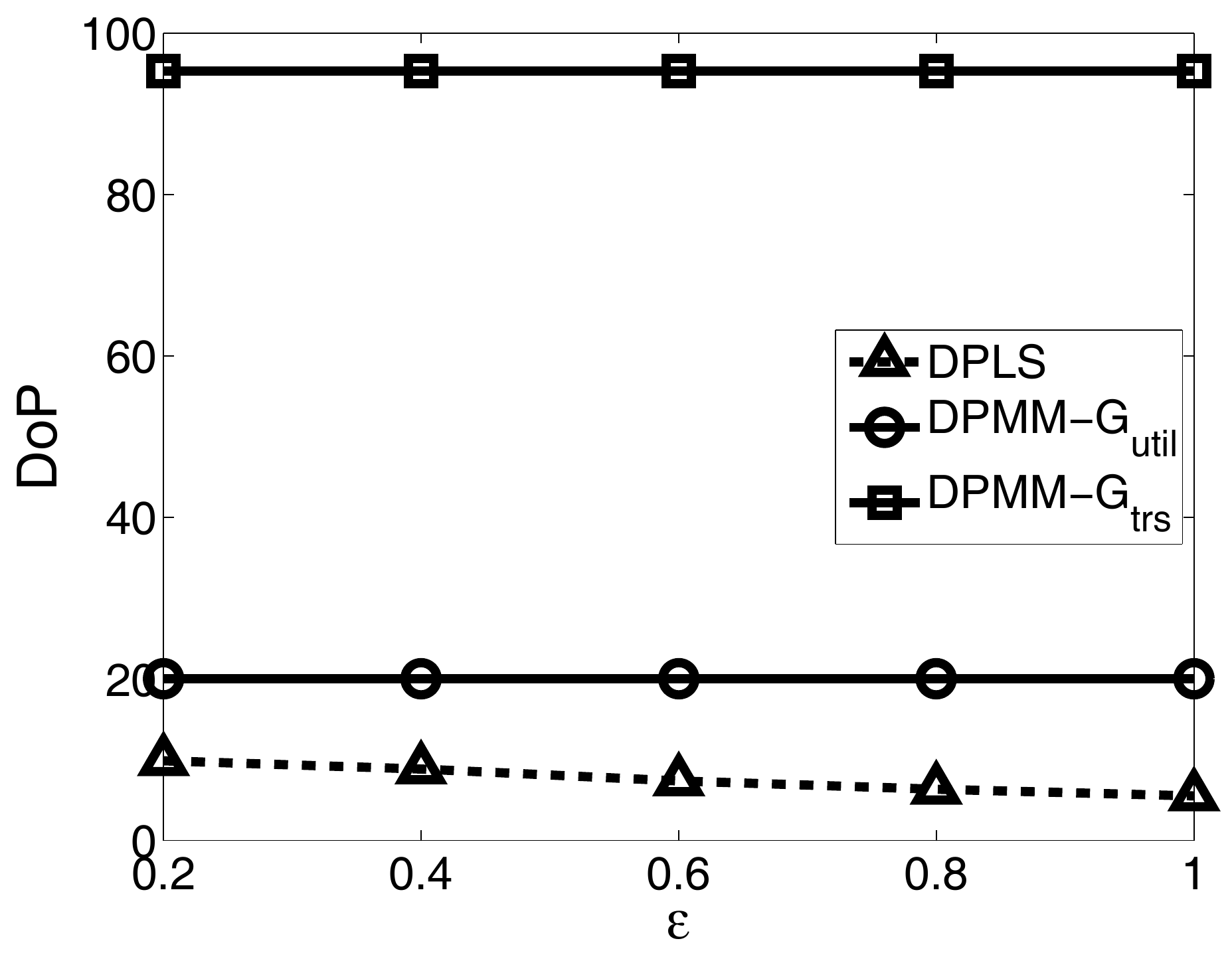}
\caption{{\small \textsc{DoP} on GeoLife}}
\label{Fig-DOP-eps-GeoLife}
\end{subfigure}
\begin{subfigure}{0.233\textwidth}
\centering
\includegraphics[width=4.2cm]{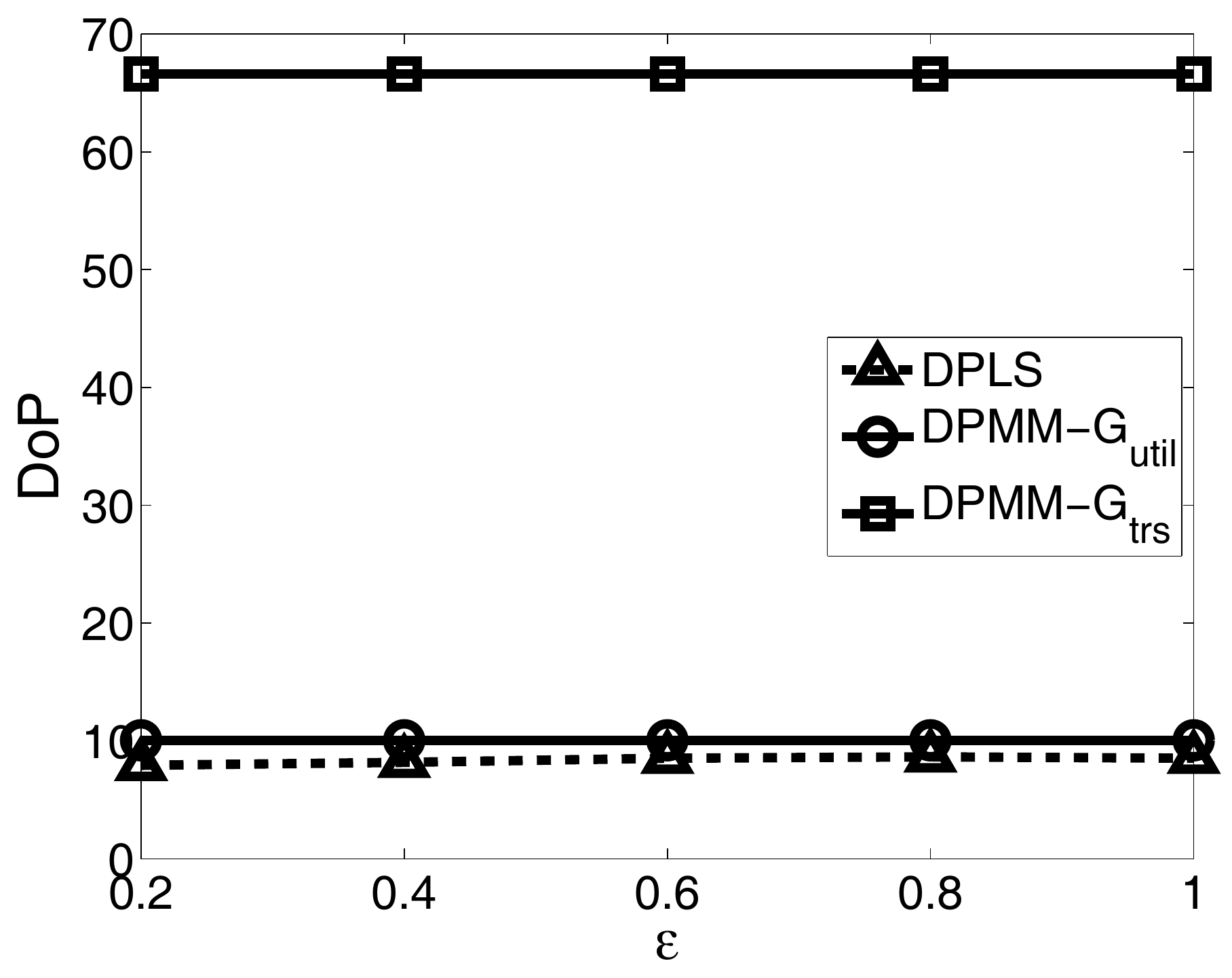}
\caption{{\small \textsc{DoP} on Gowalla}}
\label{Fig-DOP-eps-Gowalla}
\end{subfigure}
\\
\begin{subfigure}{0.233\textwidth}
\centering
\includegraphics[width=4.2cm]{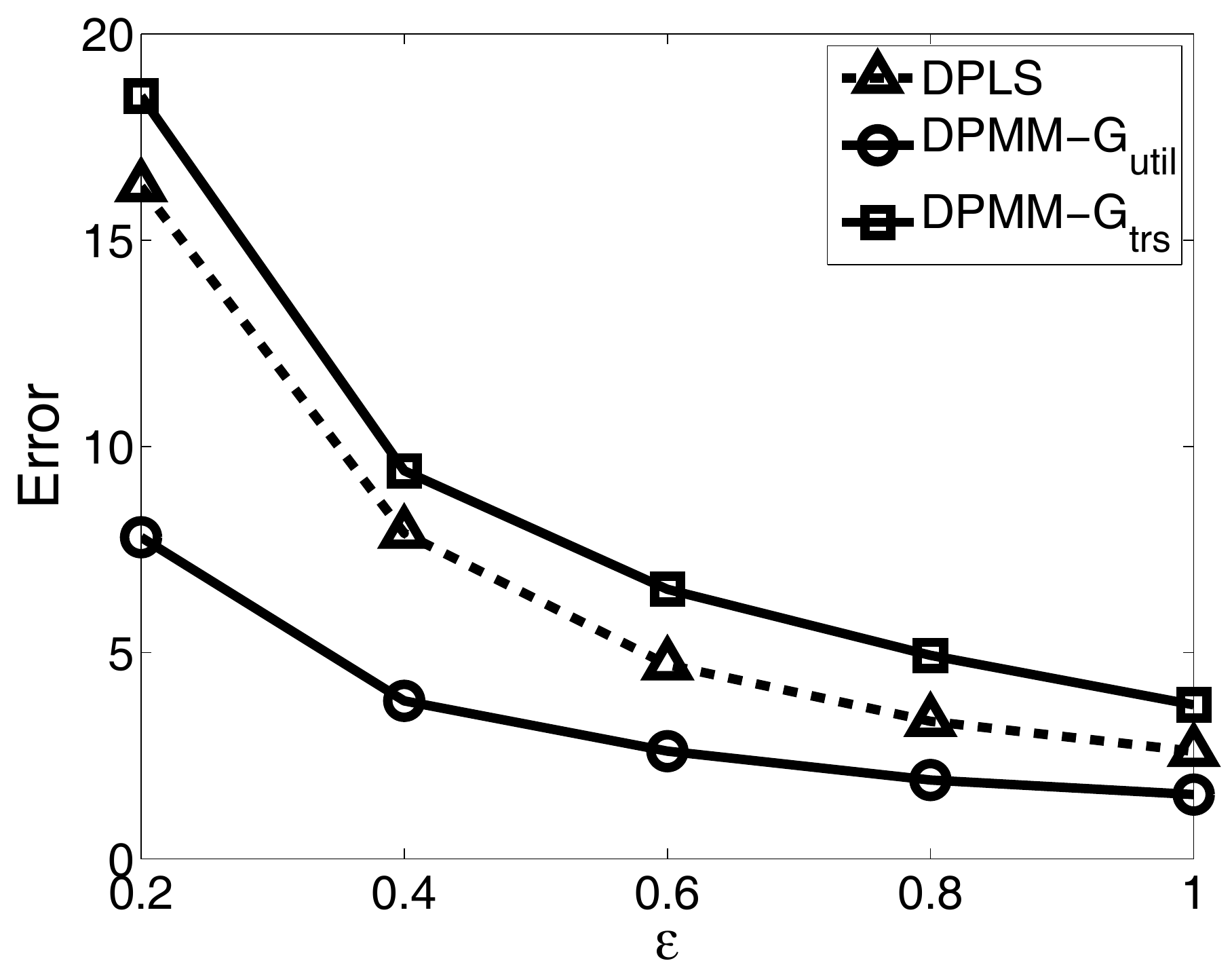}
\caption{{\small \textsc{Error} on GeoLife}}
\label{Fig-Error-eps-GeoLife}
\end{subfigure}
\begin{subfigure}{0.233\textwidth}
\centering
\includegraphics[width=4.2cm]{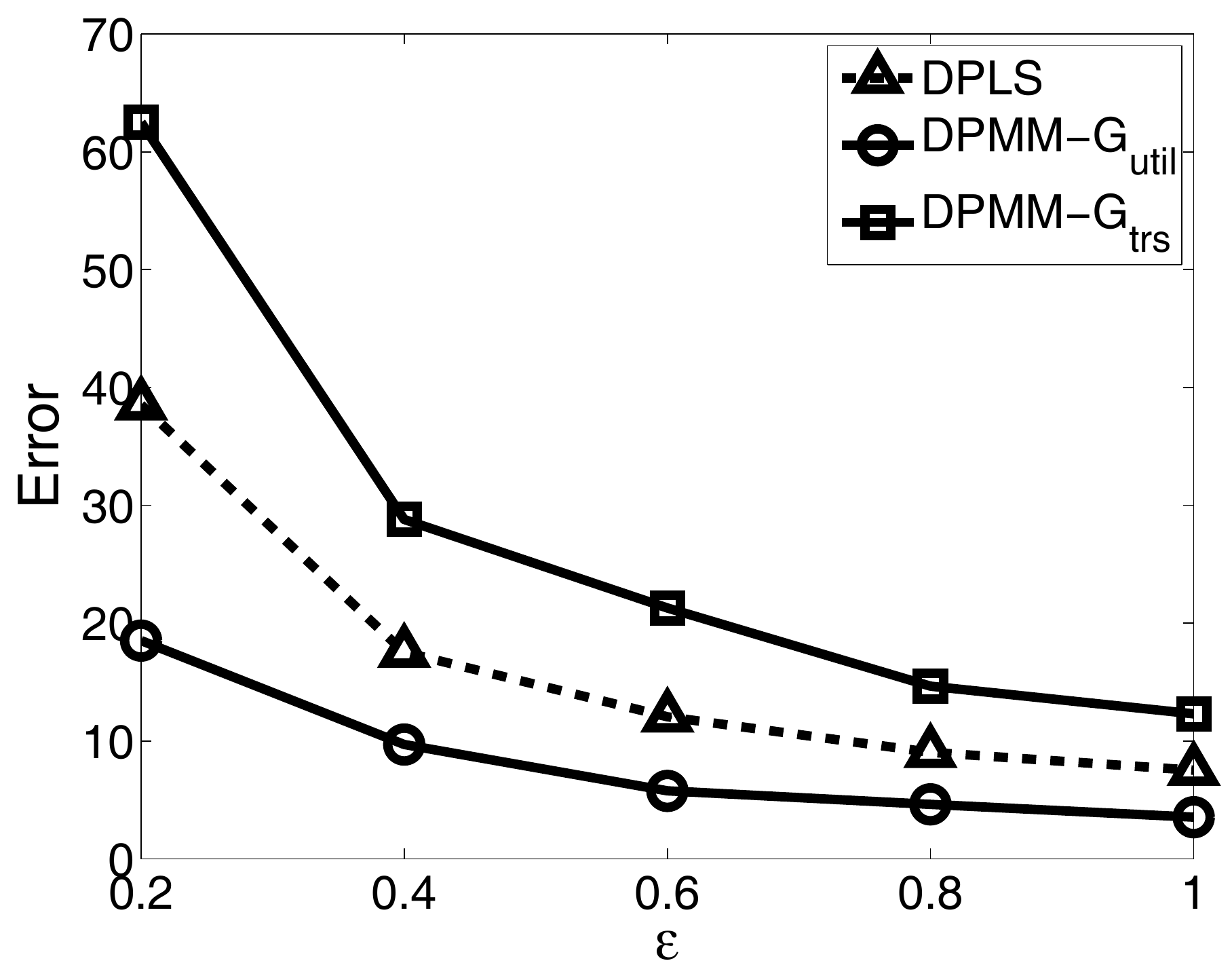}
\caption{{\small \textsc{Error} on Gowalla}}
\label{Fig-Error-eps-Gowalla}
\end{subfigure}
\caption{{\small Impact of $\epsilon$.}}
\label{Fig-performance-eps}
\end{figure}

\subsection{Performance over Time}
At each timestamp, the (smoothed) \textsc{DoP} and $\textsc{Error}$ are shown in Figure \ref{Fig-performance-time}. As expected, $G_{trs}$ provides the strongest protection of privacy, while $G_{util}$ has the lowest error on both datasets. With $G_{trs}$, the true state was protected in a set of $100$ and $70$ possible states for the two datasets. Provided such strong protection, the error also rises. With $G_{util}$, the query error was smaller than DPLS yet the $\textsc{DoP}$ was even larger than DPLS. Therefore, we can infer that customizable graph provides better trade-off between privacy and utility.

\subsection{Impact of Parameters}
We also measure the average performance over the $100$ timestamps with different parameters.

\noindent{\bf Impact of $\epsilon$.}
Figure \ref{Fig-performance-eps} reports the impact of $\epsilon$. From Figures \ref{Fig-DOP-eps-GeoLife} and \ref{Fig-DOP-eps-Gowalla}, $\textsc{DoP}$ stays the same with different $\epsilon$ because the size of $\mathcal{C}_t$ does not change with $\epsilon$. Again we see that $G_{trs}$ provides the largest $\textsc{DoP}$ with little sacrifice of utility, compared with DPLS. Figures \ref{Fig-Error-eps-GeoLife} and \ref{Fig-Error-eps-Gowalla} verifies that the larger $\epsilon$, the smaller $\textsc{Error}$, which is easy to understand because $\epsilon$ determines the shape of noise distribution.

\noindent{\bf Impact of $r$.}
To better understand the trade-off between privacy and utility with different graphs, we also tested the performance with different $G_{util}(r)$ where $r$ is the distance parameter in measurement space, as defined in Section \ref{sec-policy-graph}
 \footnote{DPLS is not affected by $r$ (not a parameter in DPLS).}.
Intuitively, with larger $r$ comes stronger protection, which is confirmed in Figures \ref{Fig-DOP-r-GeoLife} and \ref{Fig-DOP-r-Gowalla}. However, the \textsc{Error} of DPHMM is still lower than DPLS
 in most results, although it is expected that $\textsc{Error}$ grows with bigger $r$. Therefore, we can conclude that with different policy graph privacy and utility can be better tuned in different scenarios.
\begin{figure}[!t]
\centering
\begin{subfigure}{0.233\textwidth}
\centering
\includegraphics[width=4.2cm]{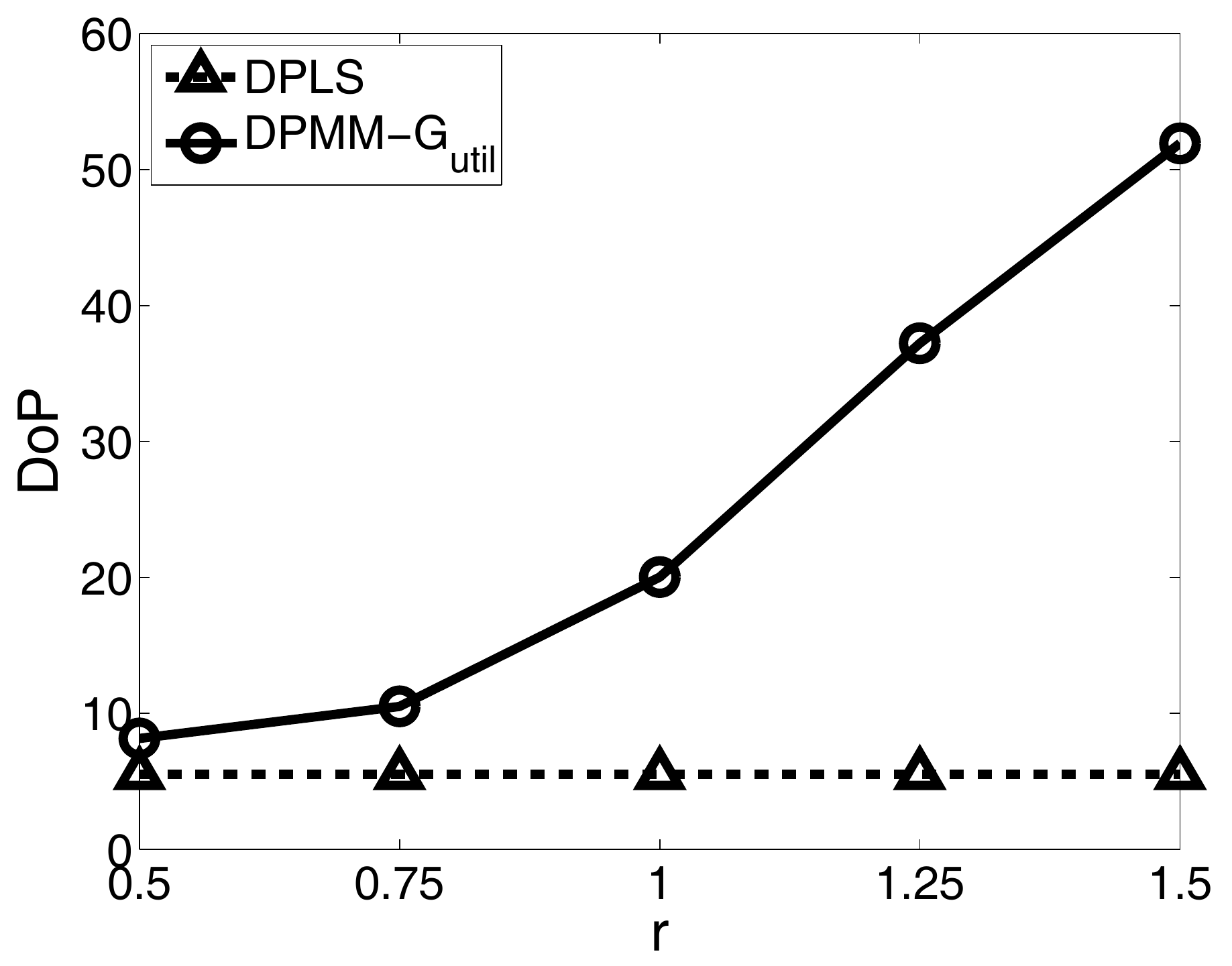}
\caption{{\small \textsc{DoP} on GeoLife}}
\label{Fig-DOP-r-GeoLife}
\end{subfigure}
\begin{subfigure}{0.233\textwidth}
\centering
\includegraphics[width=4.2cm]{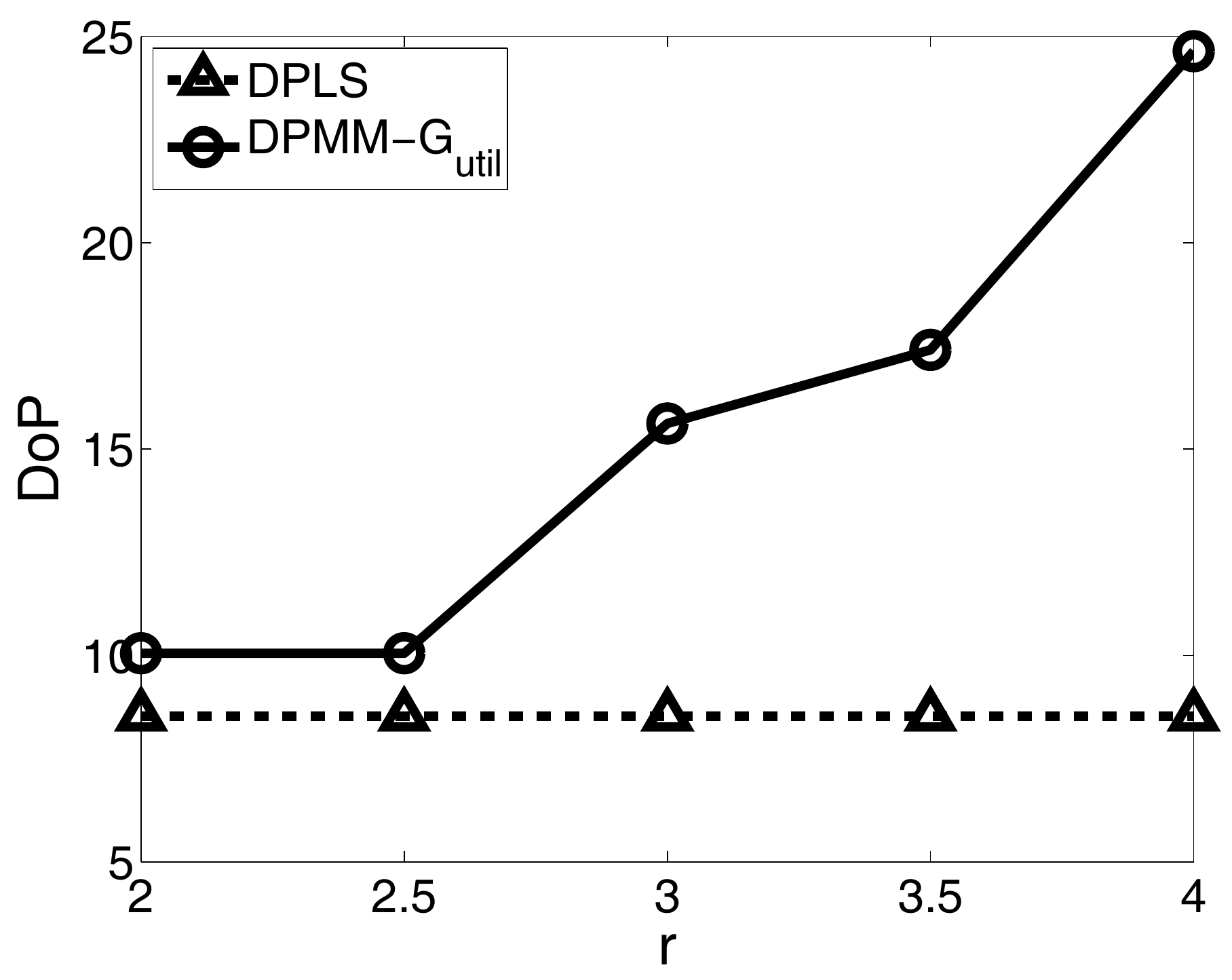}
\caption{{\small \textsc{DoP} on Gowalla}}
\label{Fig-DOP-r-Gowalla}
\end{subfigure}
\\
\begin{subfigure}{0.233\textwidth}
\centering
\includegraphics[width=4.2cm]{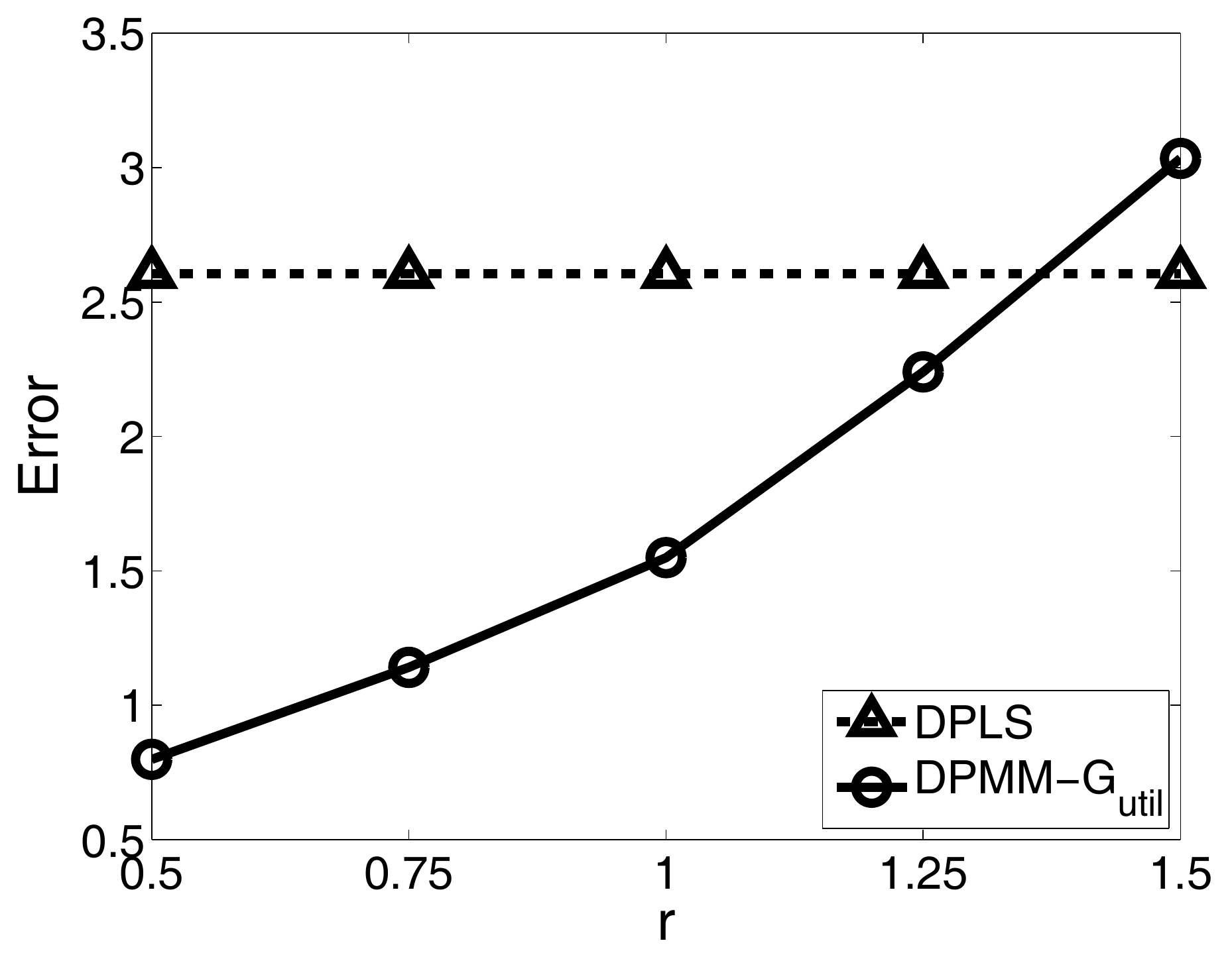}
\caption{{\small \textsc{Error} on GeoLife}}
\label{Fig-Error-r-GeoLife}
\end{subfigure}
\begin{subfigure}{0.233\textwidth}
\centering
\includegraphics[width=4.2cm]{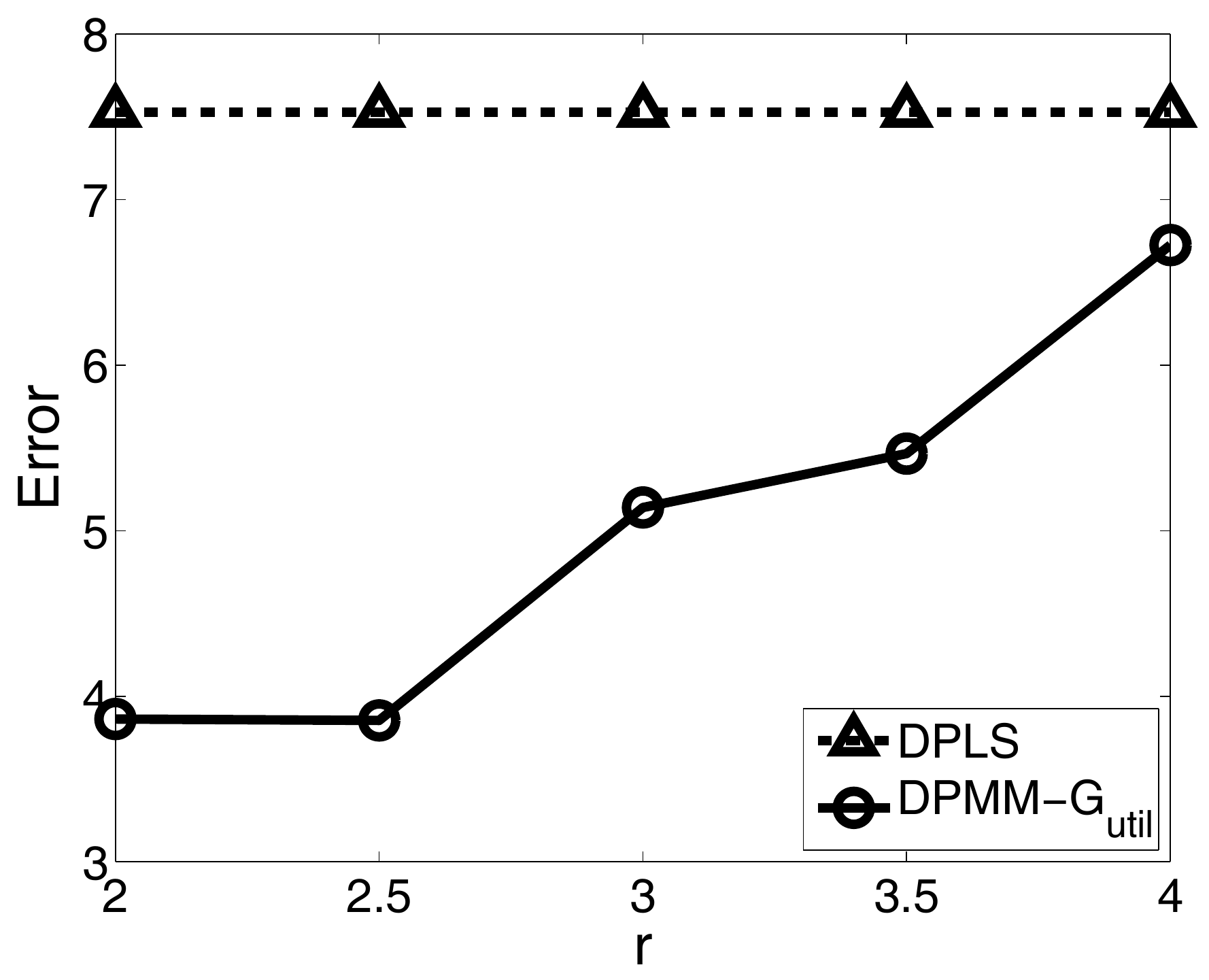}
\caption{{\small \textsc{Error} on Gowalla}}
\label{Fig-Error-r-Gowalla}
\end{subfigure}
\caption{{\small Impact of different graphs of $G_{util}$.}}
\label{Fig-impact-r}
\end{figure}

\section{Conclusion and Future Works}
In this paper we proposed DPHMM by embedding a differentially private data release mechanism in hidden Markov model. 
DPHMM guarantees that the true state in Markov model at every timestamp is protected by a customizable policy graph.
Under the temporal correlations, the graph may be reduced to subgraphs. 
Thus we studied the consequential privacy risk by introducing the notion of protectable graph based on the sensitivity hull and degree of protection.
To prevent information exposure we studied how to build an optimal protectable graph based on the current graph.
The privacy guarantee of DPHMM has also been thoroughly investigated, by comparing it with other privacy notions and studying the composition results over multiple queries and timestamps.

%

DPHMM can be used in a variety of applications to release private data for purposes like data mining or social studies. 
Future works can also study how to efficiently design and implement the policy graph for various privacy requirements.


\begin{small}
\bibliographystyle{abbrv}
\bibliography{../exportlist,ref/privacy,ref/location_privacy}
\end{small}
\newpage
\section{Appendix}

%

\subsection{Laplace Mechanism}
From the view point of $K$-norm mechanism, we can prove the following statements:
\begin{enumerate}
\item
Laplace mechanism is a special case of $K$-norm mechanism.
\item
Laplace mechanism is optimal in one-dimensional space.
\item
The $\ell_1$-norm sensitivity of Laplace mechanism contains sensitivity hull. Therefore, Laplace mechanism is not optimal in multidimensional space.
\end{enumerate}

For standard Laplace mechanism, the answer for query workload $\textbf{F}\in\mathbb{R}^{d\times N}$ \cite{Dwork-calibrating} is
\begin{align*}
\textbf{z}=\textbf{Fx}^*+\frac{S_{\textbf{F}}}{\epsilon}\tilde{\textbf{n}}
\end{align*}
where $S_{\textbf{F}}$ is the $\ell_1$-norm sensitivity of $\textbf{F}$ and $\tilde{\textbf{n}}\in\mathbb{R}^d$ are i.i.d variables from standard Laplace distribution with mean $0$ and variance $1$.
\begin{lemma}
Let $f_{\tilde{\textbf{n}}}(\tilde{\textbf{n}})$ be the joint distribution of $\tilde{\textbf{n}}\in\mathbb{R}^d$ where $\tilde{n}_1,\tilde{n}_2,\cdots,\tilde{n}_d$ are from i.i.d standard Laplace distribution. Then
\begin{align*}
f_{\tilde{\textbf{n}}}(\tilde{\textbf{n}})=\frac{1}{2^d}exp\left( -||\tilde{\textbf{n}}||_1 \right)
\end{align*}
\end{lemma}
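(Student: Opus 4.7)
The plan is to reduce the joint density to a product of marginals using independence, then fold the resulting sum of absolute values into an $\ell_1$ norm. This is essentially a textbook computation, so there is no real obstacle — the only thing to pin down is the form of the standard Laplace marginal that the paper is adopting.

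First, I would recall that the standard Laplace density (with scale $1$) on $\mathbb{R}$ is $f(\tilde{n}_i) = \tfrac{1}{2}\exp(-|\tilde{n}_i|)$; this is the convention matching the target formula (so ``variance $1$'' in the lemma statement should be read as ``scale parameter $1$''). Then, because $\tilde{n}_1,\ldots,\tilde{n}_d$ are i.i.d., the joint density factorizes as
\begin{align*}
f_{\tilde{\textbf{n}}}(\tilde{\textbf{n}}) \;=\; \prod_{i=1}^d f(\tilde{n}_i) \;=\; \prod_{i=1}^d \tfrac{1}{2}\exp(-|\tilde{n}_i|).
\end{align*}

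Pulling out the constants gives the $\tfrac{1}{2^d}$ prefactor, and the product of exponentials collapses to a single exponential of the sum:
\begin{align*}
\prod_{i=1}^d \tfrac{1}{2}\exp(-|\tilde{n}_i|) \;=\; \tfrac{1}{2^d}\exp\!\left(-\sum_{i=1}^d |\tilde{n}_i|\right) \;=\; \tfrac{1}{2^d}\exp(-\|\tilde{\textbf{n}}\|_1),
\end{align*}
using the definition $\|\tilde{\textbf{n}}\|_1 = \sum_{i=1}^d |\tilde{n}_i|$ recalled in the notation section of the paper. This yields the claimed expression.

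The only step worth any care is the convention on the Laplace parameter; once $\tfrac{1}{2}e^{-|x|}$ is fixed as the marginal, independence and the definition of $\|\cdot\|_1$ do all the work in one line each. No harder estimates or geometric arguments are required, which is consistent with the lemma's role as a setup ingredient for the subsequent identification of Laplace as a special case of the $K$-norm mechanism with $K = K^{\Diamond}_{S_f}$.
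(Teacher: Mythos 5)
Your proof is correct and matches the paper's own argument essentially verbatim: both start from the scalar standard Laplace density $\tfrac{1}{2}\exp(-|\tilde{n}|)$, factorize the joint density by independence, and collapse the product of exponentials into $\exp(-\|\tilde{\textbf{n}}\|_1)$. Your side remark on reading ``variance $1$'' as ``scale parameter $1$'' is a fair clarification of the convention the paper is implicitly using, but otherwise there is nothing to add.
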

\begin{proof}
For a scalar variable $\tilde{n}$ from standard Laplace distribution, $f_{\tilde{{n}}}(\tilde{{n}})=\frac{1}{2}exp(-|\tilde{n}|)$. Then for $\tilde{\textbf{n}}=[\tilde{n}_1,\tilde{n}_2,\cdots,\tilde{n}_d]^T$,
\begin{flalign*}
\hspace{1.2cm}
&f_{\tilde{\textbf{n}}}(\tilde{\textbf{n}}=[\tilde{n}_1,\tilde{n}_2,\cdots,\tilde{n}_d]^T)&
\\
&=\frac{1}{2}exp(-|\tilde{n}_1|)\cdot \frac{1}{2}exp(-|\tilde{n}_2|) \cdot \cdots \cdot \frac{1}{2}exp(-|\tilde{n}_d|)&
\\
&=\frac{1}{2^d}exp\left( -||\tilde{\textbf{n}}||_1 \right)&
\end{flalign*}
\end{proof}

\begin{theorem}
Let $f_\textbf{z}(\textbf{z})$ be the probability distribution of $\textbf{z}$ from standard Laplace mechanism. Then
\begin{align*}
f_\textbf{z}(\textbf{z})=\frac{\epsilon^d}{2^d S_{\textbf{F}}^d}  exp\left( -\frac{\epsilon}{S_{\textbf{F}}}\left|\left| \textbf{z}-{\textbf{F}}\textbf{x}^* \right|\right|_1  \right)
\end{align*}
\end{theorem}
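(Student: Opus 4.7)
The plan is to derive the density of $\textbf{z}$ from the density of $\tilde{\textbf{n}}$ given in the preceding lemma, via a straightforward change of variables. Since the mechanism sets $\textbf{z} = \textbf{F}\textbf{x}^* + \frac{S_\textbf{F}}{\epsilon}\tilde{\textbf{n}}$, the map $\tilde{\textbf{n}} \mapsto \textbf{z}$ is an affine bijection on $\mathbb{R}^d$ whose inverse is $\tilde{\textbf{n}} = \frac{\epsilon}{S_\textbf{F}}(\textbf{z} - \textbf{F}\textbf{x}^*)$.

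First I would write out the Jacobian of the inverse transformation: since each coordinate of $\tilde{\textbf{n}}$ depends linearly on the corresponding coordinate of $\textbf{z}$ through multiplication by $\epsilon/S_\textbf{F}$, the Jacobian matrix is $(\epsilon/S_\textbf{F})\textbf{I}_d$, whose absolute determinant is $(\epsilon/S_\textbf{F})^d$. Next I would invoke the change-of-variables formula for densities,
\begin{equation*}
f_\textbf{z}(\textbf{z}) \;=\; f_{\tilde{\textbf{n}}}\!\left(\tfrac{\epsilon}{S_\textbf{F}}(\textbf{z}-\textbf{F}\textbf{x}^*)\right) \cdot \left(\tfrac{\epsilon}{S_\textbf{F}}\right)^{d}.
\end{equation*}

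Then I would substitute the lemma's formula $f_{\tilde{\textbf{n}}}(\tilde{\textbf{n}}) = \tfrac{1}{2^d}\exp(-\|\tilde{\textbf{n}}\|_1)$ into the above expression and simplify using the positive-homogeneity of the $\ell_1$-norm, namely $\|\alpha \textbf{v}\|_1 = |\alpha|\,\|\textbf{v}\|_1$ for $\alpha \in \mathbb{R}$, which here gives $\|\tfrac{\epsilon}{S_\textbf{F}}(\textbf{z}-\textbf{F}\textbf{x}^*)\|_1 = \tfrac{\epsilon}{S_\textbf{F}}\|\textbf{z}-\textbf{F}\textbf{x}^*\|_1$ (since $\epsilon, S_\textbf{F} > 0$). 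Combining these two factors yields the claimed density $\tfrac{\epsilon^d}{2^d S_\textbf{F}^d}\exp\!\bigl(-\tfrac{\epsilon}{S_\textbf{F}}\|\textbf{z}-\textbf{F}\textbf{x}^*\|_1\bigr)$.

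There is essentially no obstacle here: the only subtlety is ensuring that the Jacobian is applied in the correct direction (multiplying by $(\epsilon/S_\textbf{F})^d$, not dividing, since we push the density of $\tilde{\textbf{n}}$ through the \emph{inverse} map) and that $\epsilon, S_\textbf{F}$ are positive so the absolute value in $|\alpha|$ is unnecessary. Once the proof is in place, it immediately makes the connection to $K$-norm mechanism with $K = K^{\Diamond}_{S_\textbf{F}}$ transparent, matching Theorem \ref{theo-lap-K}: the density has exactly the form of Equation (\ref{eqn-pdf-K-Norm}) with $\|\cdot\|_K$ specialized to $\tfrac{1}{S_\textbf{F}}\|\cdot\|_1$, and the normalizing constant reduces correctly using $\textsc{Vol}(K^{\Diamond}_{S_\textbf{F}}/\epsilon) = \tfrac{2^d S_\textbf{F}^d}{d!\,\epsilon^d}$ together with $\Gamma(d+1) = d!$.
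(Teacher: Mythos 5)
Your change-of-variables derivation is correct: the Jacobian factor $(\epsilon/S_{\textbf{F}})^d$ combined with the lemma's density $\tfrac{1}{2^d}\exp(-\|\tilde{\textbf{n}}\|_1)$ and the homogeneity of $\|\cdot\|_1$ gives exactly the claimed formula, and your check of the normalizing constant against $\textsc{Vol}(K^{\Diamond}_{S_{\textbf{F}}}/\epsilon)=\tfrac{2^d S_{\textbf{F}}^d}{d!\,\epsilon^d}$ is also right. The paper states this theorem without proof, and your argument is precisely the routine derivation it leaves implicit, so there is nothing to add.
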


\begin{theorem}
Let $K^{\Diamond}_r$ be the cross polytope $\{\textbf{x}\in\mathbb{R}^d: ||\textbf{x}||_1\leq r\}$.
Standard Laplace mechanism is a special case of $K$-norm mechanism when $K=K^{\Diamond}_{S_{\textbf{F}}}$.
\end{theorem}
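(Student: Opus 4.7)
The plan is to show that substituting $K = K^{\Diamond}_{S_{\textbf{F}}}$ into the $K$-norm mechanism density from Equation~(\ref{eqn-pdf-K-Norm}) recovers exactly the Laplace density stated in the preceding theorem. This reduces to two routine computations: identifying the Minkowski functional of the scaled cross polytope, and computing its volume.

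First I would unpack the Minkowski norm. By definition, $\|\textbf{v}\|_{K^{\Diamond}_{S_{\textbf{F}}}} = \inf\{r>0: \textbf{v} \in r K^{\Diamond}_{S_{\textbf{F}}}\} = \inf\{r>0: \|\textbf{v}\|_1 \leq r S_{\textbf{F}}\} = \|\textbf{v}\|_1 / S_{\textbf{F}}$. Plugging this into the exponent of the $K$-norm density gives
\begin{equation*}
\exp\!\left(-\epsilon \|\textbf{z}-\textbf{F}\textbf{x}^*\|_{K^{\Diamond}_{S_{\textbf{F}}}}\right)
= \exp\!\left(-\tfrac{\epsilon}{S_{\textbf{F}}}\|\textbf{z}-\textbf{F}\textbf{x}^*\|_1\right),
\end{equation*}
which matches the exponent of the Laplace density in the preceding theorem.

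Second I would match the normalizing constants. The scaled cross polytope $K^{\Diamond}_{S_{\textbf{F}}}/\epsilon = \{\textbf{x}\in\mathbb{R}^d: \|\textbf{x}\|_1 \leq S_{\textbf{F}}/\epsilon\}$ has the standard volume $\textsc{Vol}(K^{\Diamond}_{S_{\textbf{F}}}/\epsilon) = 2^d (S_{\textbf{F}}/\epsilon)^d / d!$. Combined with $\Gamma(d+1)=d!$, the normalizing constant in (\ref{eqn-pdf-K-Norm}) becomes
\begin{equation*}
\frac{1}{\Gamma(d+1)\,\textsc{Vol}(K^{\Diamond}_{S_{\textbf{F}}}/\epsilon)}
= \frac{1}{d! \cdot 2^d S_{\textbf{F}}^d /(\epsilon^d d!)}
= \frac{\epsilon^d}{2^d S_{\textbf{F}}^d},
\end{equation*}
which again agrees with the Laplace density. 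Since both the exponent and the normalizer coincide, the two distributions over $\textbf{z}$ are identical, and standard Laplace mechanism is the instance $K = K^{\Diamond}_{S_{\textbf{F}}}$ of the $K$-norm mechanism.

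The argument has no real obstacle; it is essentially a verification. The only step that requires care is the volume of the cross polytope $\{\textbf{x}: \|\textbf{x}\|_1 \leq r\} = 2^d r^d/d!$, which I would either cite as a classical fact or justify briefly by decomposing the polytope into its $2^d$ orthant simplices, each of volume $r^d/d!$. Everything else follows from unwinding the definitions.
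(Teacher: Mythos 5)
Your proposal is correct and follows essentially the same route as the paper's own proof: substitute $K=K^{\Diamond}_{S_{\textbf{F}}}$ into the $K$-norm density, identify $\|\cdot\|_{K^{\Diamond}_{S_{\textbf{F}}}}=\|\cdot\|_1/S_{\textbf{F}}$, and use $\textsc{Vol}(K^{\Diamond}_{S_{\textbf{F}}})=\tfrac{2^d}{\Gamma(d+1)}S_{\textbf{F}}^d$ to match the Laplace normalizer. The only difference is that you justify the cross-polytope volume and Minkowski-functional computations explicitly, which the paper states without derivation.
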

\begin{proof}
In Equation (\ref{eqn-pdf-K-Norm}), let $K=K^{\Diamond}_{S_{\textbf{F}}}$. Then $\textsc{Vol}(K^{\Diamond}_{S_{\textbf{F}}})=\frac{2^d}{\Gamma(d+1)}S_{\textbf{F}}^d$.
The $K^{\Diamond}_{S_{\textbf{F}}}$-norm of any $\textbf{z}\in\mathbb{R}^d$ is $\frac{||\textbf{z}||_1}{S_{\textbf{F}}}$.
Then we can obtain
\begin{align*}
Pr(\textbf{z})=\frac{\epsilon^d}{2^d S_{\textbf{F}}^d} exp(-\frac{\epsilon}{S_{\textbf{F}}}||\textbf{z}-\textbf{Fx}^*||_1)
\end{align*}
\end{proof}

From Theorem \ref{theo-lap-K}, Statement {1} is true because $K=K^{\Diamond}_{S_{\textbf{F}}}$ in $K$-norm mechanism; Statement {2} is true because $K$ is isotropic (up to a constant) in one-dimensional space; Statement {3} is true  because $K^{\Diamond}_{S_{\textbf{F}}}$ contains the sensitivity hull.

\subsection{Details in Example \ref{example-Blowfish1}}
We explain the computation details in Example \ref{example-Blowfish1}.
\begin{example}
W.l.o.g, for a database $D$ we assume  the answer to the query in Example \ref{example-Blowfish1} is $f(D)=[10,20]^T.$ Then $f(D\cup \textbf{s}_1)=[11,20]^T$, $f(D\cup \textbf{s}_2)=[10,20]^T$, $f(D\cup \textbf{s}_3)=[11,20]^T$, $f(D\cup \textbf{s}_4)=[10,21]^T$. Given the graph in Figure \ref{Figure-example-GBlowfish}, $S_f=2=||f(D\cup \textbf{s}_3)-f(D\cup \textbf{s}_4)||_1$. Similarly,
$
\Delta f=
\pm
\left[
f(D\cup \textbf{s}_1)-f(D\cup \textbf{s}_2),
f(D\cup \textbf{s}_3)-f(D\cup \textbf{s}_4)
\right]
=
\left[
\begin{array}{cccc}
1&-1&-1&1\\
0&1&0&-1
\end{array}
\right]^T$.
The sensitivity hull $K$ is shown in Figure \ref{fig-exp-blowfish-appendix}.
For $\textbf{s}_5$, $f(D\cup \textbf{s}_5)=[11,20]^T$. Because $f(D\cup \textbf{s}_5)-f(D)=[1,0]^T$, $\textbf{s}_5$ is protected by $K$-norm mechanism for $[1,0]^T\in K$. Because $||[1,0]^T||_1=1<S_f$, it is protected by Laplace mechanism. For $\textbf{s}_6$, $f(D\cup \textbf{s}_6)-f(D)=[0,1]^T$. Thus $\textbf{s}_6$ is protected by Laplace mechanism since $||[0,1]^T||_1=1<2$. Because $[0,1]^T$ is not in $K$, it is not protected by $K$-norm mechanism.
W.l.o.g, assume $\textbf{z}=[10,21]^T$. Let $\tilde{\textbf{n}}$ be the noise injected by $K$-norm mechanism.
$\frac{Pr(\mathcal{A}(D\cup\textbf{s}_6)=\textbf{z})}{Pr(\mathcal{A}(D)=\textbf{z})}
=\frac{Pr(f(D\cup\textbf{s}_6)+\tilde{\textbf{n}}=\textbf{z})}{Pr(f(D)+\tilde{\textbf{n}}=\textbf{z})}
=\frac{Pr(\tilde{\textbf{n}}=[0,0]^T)} {Pr(\tilde{\textbf{n}}=[0,1]^T)}
=exp(\epsilon||[0,1]^T||_K-||[0,0]^T||_K)
=exp(2\epsilon)
$. Hence the unbounded DP for $\textbf{s}_6$ is $2\epsilon$. Similarly, the unbounded DP for $\{\textbf{s}_1, \textbf{s}_2, \textbf{s}_3, \textbf{s}_4, \textbf{s}_5, \textbf{s}_6\}$ are $\{\epsilon,0,\epsilon,2\epsilon,\epsilon,2\epsilon\}$ respectively. Overall, it is $2\epsilon$-unbounded-DP.
\end{example}

\begin{figure}[!htbp]
\centering
\includegraphics[width=4cm]{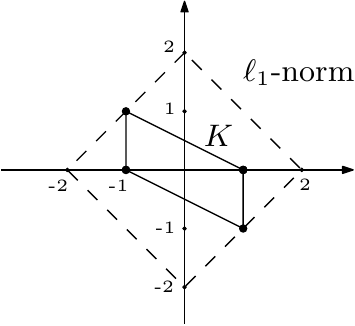}
\caption{{\small Sensitivity hull $K$ in Example \ref{example-Blowfish1}.}}
\label{fig-exp-blowfish-appendix}
\end{figure}

\subsection{Minimum Protectable Graph in $\textbf{2}$-Dimensional Space}
It is possible to design fast algorithms in low dimensional space to derive the minimum protectable graph. We propose a fast algorithm in $2$-dimensional space.

In $2$-dimensional space, it only takes $O(mlog(m))$ time to find a convex hull where $m=|\mathcal{E}|$ is the number of edges. Thus we can connect the disconnected node $\textbf{s}_i$ to the rest (at most $2m$) nodes, generating at most $2m$ convex hulls.
We use $\mathop\sum\limits_{i=1,j=i+1}^{i=h}det(\textbf{v}_i,\textbf{v}_j)$ to derive the area of a convex hull with clockwise nodes $\textbf{v}_1,\textbf{v}_2,\cdots,\textbf{v}_{h}$ where $h$ is the number of vertices and $\textbf{v}_{h+1}=\textbf{v}_1$.
By comparing the area of these convex hulls, we can find the smallest area in $O(nm^3)$ time where $n$ is the number of exposed nodes.

\begin{algorithm}[htb]
\caption{$2$D Minimum Protectable Graph}
\begin{algorithmic}[1]
\Require{
$G$, $\mathcal{C}_t$, $f:\mathcal{S}\rightarrow\mathbb{R}^2$
}
\State{$\mathcal{G}_t(\mathcal{V},\mathcal{E})\gets G\cap\mathcal{C}_t$;}
\State{$K\gets K(\mathcal{G}_t)$;}
\ForAll{exposed node $\textbf{s}_i\in\mathcal{V}$}
\State{$\textbf{s}_k\gets \emptyset$;}
\State{$min\textsc{Area}\gets \infty$;}
\ForAll{other node $\textbf{s}_j\in\mathcal{V}$}
\State{$K\gets K(\mathcal{G}_t\cup\overline{\textbf{s}_i\textbf{s}_j})$;}
\Comment{{\tt \scriptsize $O(m^2)$}}
\State{$\textsc{Area}=\mathop\sum\limits_{i=1,j=i+1}^{i=h}det(\textbf{v}_i,\textbf{v}_j)$ where $\textbf{v}_{h+1}=\textbf{v}_1$;}
\If{$\textsc{Area}<min\textsc{Area}$}
\State{$\textbf{s}_k\gets\textbf{s}_j$;}
\State{$min\textsc{Area}=\textsc{Area}$;}
\Comment{{\tt \scriptsize find minimum area}}
\EndIf
\EndFor
\State{$\mathcal{G}_t\gets\mathcal{G}_t\cup\overline{\textbf{s}_i\textbf{s}_k}$}
\State{$K\gets K(\mathcal{G}_t)$;}
\EndFor
\\\Return{graph $\mathcal{G}_t(\mathcal{V},\mathcal{E})$;}
\end{algorithmic}
\label{alg-Gt-2d-connected}
\end{algorithm}
\begin{theorem}
Algorithm \ref{alg-Gt-2d-connected} takes $O(nm^3)$ time where $m=|\mathcal{E}|$ is the number of edges and $n$ is the number of exposed nodes.
\end{theorem}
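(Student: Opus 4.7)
The plan is a direct bookkeeping argument: bound the number of outer iterations, the number of inner iterations, and the cost of each inner iteration, then multiply. All three bounds are essentially forced by the algorithm's structure, so the proof is mostly an accounting exercise relying on standard facts about $2$-dimensional convex hulls and polygon area.

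First I would observe that the outer loop in Algorithm \ref{alg-Gt-2d-connected} iterates exactly once per exposed node of $G \cap \mathcal{C}_t$, giving at most $n$ passes by hypothesis. For the inner loop, I would invoke the counting argument already sketched before the algorithm: the initial graph $G\cap\mathcal{C}_t$ has $m=|\mathcal{E}|$ edges, so at most $2m$ nodes are incident to those edges, and the candidate targets $\textbf{s}_j$ that could produce a smaller sensitivity hull are confined to this set of at most $2m$ vertices (connecting $\textbf{s}_i$ to an otherwise isolated vertex would itself create a new exposure that must be handled elsewhere in the outer loop, so such connections are charged to later rounds rather than to this inner loop).

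Next I would bound the per-iteration cost. Inside the inner loop the algorithm (i) forms $\Delta f$ for $\mathcal{G}_t\cup\overline{\textbf{s}_i\textbf{s}_j}$, which has $O(m)$ points in $\mathbb{R}^2$; (ii) computes their convex hull, which in $2$D takes $O(m\log m)$ by a standard Graham scan or Andrew's monotone chain, and is comfortably $O(m^2)$; and (iii) computes the area by the shoelace determinant sum, which costs $O(h)=O(m)$ since the hull has $h\le O(m)$ vertices. The comparison and bookkeeping operations are $O(1)$ per iteration. Thus each inner iteration runs in $O(m^2)$ time, and the bound is dominated by the convex hull step.

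Multiplying the three bounds yields $n \cdot 2m \cdot O(m^2) = O(nm^3)$, as claimed. The only step that is not purely mechanical is justifying why the inner loop can be restricted to $O(m)$ candidates rather than all $N$ vertices; my plan handles that by the $2m$-endpoint observation inherited from the discussion preceding the algorithm, and I expect this to be the main (mild) obstacle. Everything else follows from the standard $2$-dimensional convex hull complexity and the shoelace formula, both of which I would cite rather than re-derive.
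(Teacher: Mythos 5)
Your accounting matches the paper's own argument exactly: $n$ outer iterations over exposed nodes, at most $2m$ candidate targets in the inner loop, and $O(m^2)$ per candidate for rebuilding the sensitivity hull and evaluating the shoelace area, giving $O(nm^3)$. The only cosmetic difference is that the paper charges the $O(m^2)$ to forming $\Delta f$ while you charge it to a loose bound on the $O(m\log m)$ hull computation, but the per-iteration bound and the overall structure are the same.
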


\subsection{Computing Degree of Protection}
The computation of $\textsc{DoP}$ is
to check the number of $f(\textbf{s}_j)$ inside a convex body $f(\textbf{s}_i)+K$ for all $\textbf{s}_j\in\mathcal{C}_t$.
The problem of checking whether a point is a convex body
 has been well studied in computational geometry. Thus we skip the discussion of details.
\begin{flalign}
\label{eqn-inHull-opt}
min\ { \frac{{\small 1}}{{\small 2}}}||\Delta f\cdot\textbf{x}-\textbf{v}||_2^2
\end{flalign}
\vspace{-0.55cm}
\begin{flalign*}
\hspace{2.5cm}
\textrm{subject to: }
&\textbf{1}\cdot\textbf{x}=1&\\
&\textbf{x} \succeq 0&
\end{flalign*}
where $\textbf{x}\in\mathbb{R}^{2m}$ is the unknown variable, $m=|\mathcal{E}|$ is the number of edges in $\mathcal{G}$, $\textbf{v}=f(\textbf{s}_j)-f(\textbf{s}_i)$, $\textbf{1}$ is a $1\times 2m$ vector of $[1,1,\cdots,1]$, $\textbf{x} \succeq 0$ means all elements in $\textbf{x}$ $\geq 0$. If $\Delta f\cdot\textbf{x}=\textbf{v}$ then $\textbf{s}_j$ is contained in  $f(\textbf{s}_i)+K$. Algorithm \ref{alg-exposure} summarizes the process.
\begin{algorithm}[htb]
\caption{Degree of Protection}
\begin{algorithmic}[1]
\Require{$G$, $\mathcal{C}_t$ $f$, disconnected node $\textbf{s}_i\in G\cap\mathcal{C}_t$}
\State{$\Delta f=\mathop\cup\limits_{\overline{\textbf{s}_j\textbf{s}_k}\in \mathcal{E}(G\cap\mathcal{C}_t)}
\left( f({\textbf{s}_j})-f({\textbf{s}_k}) \right)$;}
\State{$\textsc{DoP}(\textbf{s}_i)\gets 1$;}
\ForAll{$\textbf{s}_j\in\mathcal{C}_t,\textbf{s}_j\neq \textbf{s}_i$}
\State{$\textbf{v}\gets f(\textbf{s}_j)-f(\textbf{s}_i)$;}
\State{Solve $\textbf{x}$ in Equation (\ref{eqn-inHull-opt});}
\Comment{{\tt \scriptsize test $\textbf{v}\in K$}}
\If{$\Delta f\cdot\textbf{x}==\textbf{v}$}
\State{$\textsc{DoP}(\textbf{s}_i)++ $;}
\Comment{{\tt \scriptsize not exposed}}
\EndIf
\EndFor
\\\Return{$\textsc{DoP}(\textbf{s}_i)$;}
\Comment{{\tt \scriptsize if $\textsc{DoP}(\textbf{s}_i)=1$, exposed}}
\end{algorithmic}
\label{alg-exposure}
\end{algorithm}

\subsection{Attacks on Local Differential Privacy}

Why should we prevent the disclosure of unprotected nodes in a graph? In Example \ref{example-exposure}, we can see that the states $\textbf{s}_2$ and $\textbf{s}_3$ are still indistinguishable. It only matters when the true state is $\textbf{s}_5$. Following this rationale, we can also define local differential privacy
(e.g. \cite{Nissim-smooth}) based on the true state. Accordingly, this scarifies privacy for better utility.

\begin{definition}[$\epsilon$-$Local$DP]
\label{def-local-DP}
At any timestamp $t$ in MM with policy graph $G$ and true state $\textbf{s}^*_t$,
an output $\textbf{z}_t$ generated by a
 randomized algorithm $\mathcal{A}$ 
is $\epsilon$-differentially private if
for any $\textbf{z}_t$ and any states $\textbf{s}_j,\textbf{s}_k\in\mathcal{N}(\textbf{s}^*_t)\cap \mathcal{C}_t$,
$
\frac{Pr(\mathcal{A}(\textbf{s}_j)=\textbf{z}_t)}{Pr(\mathcal{A}(\textbf{s}_k)=\textbf{z}_t)}\leq e^{\epsilon}
$
 holds.
\end{definition}
Becasue a data release mechanism should be transparent to adversaries, the sensitivity hull (or $\ell_1$-norm sensitivity) should also be public to adversaries. A concern of above definition is that sensitivity hull should remain indistinguishable regardless of the true state in order to preserve privacy.
We defer such investigation to future works, with the understanding that the analysis in the rest of this paper also applies to it.

The $local$DP in Definition \ref{def-local-DP} is vulnerable to attacks using the knowledge of sensitivity hull. We use the following example to demonstrate the attack.

\begin{figure}[!htbp]
\centering
\includegraphics[width=5cm]{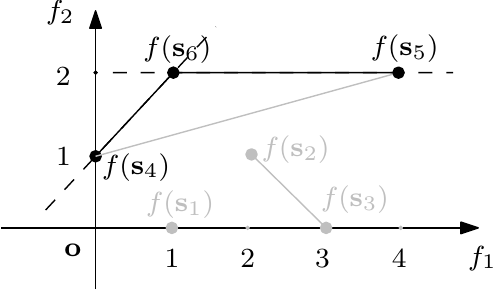}
\label{fig-localDP-attack}
\caption{{\small Attack on $Local$DP.}}
\end{figure}

\begin{example}
Continue with the running example. Assume the constraint is $\mathcal{C}_t=\{\textbf{s}_4,\textbf{s}_5,\textbf{s}_6\}$. Then we consider the instance of true state. When $\textbf{s}_t^*=\textbf{s}_4$, $K=Conv([-1,-1]^T, [1,1]^T)$. The released answer $\textbf{z}_t$ will be on the line $\overline{f(\textbf{s}_4)f(\textbf{s}_5)}$; When $\textbf{s}_t^*=\textbf{s}_6$, $\textbf{z}_t$ is on the line $\overline{f(\textbf{s}_6)f(\textbf{s}_5)}$. Then the following inference can be made:
\begin{align*}
\vast\{
\begin{array}{ll}
\textrm{If }\textbf{z}_t\in\overline{f(\textbf{s}_4)f(\textbf{s}_6)}, &\textrm{ then }\textbf{s}_t^*\neq \textbf{s}_5;\\
\textrm{If }\textbf{z}_t\in\overline{f(\textbf{s}_6)f(\textbf{s}_5)}, &\textrm{ then }\textbf{s}_t^*\neq \textbf{s}_4;\\
\textrm{If }\textbf{z}_t\notin\overline{f(\textbf{s}_4)f(\textbf{s}_6)}\cap \textbf{z}_t\notin\overline{f(\textbf{s}_6)f(\textbf{s}_5)}, &\textrm{ then }\textbf{s}_t^*= \textbf{s}_6;\\
\end{array}
\end{align*}
\end{example}

From above example, we know that the true state can be precisely figured out by attackers using the definition of sensitivity hull. Because a differentially private mechanism should be transparent to attackers, $Local$DP leaks privacy. Thus it is necessary to ensure that the sensitivity hulls remain indistinguishable for any true states.

\subsection{Adversarial Knowledge}
\label{sec-AK}
There might be a variety of adversaries with different prior knowledge in reality. Thus we consider the adversarial knowledge in this section. Similar to existing works \cite{Rastogi-adversarial-privacy,LocPriv14-arXiv}, we assume that the Markov model and the data release mechanism, including the sensitivity hull $K$, is transparent to any adversaries, meaning adversaries know how the query answers were released. If this assumption does not hold, then adversarial knowledge can only be worse, leading to less privacy disclosures.

We define constrained adversarial privacy as follows, with a similar adversary-constraint $\mathcal{C}_t^\mathcal{A}$ derived from the prior knowledge $\textbf{p}^\mathcal{A}_t$ of any adversaries:
\begin{equation*}
\mathcal{C}_t^\mathcal{A}\coloneqq \{\textbf{s}_i| \textbf{p}^{\mathcal{A}}_t[i]>0, \forall \textbf{s}_i\in \mathcal{S} \}
\end{equation*}

\begin{definition}[$\{\epsilon,\mathcal{C}_t^\mathcal{A}\}$-$Constrained$AP]
\label{def-constrainedAP}
For adversaries with knowledge $\mathcal{C}_t^\mathcal{A}$, a mechanism is $\epsilon$-adversarially private if for any output $\textbf{z}_t$ and any state $\textbf{s}_i\in\mathcal{C}_t^\mathcal{A}$,
$
\frac{Pr(\textbf{s}_i|\textbf{z}_t)}{Pr(\textbf{s}_i)}\leq e^\epsilon
$.
\end{definition}
\begin{theorem}[\cite{LocPriv14-arXiv}]
\label{theo-AP-DP-same}
If $\mathcal{C}_t=\mathcal{C}_t^\mathcal{A}$, G, $\{\epsilon,G,\mathcal{C}_t\}$-$constrained$DP (Definition \ref{def-CDP}) is equivalent to $\{\epsilon,G, \mathcal{C}_t^\mathcal{A}\}$-$constrained$AP (Definition \ref{def-constrainedAP}).
\end{theorem}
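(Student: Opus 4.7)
The plan is to prove both implications via Bayes' rule, rewriting the posterior-to-prior ratio that appears in $constrained$AP in terms of the likelihood ratio that appears in $constrained$DP. Concretely, for any $\textbf{s}_i\in\mathcal{C}_t^{\mathcal{A}}$ and any output $\textbf{z}_t$, Bayes' rule gives
\begin{equation*}
\frac{Pr(\textbf{s}_i\mid\textbf{z}_t)}{Pr(\textbf{s}_i)} \;=\; \frac{Pr(\textbf{z}_t\mid\textbf{s}_i)}{Pr(\textbf{z}_t)}, \qquad Pr(\textbf{z}_t)=\sum_{\textbf{s}_j\in\mathcal{C}_t^{\mathcal{A}}} Pr(\textbf{z}_t\mid\textbf{s}_j)\,Pr(\textbf{s}_j),
\end{equation*}
so the whole argument reduces to moving between a bound on the likelihood ratio across states in $\mathcal{C}_t$ and a bound on the above mixture.

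For the forward direction ($constrained$DP $\Rightarrow$ $constrained$AP), I would use the hypothesis $\mathcal{C}_t=\mathcal{C}_t^{\mathcal{A}}$ to apply Definition~\ref{def-CDP} to every pair $(\textbf{s}_i,\textbf{s}_j)$ appearing in the mixture: $e^{-\epsilon}\,Pr(\textbf{z}_t\mid\textbf{s}_i)\le Pr(\textbf{z}_t\mid\textbf{s}_j)\le e^{\epsilon}\,Pr(\textbf{z}_t\mid\textbf{s}_i)$. Factoring $Pr(\textbf{z}_t\mid\textbf{s}_i)$ out of the mixture and using $\sum_j Pr(\textbf{s}_j)=1$ sandwiches $Pr(\textbf{z}_t)$ between $e^{-\epsilon}\,Pr(\textbf{z}_t\mid\textbf{s}_i)$ and $e^{\epsilon}\,Pr(\textbf{z}_t\mid\textbf{s}_i)$, which immediately yields $Pr(\textbf{s}_i\mid\textbf{z}_t)/Pr(\textbf{s}_i)\le e^{\epsilon}$ and so the $constrained$AP inequality in Definition~\ref{def-constrainedAP}.

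For the reverse direction ($constrained$AP $\Rightarrow$ $constrained$DP), the key observation is that Definition~\ref{def-constrainedAP} is meant to hold \emph{uniformly} over all adversary priors whose support is $\mathcal{C}_t^{\mathcal{A}}$ (this is what an adversary "with knowledge $\mathcal{C}_t^{\mathcal{A}}$" means: the set of possible states is fixed but the specific prior is not). Given any two target states $\textbf{s}_j,\textbf{s}_k\in\mathcal{C}_t=\mathcal{C}_t^{\mathcal{A}}$, I would choose priors that put almost all mass on $\textbf{s}_k$ (and an $\eta$ fraction spread over the rest so the support is still $\mathcal{C}_t^{\mathcal{A}}$). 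Then $Pr(\textbf{z}_t)$ collapses to $Pr(\textbf{z}_t\mid\textbf{s}_k)$ as $\eta\to 0$, and applying the $constrained$AP bound at state $\textbf{s}_j$ under this prior gives $Pr(\textbf{z}_t\mid\textbf{s}_j)/Pr(\textbf{z}_t\mid\textbf{s}_k)\le e^{\epsilon}$; symmetry in $j,k$ yields the full Definition~\ref{def-CDP} inequality.

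The main obstacle is the reverse direction: Definition~\ref{def-constrainedAP} as stated gives only an upper bound on the posterior-to-prior ratio and is written as if the prior is fixed, so I must argue carefully that "adversarial knowledge $\mathcal{C}_t^{\mathcal{A}}$" ranges over all priors supported on $\mathcal{C}_t^{\mathcal{A}}$ (the interpretation consistent with \cite{LocPriv14-arXiv}), and then handle the limiting argument cleanly (either by a continuity step or by observing that the inequality must hold for the $\eta$-perturbed prior for every $\eta>0$ and passing to the limit). Once this point is made precise, both directions are short.
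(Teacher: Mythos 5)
The paper gives no proof of this theorem --- it is imported verbatim from \cite{LocPriv14-arXiv} --- so there is nothing in-paper to compare against; judged on its own, your argument is correct and is essentially the standard equivalence proof from that reference: Bayes' rule plus the mixture bound for the forward direction, and degenerate priors concentrated on $\textbf{s}_k$ for the reverse. You correctly identify the one point that actually carries the reverse direction, namely that ``adversaries with knowledge $\mathcal{C}_t^{\mathcal{A}}$'' must be read as quantifying over \emph{all} priors supported on $\mathcal{C}_t^{\mathcal{A}}$ (which is how the cited work defines adversarial privacy), and your $\eta$-perturbation handles the edge case $Pr(\textbf{z}_t\mid\textbf{s}_k)=0$ correctly, since the AP bound holding for every $\eta>0$ forces $Pr(\textbf{z}_t\mid\textbf{s}_j)=0$ as well.
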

%

We discuss various adversarial knowledge  as follows.
\begin{itemize}
\item Case I: $\mathcal{C}_t^\mathcal{A}\subset\mathcal{C}_t$.
When $|\mathcal{C}_t^\mathcal{A}|=1$, the adversary has already known the true state. Then no privacy can be protected in this case. Otherwise, $G\cap\mathcal{C}_t^\mathcal{A}$ is a subgraph of $G\cap\mathcal{C}_t$.
Then 
$\{\epsilon,G,\mathcal{C}_t^\mathcal{A}\}$-DPHMM still holds.

\item Case II: $\mathcal{C}_t\subset\mathcal{C}_t^\mathcal{A}$.
Similar to the analysis in Section \ref{sec-privacy-exposure}, Algorithm \ref{alg-framework} (using $K$-norm based mechanism) may not satisfy $\{\epsilon,G,\mathcal{C}_t^\mathcal{A}\}$-DPHMM if any node in $\mathcal{C}_t^\mathcal{A}$ has $\textsc{DoP}=1$, derived from $K_t(\mathcal{G}_t)$ in Algorithm \ref{alg-framework}.

\item $\left\{\left(\mathop{max}\limits_{\forall \textbf{s}_i,\textbf{s}_j\in\mathcal{C}_t^\mathcal{A}\cap\mathcal{C}_t}||f(\textbf{s}_i)-f(\textbf{s}_j)||_{K_t}\right)\epsilon,\mathcal{C}_t^\mathcal{A}\cap\mathcal{C}_t\right\}$-$cons$-$\\trained$AP holds in both cases. 
    Hence \\
    $\left\{\left(\mathop{max}\limits_{\forall \textbf{s}_i,\textbf{s}_j\in\mathcal{C}_t^\mathcal{A}\cap\mathcal{C}_t}||f(\textbf{s}_i)-f(\textbf{s}_j)||_{K_t}\right)\epsilon,\mathcal{C}_t^\mathcal{A}\cap\mathcal{C}_t\right\}$-$cons$-$\\trained$DP
     also holds by Theorem \ref{theo-AP-DP-same}.
\end{itemize}

\end{document}